\documentclass[journal]{IEEEtran}

\ifCLASSINFOpdf
\else
\fi

\pdfminorversion=4

\usepackage{multirow}
\usepackage{stfloats}
\usepackage{balance}
\usepackage{hyperref}

\usepackage{caption}
\captionsetup[figure]{font=small}

\usepackage[normalem]{ulem}

\hyphenation{op-tical net-works semi-conduc-tor}
\usepackage{empheq}
\usepackage{mathrsfs}
\usepackage{amsfonts}
\usepackage{mathtools}
\usepackage{mathrsfs}
\usepackage{graphicx}          
\usepackage{amsmath}

\DeclareMathOperator*{\argmin}{arg\,min}
\usepackage{enumerate,subcaption,graphicx}
\usepackage{algorithm,algpseudocode}
\algnewcommand{\algorithmicgoto}{\textbf{go to}}%
\algnewcommand{\Goto}[1]{\algorithmicgoto~\ref{#1}}%
\algnewcommand{\LineComment}[1]{\Statex \(\triangleright\) #1}
\algnewcommand{\LineCommentN}[1]{\Statex \hspace{1cm}\(\triangleright\) #1}
\usepackage{multirow}
\usepackage{stfloats}
\usepackage[dvipsnames]{xcolor}
\usepackage{amsthm}
\newtheoremstyle{mystyle}
  {}
  {}
  {\itshape}
  {}
  {\bfseries}
  {.}
  { }
  {\thmname{#1}\thmnumber{ #2}\thmnote{ (#3)}}

\theoremstyle{mystyle}
\usepackage{cite}
\usepackage{amsmath}
\usepackage{amssymb}
\interdisplaylinepenalty=2500
\newtheorem{thm}{Theorem}
\newtheorem{defn}{Definition}
\newtheorem{lem}{Lemma}

\newtheorem{cor}{Corollary}
\newtheorem{prop}{Proposition}

\newtheorem{rem}{Remark}
\newtheorem{assumption}{Assumption}
\newtheorem{problem}{Problem}
\newcommand{\sign}{{\rm sgn}}

\setlength{\abovedisplayskip}{3.5pt}
\setlength{\belowdisplayskip}{3.5pt}
\setlength{\textfloatsep}{3.5pt}
\setlength{\floatsep}{3.5pt}
\setlength{\intextsep}{3,5pt}
\setlength{\abovecaptionskip}{3.5pt}
\setlength{\belowcaptionskip}{3.5pt}


\newcommand{\R}{\mathbb{R}}

\newcommand{\id}{\operatorname{id}}
\newcommand{\graph}{\mathcal{G}}
\newcommand{\nodes}{\mathcal{V}}
\newcommand{\edges}{\mathcal{E}}

\newcommand{\diag}{\operatorname{diag}}

\newcommand{\dsiso}{\textsc{DSISO}}

\newcommand{\ol}[1]{\overline{#1}}
\newcommand{\ul}[1]{\underline{#1}}

\newcommand{\real}{\mathbb{R}}

\newcommand{\Acompx}{\mathcal{A}_x}
\newcommand{\Acompd}{\mathcal{A}_d}
\newcommand{\Bcompx}{\mathcal{B}_x}
\newcommand{\Bcompd}{\mathcal{B}_d}

\definecolor{hotmagenta}{rgb}{1.0, 0.11, 0.81}

\usepackage{setspace}
\setstretch{1}

\newcommand{\bulletsym}{\hbox{$\bullet$}}
\newcommand{\bulletend}{\relax\ifmmode\else\unskip\hfill\fi\bulletsym}

\renewcommand{\tilde}{\widetilde}

\newcommand{\mo}[1]{{\color{black} #1}}
\renewcommand{\sb}[1]{{\color{black} #1}}

\begin{document}

%
\title{Distributed Resilient Interval Observer Synthesis \\  for Nonlinear Discrete-Time Systems\vspace{-.0cm}}
%
%
%

\author{Mohammad~Khajenejad$^*$,~\IEEEmembership{Member,~IEEE,} Scott~Brown$^*$,~\IEEEmembership{Student Member,~IEEE}, and~Sonia~Martinez,~\IEEEmembership{Fellow,~IEEE}
  \thanks{$^*$Equal contribution.
    M. Khajenejad, S. Brown and S. Martinez are with the
    Department of Mechanical and Aerospace Engineering, University of
    California, San Diego, CA, USA (e-mail: mkhajenejad, sab007,
    soniamd@ucsd.edu). This work was partially supported by NSF grant 2003517,3, ONR grant N00014-19-1-2471, and ARL grant W911NF-23-2-0009.}}

\allowdisplaybreaks

\maketitle

\begin{abstract}
  This paper introduces a novel recursive distributed estimation
  algorithm aimed at synthesizing input and state interval observers
  for nonlinear bounded-error discrete-time multi-agent systems. The
  considered systems have sensors and actuators that are susceptible
  to unknown or adversarial inputs. To solve this problem, we first
  identify conditions that allow agents to obtain nonlinear
  bounded-error equations characterizing the input. Then, we
  propose a distributed interval-valued observer that is guaranteed to
  contain the disturbance and system states. \sb{\mo{To do this}, we
    first detail a gain design procedure that uses global problem data
    to minimize an upper bound on the $\ell_1$ norm of the observer
    error. We then \mo{propose} a gain design approach that does not
    require global information, using only values that are local to
    each agent. The second method improves on the computational
    tractability of the first, at the expense of some added
    conservatism. Further, we discuss some possible ways of extending
    the results to a broader class of systems. We conclude by
    demonstrating our observer on two examples. The first is a
    unicycle system, for which we apply the first gain design
    method. The second is a 145-bus power system, which showcases the
    benefits of the second method, due to the first approach being
    intractable \mo{for systems with high dimensional state spaces}.}

\end{abstract}

\IEEEpeerreviewmaketitle

\vspace{-0.5cm}
\section{Introduction}
\vspace{-0.05cm} \IEEEPARstart{T}{he} successful operation of
cyber-physical systems (CPS) relies on the seamless integration of
various computational, communication, and sensor components that
interact with each other and with the physical world in a complex
manner. CPS finds applications in diverse domains such as industrial
infrastructures~\cite{BC-JZ-GPH-SK-AWC:18}, power grids
\cite{JZ-AGE-MN-LM-AA-VT-IK-BP-AKS-JQ-ZH-APM:19}, \mo{autonomous
  vehicles} and intelligent
transportation systems \cite{YS-HS:17}. 
  In safety-critical systems like these, the
occurrence of unknown inputs, e.g., unstructured uncertainties,
unprecedented scenarios, and even malfunction or deliberate attacks by
malicious entities \cite{KZ:16} can lead to severe
consequences. In several forms of such occurrences, deceptive signals
are introduced into the actuator signals and sensor measurements by
strategic and/or malicious agents. These unknown inputs cannot be
accurately modeled as zero-mean Gaussian white noise or signals with
known bounds due to their strategic nature. Meanwhile, most
centralized algorithms for state estimation are computationally
intensive, particularly in realistic high-dimensional CPS
scenarios. Hence, the development of reliable distributed algorithms
for state and unknown input estimation becomes imperative to ensure
resilient control, unknown input reconstruction, and effective
mitigation strategies.

\emph{Literature review}.  Driven by the aforementioned
considerations, various estimation algorithms have been put forward,
aiming to address the challenge of jointly estimating the system state
and the unknown disturbance (input) through a central entity. For
instance, in \cite{GC-YZ-SG-WH:21}, the focus was on dealing with
unknown inputs/disturbances on actuators and sensors, while tackling
the secure state estimation and control problem, where the authors
proposed a $\chi^2$ detector to identify these malicious inputs. The
work in \cite{KD-XR-ASL-DEQ-LS:20} centered around remote state
estimation and the challenge of dealing with an active eavesdropper,
where to evaluate the stealthiness of the eavesdropper, the authors
presented a generalized framework and a criterion based on the packet
reception rate at the estimator. In \cite{CW-ZH-JL-LW:18}, a
sliding-mode observer was introduced to perform dual tasks: estimating
system states and identifying unknown inputs, simultaneously. On the
other hand, \mo{the research in \cite{EM-FY-QLH-LV:18,LL-LM-JG-JZ-YB:21}} proposed an
estimation approach based on projected sliding-mode observer to
reconstruct system states.

Additionally, \mo{the work in \cite{MLC-AC:17,LL-WW-QM-KP-XL-LL-JL:21}} focused on
reconstructing input signals from the equivalent output injection
signal using a sliding-mode observer. In contrast, \mo{the procedure in \cite{AYL-GHY:17,PW-BC-LS-LY:23}}
treated an adversarial input as an auxiliary state and employed a
robust switching Luenberger observer, considering sparsity, to
estimate the state. 

In scenarios where the noise signals follow Gaussian and white
characteristics, a substantial body of research has proposed diverse
methodologies, \mo{mainly based on extended Kalman filtering
  techniques,} for accomplishing joint input \mo{(or adversarial
  attack)} and state estimation. These methodologies include minimum
variance unbiased estimation \cite{SYZ-MZ-EF:16}, modified
double-model adaptive estimation \cite{PL-EJVK-CCDV-QC:16}, robust
regularized least square approaches \cite{MA-MR:18}, and
\mo{residual-based methods \cite{VR-BJG-JR-THS:21}}.
  Nonetheless, since these algorithms assume
knowledge of uncertainty distribution, 
they are not applicable in the context of resilient bounded-error
worst-case estimation, where such information is unavailable.
 To tackle this issue, numerous techniques have been proposed for
\mo{ linear deterministic systems \cite{SL-SM-JC:23-tac}}, stochastic systems
\cite{HK-PG-MZ-PL:17}, and bounded-error systems
\cite{YN-YM:15,MP-PT-IL-GJP:15,SZY-MQF-EF:16}. Typically, these
methods yield point estimates, representing the most probable or
optimal single estimate, as opposed to set-valued estimates.

Set-valued estimates offer a valuable advantage by providing stringent
accuracy bounds, essential for ensuring safety
\cite{SZY:18,FB-MS:12,khajenejad2022resilient}.
  Additionally, employing
fixed-order set-valued methods can reduce the complexity of optimal
observers \cite{MM-AV:91,khajenejad2023resilientcdc}, which tends to grow over
time. Consequently, fixed-order centralized set-valued observers have
been introduced for various system classes
\cite{SZY:18,NE-DGD-DH:19,MK-SZY:22,TP-MK-SPD-SZY:22,MK-SZY:19a}.
These observers efficiently determine bounded sets of compatible
states and unknown inputs simultaneously. However, these algorithms
face challenges in scaling effectively within a networked setting,
particularly as the network size increases. This limitation has led to
the development of distributed input and state \sb{estimators}, which
primarily concentrate on systems with stochastic disturbances
\cite{YL-LZ-XM:13,AEA-AYK-FG:12}.
While these methods demonstrate superior scalability and robustness to
communication failures compared to their centralized counterparts,
they generally suffer from comparatively higher estimation
errors. Moreover, these methods are not applicable in bounded-error
settings where information about the stochastic characteristics of
noise or disturbance is unavailable. With this consideration, in our
previous work \cite{MK-SB-SM:22a-acc,MK-SB-SM:22b-acc}, we presented a
distributed algorithm for synthesizing interval observers for
bounded-error linear time-invariant (LTI) systems, without and with
unknown input signals, respectively. In this current study, our aim is
to extend our design presented in
\cite{MK-SB-SM:22a-acc,MK-SB-SM:22b-acc,TP-MK-SPD-SZY:22,khajenejad2022simultaneousijrnc} to address resiliency against
unknown inputs, in nonlinear bounded-error multi-agent settings.

\emph{Contribution.} This work aims to bridge the gap between
distributed resilient estimation algorithms and interval observer
design for scenarios with bounded errors and completely unknown and
distribution-free inputs for nonlinear multi-agent settings. To
achieve this:

1) We utilize a mixed-monotone decomposition of the nonlinear
dynamics, as well as \sb{a system transformation based on singular
  value decomposition (SVD)},
to rule out the effect of adversarial inputs and design resilient
observers.

2) \sb{We propose a four-step recursive distributed algorithm to
  design input and state observers of the system. The algorithm
  synthesizes interval-valued estimates for both states and unknown
  inputs. It utilizes the communication network to refine the
  individual set-valued estimates by taking the intersection of
  estimates among neighboring agents.}


3) We establish two novel tractable alternative designs for ensuring
stability of our proposed observer, which are proven to minimize an
upper bound for the interval widths of observer errors. The first
method, which requires central knowledge of all system parameters,
takes the form of a mixed-integer linear program (MILP). \mo{However,
  these MILPs are not computationally efficient in systems with high
  state dimensions or networks with many agents. This motivates
  proposing} the second and more tractable procedure \mo{that} reduces
\mo{the} large MILP into many smaller optimization problems, which may
be solved much more efficiently at the cost of some additional
conservatism. \sb{For this we utilize the concept of ``collective
  positive detectability over neighborhoods'' (CPDN). We show that the
  CPDN property holds for a broad range of nonlinear multi-agent
  systems and can be verified by solving a linear program for each
  agent.}

4) \sb{We illustrate our algorithms' performance via two simulation
  examples and a comparison with an existing distributed interval
  observer.}  In particular, we considered a low-dimensional unicycle
dynamics, for which the first proposed method successfully returns
stable and optimal gains, while the second design is unable to find
feasible gains. Further, we consider a high-dimensional power system
example. In this case, the MILP-based first method becomes
intractable, while our second design returns stabilizing gains in a
reasonable time. This demonstrates that each approach may yield good
results on a case-by-case basis, with an \sb{intuitive tradeoff
  between conservatism and tractability.}

\section{Preliminaries}
\emph{{Notation}.} The symbols $\R^n$, $\R^{n \times p}$,
$\mathbb{N}$, $\mathbb{N}_n$, $\R_{\geq 0}$ and $\R_{>0}$ denote the
$n$-dimensional Euclidean space, the sets of $n$ by $p$ matrices,
natural numbers (including 0), natural numbers from 1 to $n$,
non-negative real, and positive real numbers, respectively. The
Euclidean norm of a vector $x \in \R^n$ is denoted by
$\|x\|_2\triangleq \sqrt{x^\top x}$.  For $M \in \R^{n \times p}$,
$M_{ij}$ denotes $M$'s entry in the $i$'th row and the $j$'th column,
$M^{\oplus}\triangleq \max(M,\mathbf{0}_{n,p})$,
$M^{\ominus}=M^{\oplus}-M$ and $|M|\triangleq M^{\oplus}+M^{\ominus}$,
where $\mathbf{0}_{n,p}$ is the zero matrix in $\R^{n \times p}$. The
element-wise sign of $M$ is $\sign(M) \in \R^{n \times p}$ with
$\sign(M_{ij})=1$ if $M_{ij} \geq 0$ and $\sign(M_{ij})=-1$,
otherwise. We use the notation $(M)_s$ to denote the row vector
corresponding to the $s$\textsuperscript{th} row of $M$. For vectors
in $\real^n$, the comparisons $>$ and $<$ are considered
element-wise. Finally, an interval
$\mathcal{I} \triangleq [\ul{z},\ol{z}] \subset \R^n$ is the set of
all real vectors $z \in \R^{n}$ that satisfies
$\ul{z} \le z \le \ol{z}$, with interval width
$\|\ol{z}-\ul{z}\|_{\infty} \triangleq \max_{i \in
  \{1,\cdots,{n_z}\}}{|\ol{z}_i-\ul{z}_i|}$.  Next, we introduce some definitions
and related results that will be useful throughout the paper.  First,
we review some mixed-monotonicity theory basics that will be leveraged
in our interval observer design.

\begin{defn}[Jacobian Sign-Stable {\cite[Definition 1]{khajenejadtight23}}]
  A function $f : \mathcal{Z} \subset \R^n \to \R^p$ is \emph{Jacobian
    sign-stable} (JSS) if the sign of each element of the Jacobian
  matrix does not change over the domain $\mathcal{Z}$. In other
  words, $J^\mu_{ij}(z) \geq 0$ or $J^\mu_{ij}(z) \leq 0,$
  $\forall z \in \mathcal{Z}.$ 
\end{defn}

\begin{prop}[Jacobian Sign-Stable (JSS) Decomposition
  {\cite[Proposition 2]{MK-FS-SZY:22a}}]
  \label{prop:JSS_decomp}
  If a mapping $f: \mathcal{Z} \subset \R^n \to \R^p$ has Jacobian
  matrices satisfying $J^f(x) \in [\ul{J}^f,\ol{J}^f]$,
  $\forall z \in \mathcal{Z}$, where
  $\ul{J}^f,\ol{J}^f \in \R^{p \times n}$ are known bounds, then
  the mapping $f$ can be decomposed into an additive-remainder form:
  \begin{align}\label{eq:JSS_decomp}
     f(z) = H z + \mu(z), \;\forall z \in \mathcal{Z},
  \end{align}
  where the matrix $H\in\R^{p \times n}$ satisfies
  \begin{align}\label{eq:H_decomp}
    H_{ij} = \ul{J}^f_{ij} \ \text{ or } H_{ij} = \ol{J}^f_{ij} ,\;  \forall (i,j) \in \mathbb{N}_p \times \mathbb{N}_{n},
  \end{align}
  and the function $\mu$ is Jacobian sign-stable.
  \bulletend
\end{prop}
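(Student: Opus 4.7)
The plan is to construct $\mu$ explicitly as the residual $\mu(z) := f(z) - Hz$ for any admissible choice of $H$ satisfying \eqref{eq:H_decomp}, and then verify that the Jacobian of $\mu$ is sign-stable entry-by-entry. The identity $f(z) = Hz + \mu(z)$ holds trivially by construction, so the only nontrivial content is proving the JSS property of $\mu$.

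First I would differentiate both sides of the definition of $\mu$ with respect to $z$, yielding $J^\mu(z) = J^f(z) - H$ for all $z \in \mathcal{Z}$. Since $J^f(z) \in [\ul{J}^f, \ol{J}^f]$ entry-wise on $\mathcal{Z}$, each entry satisfies $J^\mu_{ij}(z) = J^f_{ij}(z) - H_{ij} \in [\ul{J}^f_{ij} - H_{ij},\, \ol{J}^f_{ij} - H_{ij}]$ for every $z \in \mathcal{Z}$.

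Next, I would examine the two admissible choices for $H_{ij}$ given in \eqref{eq:H_decomp} separately. If $H_{ij} = \ul{J}^f_{ij}$, then $J^\mu_{ij}(z) \in [0,\, \ol{J}^f_{ij} - \ul{J}^f_{ij}] \subseteq \R_{\geq 0}$, so this entry of the Jacobian is non-negative throughout $\mathcal{Z}$. Symmetrically, if $H_{ij} = \ol{J}^f_{ij}$, then $J^\mu_{ij}(z) \in [\ul{J}^f_{ij} - \ol{J}^f_{ij},\, 0] \subseteq \R_{\leq 0}$, so the entry is non-positive throughout $\mathcal{Z}$. In either case, the sign of $J^\mu_{ij}(z)$ does not change over $\mathcal{Z}$, and since the choice of $H_{ij}$ can be made independently for each pair $(i,j) \in \mathbb{N}_p \times \mathbb{N}_n$, it follows that every entry of $J^\mu$ is sign-stable, i.e., $\mu$ is JSS in the sense of Definition 1.

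There is no real obstacle here beyond noting that the decomposition is constructive and the entry-wise argument decouples across $(i,j)$, so that any selection matrix $H$ satisfying \eqref{eq:H_decomp} works (the proposition in fact asserts existence, which is established by exhibiting at least one such $H$). I would close by remarking that $H$ is non-unique whenever $\ul{J}^f_{ij} < \ol{J}^f_{ij}$ for some $(i,j)$, which leaves design freedom to be exploited later in the observer synthesis.
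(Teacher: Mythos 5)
Your proof is correct, and it is essentially the standard argument: the paper itself states this result without proof (citing Proposition 2 of the referenced work), and the argument there is exactly your construction $\mu(z) = f(z) - Hz$ followed by the entry-wise observation that $J^\mu_{ij} = J^f_{ij} - H_{ij}$ is confined to $[0,\ol{J}^f_{ij}-\ul{J}^f_{ij}]$ or $[\ul{J}^f_{ij}-\ol{J}^f_{ij},0]$ depending on the choice in \eqref{eq:H_decomp}. Your closing remark about the non-uniqueness of $H$ being design freedom is also consistent with how the paper later uses the decomposition.
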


\begin{defn}[Mixed-Monotone Decomposition Functions]
  \cite[Definition 4]{LY-OM-NO:19}
  \label{defn:dec_func}
  Consider a function $g : \mathcal{X} \subset \R^{n} \to \R^{n}$.  A
  function $g_d: {\mathcal{X} \times \mathcal{X}} \to \R^{n}$ is a
  mixed-monotone decomposition function for $g$ if it satisfies \mo{the following conditions:}
  \begin{enumerate}
  \item $g_d({x,x})=g({x})$,
  \item $g_d$ is monotonically increasing in its first argument,
  \item $g_d$ is monotonically decreasing in its second argument.
  \end{enumerate}
\end{defn}

\begin{prop}[Tight and Tractable Decomposition Functions for JSS Mappings]
  \cite[Proposition 4 \& Lemma 3]{MK-FS-SZY:22a}
  \label{prop:tight_decomp}
  Suppose $\mu: \mathcal{Z} \subset \R^{n} \to \R^p$ is a JSS
  mapping. Then, for each $\mu_i$, $i \in \mathbb{N}_p$, a
  mixed-monotone decomposition function is given by
  \begin{align}\label{eq:JJ_decomp}
    \mu_{d,i}(z_1,z_2) \triangleq \mu_i(D^i z_1 + (I_{n} - D^i) z_2),
\end{align}
for any $z_1, z_2 \in \mathcal{Z}$ which satisfy either $z_1 \ge z_2$
or $z_1 \le z_2$,
\begin{align}\label{eq:Dj}
  D^i = \mathrm{diag}(\max(\mathrm{sgn}(\ol{J}^{\mu}_i),\mathbf{0}_{1,{n_z}})).
\end{align}
Moreover, assume that $\mu$ is the additive remainder in a JSS
decomposition of a function $f$ as in Proposition
\ref{prop:JSS_decomp}. Then, for any interval domain
$\ul{z} \leq z \leq \ol{z}$ of $f$, with
$z,\ul{z},\ol{z} \in \mathcal{Z}$ and
$\varepsilon \triangleq \ol{z}-\ul{z}$, the following inequality
holds:
\begin{align}
  \delta^{\mu}_d \leq \ol{F}_{\mu}\varepsilon, \ \text{where} \
  \ol{F}_{\mu}\triangleq \ol{J}^\oplus_f + \ul{J}_f^\ominus.
\end{align}
where $\delta^\mu_d \triangleq \|\mu_d(\ul{z},\ol{z}) - \mu_d(\ol{z}, \ul{z})\|_\infty$ \bulletend 
\end{prop}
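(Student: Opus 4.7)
My plan is to split the proof into two parts, matching the two assertions in the statement, and to handle each by translating it into an element-wise computation on the Jacobian of $\mu$.

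For the first part, I would verify the three requirements of Definition~\ref{defn:dec_func} for $\mu_{d,i}(z_1,z_2) = \mu_i(D^i z_1 + (I_n - D^i) z_2)$. Property (1) is immediate by substitution, since $D^i z + (I_n - D^i) z = z$ for any $z \in \mathcal{Z}$. For properties (2) and (3), I would compute the partial derivatives: $\partial \mu_{d,i}/\partial z_1 = J^\mu_i(\cdot)\, D^i$ and $\partial \mu_{d,i}/\partial z_2 = J^\mu_i(\cdot)\, (I_n - D^i)$. Because $D^i$ is diagonal with entries in $\{0,1\}$, it suffices to examine the sign of $J^\mu_{ij}(\cdot)$ in each of the two cases. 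When $D^i_{jj} = 1$, we have $\mathrm{sgn}(\ol{J}^\mu_{ij}) = 1$, hence $\ol{J}^\mu_{ij} \geq 0$; together with the JSS property (which forces both $\ul{J}^\mu_{ij}$ and $\ol{J}^\mu_{ij}$ to share a sign), this gives $J^\mu_{ij}(\cdot) \geq 0$ on $\mathcal{Z}$. When $D^i_{jj} = 0$, $\ol{J}^\mu_{ij} \leq 0$ and again by JSS $J^\mu_{ij}(\cdot) \leq 0$; the factor $(I_n - D^i)_{jj} = 1$ then yields a nonpositive derivative with respect to $z_2$. This establishes monotonicity in the required sense along any admissible ordering $z_1 \geq z_2$ or $z_1 \leq z_2$.

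For the second part, I would apply the mean-value theorem component-wise. A direct calculation gives
\begin{equation*}
D^i \ul{z} + (I_n - D^i) \ol{z} - \bigl(D^i \ol{z} + (I_n - D^i) \ul{z}\bigr) = (I_n - 2D^i)\,\varepsilon,
\end{equation*}
so for each $i \in \mathbb{N}_p$ there exists $\xi_i \in \mathcal{Z}$ with
\begin{equation*}
\mu_{d,i}(\ul{z}, \ol{z}) - \mu_{d,i}(\ol{z}, \ul{z}) = J^\mu_i(\xi_i)\,(I_n - 2D^i)\,\varepsilon.
\end{equation*}
Taking absolute values element-wise and using $|1 - 2D^i_{jj}| = 1$ together with $\varepsilon \geq 0$, this reduces to the bound $\bigl|\mu_{d,i}(\ul{z},\ol{z}) - \mu_{d,i}(\ol{z},\ul{z})\bigr| \leq \sum_{j} |J^\mu_{ij}(\xi_i)|\,\varepsilon_j$.

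The main obstacle is then to show $|J^\mu_{ij}(\cdot)| \leq (\ol{F}_\mu)_{ij} = (\ol{J}^f_{ij})^\oplus + (\ul{J}^f_{ij})^\ominus$. This I would obtain by recalling from Proposition~\ref{prop:JSS_decomp} that $H_{ij} \in \{\ul{J}^f_{ij}, \ol{J}^f_{ij}\}$ is chosen precisely so that $J^\mu_{ij} = J^f_{ij} - H_{ij}$ is JSS, which in every case gives $|J^\mu_{ij}(\cdot)| \leq \ol{J}^f_{ij} - \ul{J}^f_{ij}$. A short case analysis on the signs of $\ol{J}^f_{ij}$ and $\ul{J}^f_{ij}$ then shows $\ol{J}^f_{ij} - \ul{J}^f_{ij} \leq (\ol{J}^f_{ij})^\oplus + (\ul{J}^f_{ij})^\ominus$: indeed, when both endpoints are nonnegative the right-hand side equals $\ol{J}^f_{ij}$ which dominates $\ol{J}^f_{ij} - \ul{J}^f_{ij}$; when both are nonpositive it equals $-\ul{J}^f_{ij}$ which dominates $-\ul{J}^f_{ij} + \ol{J}^f_{ij}$; and when $\ul{J}^f_{ij} \leq 0 \leq \ol{J}^f_{ij}$ the two sides coincide. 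Aggregating over $j$ yields $|\mu_d(\ul{z},\ol{z}) - \mu_d(\ol{z},\ul{z})| \leq \ol{F}_\mu\,\varepsilon$ element-wise, from which the stated bound on $\delta^\mu_d$ follows by taking the $\infty$-norm.
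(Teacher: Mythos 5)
This proposition is imported verbatim from the cited reference \cite{MK-FS-SZY:22a}; the paper itself offers no proof to compare against, so your argument has to stand on its own — and it does. Both halves are handled the standard way: the three properties of Definition~\ref{defn:dec_func} follow from the diagonal $0$--$1$ structure of $D^i$ together with sign stability of $J^\mu$, and the error bound follows from the mean value theorem applied to $\mu_i(a)-\mu_i(b)$ with $a-b=(I_n-2D^i)\varepsilon$, combined with the element-wise estimate $|J^\mu_{ij}|\le \ol{J}^f_{ij}-\ul{J}^f_{ij}\le (\ol{J}^f_{ij})^\oplus+(\ul{J}^f_{ij})^\ominus$, whose case analysis you carry out correctly. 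Two spots are worth tightening. First, the inference ``$\mathrm{sgn}(\ol{J}^{\mu}_{ij})=1$ hence $J^\mu_{ij}\ge 0$ on all of $\mathcal{Z}$'' does not follow merely from an upper bound being nonnegative (an everywhere-nonpositive entry can have upper bound $0$); the clean justification goes through the construction of $H$ in Proposition~\ref{prop:JSS_decomp}: if $H_{ij}=\ul{J}^f_{ij}$ then $J^\mu_{ij}=J^f_{ij}-\ul{J}^f_{ij}\ge 0$ everywhere, and if $H_{ij}=\ol{J}^f_{ij}$ then $J^\mu_{ij}\le 0$ everywhere, so the sign encoded in $D^i$ genuinely matches the stable sign of the corresponding Jacobian entry rather than being read off a bound. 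Second, when invoking the mean value theorem you should note that the evaluation points $D^i z_1+(I_n-D^i)z_2$ and the whole segment joining $a$ and $b$ lie in the box spanned by $z_1$ and $z_2$ (respectively by $\ul{z}$ and $\ol{z}$), which is precisely what the hypothesis $z_1\ge z_2$ or $z_1\le z_2$ guarantees. Neither point is a gap in substance; your proof is a faithful reconstruction of the argument the paper delegates to its reference.
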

Consequently, by applying Proposition \ref{prop:tight_decomp} to the
Jacobian sign-stable decomposition obtained using Proposition
\ref{prop:JSS_decomp}, a tight and tractable decomposition function
can be obtained. Further details can be found in
\cite{MK-FS-SZY:22a}. 

Finally, we recap a very well-known result in the literature, that
will be frequently used throughout the paper.
\begin{prop}\cite[Lemma 1]{DE-TR-SC-AZ:13}\label{prop:bounding}
  Let $A \in \real^{p \times n}$ and
  $\ul{x} \leq x \leq \ol{x} \in \real^n$. Then,
  $A^+\ul{x}-A^{-}\ol{x} \leq Ax \leq
  A^+\ol{x}-A^{-}\ul{x}$. As a corollary, if $A$ is
  non-negative, $A\ul{x} \leq Ax \leq A\ol{x}$. \bulletend
\end{prop}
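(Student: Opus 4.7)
The plan is to leverage the canonical decomposition $A = A^{\oplus} - A^{\ominus}$ where both $A^{\oplus}$ and $A^{\ominus}$ are entry-wise non-negative (by the paper's definition $A^{\oplus} = \max(A,\mathbf{0})$ and $A^{\ominus} = A^{\oplus} - A$). I interpret $A^{+}$ and $A^{-}$ in the statement as synonyms for these operators. Once this decomposition is written down, the two-sided inequality reduces to bounding $A^{\oplus}x$ from above and below, and similarly for $A^{\ominus}x$, using only the fact that a non-negative matrix preserves componentwise inequalities.

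First, I would prove the corollary (the non-negative case) as a self-contained lemma, since it is the workhorse. If $B \in \real^{p\times n}$ has all entries $B_{ij} \geq 0$ and $y \leq z$ componentwise, then for every row index $s$ one has $(By)_s = \sum_j B_{sj} y_j \leq \sum_j B_{sj} z_j = (Bz)_s$, because each term $B_{sj}(z_j - y_j)$ is a product of two non-negative scalars. Hence $By \leq Bz$. Specializing to $B = A$ with $A \geq 0$ and $y = \ul{x}$, $z = \ol{x}$ (with $x$ squeezed between them) yields $A\ul{x} \leq Ax \leq A\ol{x}$.

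Next, for the general case, I would apply this componentwise monotonicity twice. Since $A^{\oplus} \geq 0$ entrywise and $\ul{x} \leq x \leq \ol{x}$, the lemma gives
\begin{equation*}
    A^{\oplus} \ul{x} \leq A^{\oplus} x \leq A^{\oplus} \ol{x}.
\end{equation*}
Similarly, since $A^{\ominus} \geq 0$ entrywise,
\begin{equation*}
    A^{\ominus} \ul{x} \leq A^{\ominus} x \leq A^{\ominus} \ol{x},
\end{equation*}
and negating flips the inequalities to
\begin{equation*}
    -A^{\ominus} \ol{x} \leq -A^{\ominus} x \leq -A^{\ominus} \ul{x}.
\end{equation*}
Adding the two chains termwise and using $A = A^{\oplus} - A^{\ominus}$ yields the desired bound $A^{\oplus}\ul{x} - A^{\ominus}\ol{x} \leq Ax \leq A^{\oplus}\ol{x} - A^{\ominus}\ul{x}$.

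There is no real obstacle here: the only subtle point is being careful about the sign flip when multiplying by $-A^{\ominus}$ (which comes from separating $A$ into its non-negative and non-positive parts, rather than blindly multiplying by a matrix with mixed signs). The corollary then falls out by noting that $A \geq 0$ forces $A^{\ominus} = \mathbf{0}$ and $A^{\oplus} = A$, collapsing the two-sided inequality to $A\ul{x} \leq Ax \leq A\ol{x}$.
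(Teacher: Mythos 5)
Your proof is correct, and it is the standard argument for this classical bounding result: decompose $A = A^{\oplus} - A^{\ominus}$ into its non-negative parts, use that non-negative matrices preserve componentwise order, and add the two resulting chains of inequalities. The paper itself cites this proposition from the literature and gives no proof, so there is nothing to compare against beyond noting that your reasoning is complete and handles the sign flip on the $-A^{\ominus}$ term correctly.
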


\section{Problem Formulation}
Consider a multi-agent system (MAS) consisting of $N$ agents, which
interact over a time-invariant communication network represented as a
graph $\graph = (\nodes, \edges)$. The agents are able to obtain individual
measurements of a target system as described by the following
nonlinear dynamics:
\begin{align}\label{eq:ind_system}
  \begin{split}
    x_{k+1} &= f(x_k,w_k) + G d_k,\\
    y^i_k &= C^i x_k + D^i v^i_k + H^i d_k, \ i \in \mathcal{V}, \ k \in \mathbb{Z}_{\geq 0},
  \end{split}
\end{align}
with state $x_k \in \mathcal{X} \subset \R^n$, outputs
$y^i_k \in \R^{l_i}$, unknown input $d_k \in \R^p$, and bounded
disturbances $w_k \in [\ul{w},\ol{w}] \subset \R^{n_w}$ and
$v^i_k \in [\ul{v}^i,\ol{v}^i] \subset \R^{n^v_i}$. We assume the
function $f$ and matrices $G$, $C^i$, $D^i$, and $H^i$ are known and
have compatible dimensions. Unless otherwise noted, a superscript $i$
means an object is associated with node $i$.

\noindent \textbf{\emph{Unknown Input Signal Assumptions.}}  The
unknown inputs $d_k$ are not constrained to follow any model nor to be
a signal of any type (random or strategic).  We also do not assume
that $d_k$ is bounded. In other words no prior useful knowledge of the
nature of $d_k$ is available. Therefore $d_k$ is suitable for
representing scenarios including adversarial attack signals, a unknown
entity operating a target vehicle, and more.

Moreover, we assume the following, which is satisfied for a broad
range of nonlinear functions \cite{LY-NO:19}:

\begin{assumption}\label{assumption:mix-lip}
  The vector field $f$ has a bounded Jacobian over the domain
  $\mathcal{X} \times \mathcal{W}$, i.e., for all
  $(x, w) \in \mathcal{X} \times \mathcal{W}$,
  \begin{gather*}
    J^f_x(x, w) \in [\ul{J}^f_x,\ol{J}^f_x] \text{ and }
    J^f_w(x, w) \in [\ul{J}^f_w,\ol{J}^f_w].
  \end{gather*}
  The Jacobian bounds $\ul{J}^f_x$, $\ol{J}^f_x$, $\ul{J}^f_w$, and
  $\ol{J}^f_w$ are known. \bulletend
\end{assumption}

The MAS's goal is to estimate the trajectories of the plant in
\eqref{eq:ind_system} in a distributed manner, see
Problem~\ref{prob:SISIO}. 
The formal statement of the problem relies on the notions of
\textit{framers}, \textit{correctness} and \textit{stability}, which
are defined next.

\begin{defn}[Correct Individual Framers]
  For an agent $i\in\nodes$ the sequences $\{\ol{x}^i_k\}_{k \ge 0}$
  and $\{\ul{x}^i_k\}_{k \ge 0} \in \R^{n}$ are called \textit{upper}
  and \textit{lower} individual state framers for \eqref{eq:ind_system} if
  \begin{align*}
    \ul{x}^i_k \leq x_k \leq \ol{x}^i_k, \quad \forall k \ge 0.
  \end{align*}
  Similarly, $\{\ol{d}^i_k\}_{k \ge 0}$ and
  $\{\ul{d}^i_k\}_{k \ge 0} \in \R^{n}$ are input
  framers for \eqref{eq:ind_system}, if
  \begin{align*}
    \ul{d}^i_k \leq d_k \leq \ol{d}^i_k, \quad \forall k \ge 0.
  \end{align*}
  Also, we define
  \begin{align}
    \label{eq:error_1}
    {e}^i_{x,k} \triangleq \ol{x}^i_k-\ul{x}^i_k, \quad
    {e}^i_{d,k} \triangleq \ol{d}^i_k-\ul{d}^i_k, \quad \forall k \ge 0,
  \end{align}
  the individual state and input framer errors, respectively. \bulletend
\end{defn}
\begin{defn}[Distributed Resilient Interval Framer]
  \label{defn:dist_obs}
  For an MAS with target System \eqref{eq:ind_system} and
  communication graph $\graph$, a distributed resilient interval
  framer is a distributed algorithm over $\graph$ that allows each
  agent in a MAS to cooperatively compute individual correct upper and
  lower state and input framers, for any arbitrary realization of the
  unknown input (attack) sequence. \bulletend
\end{defn}

\begin{defn}[Collective Framer Error]
  \label{defn:framer-error}
  For a distributed interval framer, the collective framer state and
  input errors 
  are the vectors
  \begin{align}\label{eq:error_2}
    \begin{array}{rl}
      e_{x,k} &\triangleq
                \begin{bmatrix}
                  ({e}^1_{x,k})^\top & \cdots & ({e}^N_{x,k})^\top
                \end{bmatrix}^\top \in \R^{Nn},\\
      e_{d,k} &\triangleq
                \begin{bmatrix}
                  ({e}^1_{d,k})^\top & \cdots & ({e}^N_{d,k})^\top
                \end{bmatrix}^\top \in \R^{Np}.
    \end{array}
  \end{align}
  of all individual lower and upper state and input framer errors,
  respectively. \bulletend
\end{defn}

\begin{defn}[Collective Input-to-Sate Stable (C-ISS) Distributed Resilient Interval Observer]
  \label{defn:stability}
  A distributed resilient interval framer is
  collectively input-to-state stable (C-ISS), if the collective state
  framer error (cf.~Definition~\ref{defn:framer-error}) satisfies:
  \begin{align*}
    \|e_{x,k}\|_2 \leq
    \beta(\|e_{x,0}\|_2,k)+\rho\big(\max_{0\le l \le k}|\Delta_l|\big), \quad \forall k \in
    \mathbb{Z}_{\geq 0},
  \end{align*}
  where
  $\Delta_l \triangleq [ w_l^\top \ v_l^{1\top} \cdots
  v_l^{N\top}]^\top\in \R^{n_w + Nn_v}$, $\beta$ and $\rho$ are
  functions of classes $\mathcal{KL}$ and
  $\mathcal{K}_{\infty}$~\cite{HKK:02}
  respectively.
  In this case, the framer is referred to as a \textbf{C-ISS
    distributed resilient interval observer}.
  \bulletend
\end{defn}

The resilient observer design problem is stated next:


\begin{problem}\label{prob:SISIO}
  Given an MAS and the uncertain nonlinear system
  in~\eqref{eq:ind_system}, design a 
  distributed resilient interval observer. \bulletend
\end{problem}

\section{Distributed Interval Framer Design}

\label{sec:observer}
In this section, we describe the structure of our proposed distributed
resilient interval framer, as well as its correctness. This lays the
groundwork for the computation of stabilizing observer gains, which is
discussed in the following section.

Our strategy for synthesizing a distributed resilient interval framer
in the presence of unknown inputs consists of a preliminary step and a
recursive observer design.  First, in Section~\ref{sec:transform},
each agent obtains an equivalent representation of the system which
uses output feedback to remove the attack signal from the system
dynamics. After this transformation, each agent performs the four
steps described in Section~\ref{sec:framer} to compute state and input
framers at every time step.
\subsection{Preliminary System Transformation}
\label{sec:transform}
First, we briefly introduce a system transformation similar to that
used in \cite{SZY:18,khajenejad2020simultaneousfullrank,khajenejad2021simultaneousdeficient}, which will enable computation of
state framers despite the presence of the unknown input. The following
paragraphs describe the transformation that is performed for every
agent $i \in \nodes$.

Let $r^i\triangleq {\rm rank} (H^i)$. By applying a singular value
decomposition, we have
\begin{gather*}
  H^i= \begin{bmatrix}U^i_1& U^i_2 \end{bmatrix}
  \begin{bmatrix} \Xi^i & 0 \\ 0 & 0 \end{bmatrix}
  \begin{bmatrix} V_1^{i, \top}, \\
    V_2^{i, \top} \end{bmatrix}
\end{gather*}
with $V^i_1 \in \R^{p \times r^i}$,
$V^i_2 \in \R^{p \times (p-r^i)}$,
$\Xi^i \in \R^{r^i \times r^i}$ (a diagonal matrix of
full rank),
$U^i_1 \in \R^{l^i \times r^i}$ and
$U^i_2 \in \R^{l^i \times (l^i-r^i)}$.  Then, since
$V^i\triangleq \begin{bmatrix} V^i_1 & V^i_2 \end{bmatrix}$ is
unitary,
\begin{align}
  \label{eq:d12}
  d_k =V^i_1 d^i_{1,k}+V^i_2 d^i_{2,k}, \ d^i_{1,k}=V_1^{i,\top} d_k, \ d^i_{2,k}=V_2^{i,\top} d_k.
\end{align}
By means of these, the output equation can be decoupled, and the agent
can obtain an equivalent representation of the target state equation
and its own measurement equation.
\begin{subequations}
  \label{eq:stateq}
  \begin{align}
    x_{k+1}&=f(x_k,w_k)+G^i_1
             d^i_{1,k}+G^i_2d^i_{2,k},\\
    z^i_{1,k}&= C^i_1x_k +  D^i_1v^i_{1,k}+\Xi^i d^i_{1,k}, \label{eq:z1k}\\
    z^i_{2,k}&= {C^i_2x_k + D^i_2v^i_{2,k}},  \label{eq:z2k}\\
    d_k &=V^i_1 d^i_{1,k}+V^i_2 d^i_{2,k}, \label{eq:dkk}
  \end{align}
\end{subequations}
where
\begin{gather*}
  \begin{bmatrix}
    C^i_1 \\ C^i_2
  \end{bmatrix} = (U^i)^\top C, \quad
  \begin{bmatrix}
    G^i_1 \\ G^i_2
  \end{bmatrix} = (U^i)^\top G, \ \text{and} \
  \begin{bmatrix}
    D^i_1 \\ D^i_2
  \end{bmatrix} = (U^i)^\top D.
\end{gather*}

Finally, we make an assumption that ensures that every agent is able
to obtain bounded estimates of the unknown input. We refer the reader
to~\cite{SZY:18} for a discussion of the necessity of this assumption
in obtaining bounded estimates.
\begin{assumption}\label{ass:rank}
   $C^i_2G^i_2$ has full column rank for all $i \in \nodes$. Hence, there exists
   $M^i_2 \triangleq (C^i_2G^i_2)^\dagger$, such that
   $M^i_2C^i_2G^i_2=I$.
\end{assumption}
\begin{rem}
  It is not strictly necessary that Assumption~\ref{ass:rank} is
  satisfied for all $i$. By utilizing another SVD, it is possible that
  nodes can obtain an estimate of a partial component of $d^i_{2,k}$,
  relying on neighbors to estimate the other components.  These
  details, though straightforward in practice, complicate the
  exposition significantly, so we proceed with
  Assumption~\ref{ass:rank} for the sake of simplicity.
\end{rem}
\subsection{Interval Framer Design}
\label{sec:framer}
Having performed the system transformation in the previous section, we
can now describe the design of the interval framer, which is a
four-step recursive process. Inspired by our previous work on
synthesizing interval observers for nonlinear systems
\cite{MK-FS-SZY:22a,MK-SZY:22,khajenejad2021intervalmodel},
each agent designs local interval framers for the equivalent system
representation, which returns local state framers
(Step~\hyperlink{step1}{i}).
Next, agents share their local interval state estimates with their
neighboring agents and update their estimates by taking the best
estimates via intersection (Step~\hyperlink{step2}{ii)}). Then, each
agents compute their local input framers as functions of the updated
state framers (Step~\hyperlink{step3}{iii)}). Finally, agents update
their local interval input estimates via intersection
(Step~\hyperlink{step4}{iv)}).

The following lemma formalizes the preliminary step.
\begin{lem}[Equivalent System Representation]
  \label{lem:equiv}
  Suppose Assumptions \ref{assumption:mix-lip} and \ref{ass:rank}
  hold. Then, System \eqref{eq:ind_system}, and equivalently the MAS
  in \eqref{eq:stateq},  admits the following representation
  \begin{align}
    \label{eq:gamma_dyn_2}
    \begin{split}
    x_{k+1} &= (T^iA^i-L^iC^i_2)x_k
              + T^i\rho^i(x_k,w_k) \\
            &\quad + \Psi^i\eta^i_{k+1}
              + {\zeta}^i_{k+1},
    \end{split} \\
    \label{eq:di1}
    d^i_{1,k} &= M^i_1(z^i_{1,k} -
                C^i_1x_k -
                D^i_1v^i_k),\\
    \label{eq:di2}
    \begin{split}
    d^i_{2,k} &= M^i_2 C^i_2(G^i_1M^i_1C^i_1x_k -
                f(x_k,w_k)) \\
            &\quad + M^i_2 C^i_2 G^i_1 M^i_1 D^i_1 v^i_k
              - M^i_2 D^i_2 v^i_{k+1} \\
            &\quad - M^i_2 C^i_2 G^i_1 M^i_1 z^i_{1,k}
              + M^i_2 z^i_{2,k+1}.
    \end{split}
  \end{align}
  Here, $T^i,\Gamma^i$ and $L^i$ are (free-to-choose) matrices of
  appropriate dimensions, which are constrained by
  \begin{align}
    \label{eq:gain_con}
    T^i = I - \Gamma^i C^i_2.
  \end{align}
  See Appendix~\ref{sec:lem1_mat} for an explicit expression of
  $M^i_1$, $M^i_2$, $\Phi^i$, $\Psi^i$, $\eta^i_{k+1}$, and
  $\zeta^i_{k+1}$. The matrix $A^i$ and the JSS mapping $\rho^i$ are
  obtained by applying Proposition~\ref{prop:JSS_decomp} to the vector
  fields $f^i$:
  \begin{align}
    {f}^i(x,w) &\triangleq f(x,w)-\Phi^i C^i_1x.
  \end{align}
\end{lem}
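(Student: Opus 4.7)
The plan is to derive \eqref{eq:gamma_dyn_2}--\eqref{eq:di2} constructively: I successively eliminate $d^i_{1,k}$ and $d^i_{2,k}$ from the state equation of \eqref{eq:stateq} using the two decoupled output equations, insert the free gains $\Gamma^i$ and $L^i$ via an output-injection ``add zero'' trick, and finally apply Proposition~\ref{prop:JSS_decomp} to the resulting nonlinear vector field. Assumption~\ref{ass:rank} (which gives $M^i_2 C^i_2 G^i_2 = I$) and the full-rank property of $\Xi^i$ do the heavy lifting for the two input eliminations.

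First I would invert \eqref{eq:z1k}: since $\Xi^i$ is diagonal and full-rank, setting $M^i_1 \triangleq (\Xi^i)^{-1}$ immediately yields \eqref{eq:di1}. Next, advancing \eqref{eq:z2k} one time step and rearranging the state equation, I can write $G^i_2 d^i_{2,k} = x_{k+1} - f(x_k,w_k) - G^i_1 d^i_{1,k}$. Premultiplying by $M^i_2 C^i_2$, invoking Assumption~\ref{ass:rank}, and replacing $C^i_2 x_{k+1}$ by $z^i_{2,k+1} - D^i_2 v^i_{2,k+1}$ produces an explicit formula for $d^i_{2,k}$ in terms of $x_k$, $f(x_k,w_k)$, and the available measurements; substituting the expression for $d^i_{1,k}$ from \eqref{eq:di1} then reproduces \eqref{eq:di2} exactly.

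For the state dynamics, I would substitute the formula for $G^i_2 d^i_{2,k}$ back into $x_{k+1} = f(x_k,w_k) + G^i_1 d^i_{1,k} + G^i_2 d^i_{2,k}$. This causes the projector $I - G^i_2 M^i_2 C^i_2$ to premultiply both $f(x_k,w_k)$ and $G^i_1 d^i_{1,k}$; the canonical choice $\Gamma^i = G^i_2 M^i_2$ then realizes the constraint $T^i = I - \Gamma^i C^i_2$ in \eqref{eq:gain_con}. I would then add the identically-zero term $L^i\bigl(z^i_{2,k} - C^i_2 x_k - D^i_2 v^i_{2,k}\bigr)$ to inject the free gain $L^i$, which contributes $-L^i C^i_2 x_k$ to the state-dependent coefficient while routing $L^i z^i_{2,k}$ and $-L^i D^i_2 v^i_{2,k}$ into the measurement/noise aggregate. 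Finally, I would define $f^i(x,w) \triangleq f(x,w) - \Phi^i C^i_1 x$ with $\Phi^i$ chosen as in Appendix~\ref{sec:lem1_mat} to absorb the $-T^i G^i_1 M^i_1 C^i_1 x_k$ contribution created when \eqref{eq:di1} is substituted. Assumption~\ref{assumption:mix-lip} implies $f^i$ has a bounded Jacobian, so Proposition~\ref{prop:JSS_decomp} furnishes a splitting $f^i(x,w) = A^i [x^\top\ w^\top]^\top + \rho^i(x,w)$ with $\rho^i$ JSS; premultiplying by $T^i$ and combining the $T^i A^i x_k$ part with $-L^i C^i_2 x_k$ produces the coefficient $T^i A^i - L^i C^i_2$ in \eqref{eq:gamma_dyn_2}, while all remaining measurement- and noise-dependent terms collapse into $\Psi^i \eta^i_{k+1} + \zeta^i_{k+1}$.

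The individual calculations are elementary; the main obstacle is the \emph{algebraic bookkeeping}: funneling every residual measurement- and noise-dependent term consistently into the compact symbols $\Psi^i \eta^i_{k+1}$ and $\zeta^i_{k+1}$ across the three displayed equations, and verifying that the representation remains valid for \emph{any} admissible pair $(\Gamma^i, L^i)$ satisfying \eqref{eq:gain_con}, not merely the canonical $\Gamma^i = G^i_2 M^i_2$. The latter should follow from the observation that any alternative $\Gamma^i$ differs from the canonical one by a term whose contribution can be reassigned between the output-injection piece and the aggregate noise signal, preserving the form of \eqref{eq:gamma_dyn_2}.
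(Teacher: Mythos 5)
Your proposal follows essentially the same route as the paper's proof: invert $\Xi^i$ to obtain \eqref{eq:di1}, use the time-shifted measurement \eqref{eq:z2k} together with $M^i_2C^i_2G^i_2=I$ to obtain \eqref{eq:di2}, substitute both back into the state equation, inject the free gain $L^i$ by adding the zero term $L^i(z^i_{2,k}-C^i_2x_k-D^i_2v^i_k)$, and apply Proposition~\ref{prop:JSS_decomp} to $f^i$. The only presentational difference is that the paper introduces the arbitrary $\Gamma^i$ in one step, by splitting $x_{k+1}=T^ix_{k+1}+\Gamma^iC^i_2x_{k+1}$ via \eqref{eq:gain_con} and then replacing $C^i_2x_{k+1}$ with $z^i_{2,k+1}-D^i_2v^i_{k+1}$, which accomplishes directly the ``reassignment from the canonical choice $\Gamma^i=G^i_2M^i_2$'' argument you defer to the end.
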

\begin{proof}
  The proof is given in Appendix \ref{sec:lem1_proof}.
\end{proof}
Note that the observer gains $T^i,\Gamma^i$ and $L^i$ will be designed
later (cf.~Section~\ref{sec:synthesis}) to ensure stability and optimality of the proposed observer.

After deriving the equivalent system representation in~
\eqref{eq:gamma_dyn_2}--\eqref{eq:di2}, subject to~\ref{eq:gain_con}
local state and input framers can be constructed. Then, by leveraging
the network structure, the local framers will be refined by choosing
the best of framers among neighboring agents. This results in a
recursive four-step distributed framer design that can be summarized
as follows.

\noindent\textbf{\textit{ \hypertarget{step1}{Step i)} State Propagation and Measurement
    Update:}}

Applying Proposition \ref{prop:bounding} to bound the linear terms
(with respect to state and/or noise), as well as leveraging tight
decomposition functions given by $\rho^i_d$ (cf. Proposition
\ref{prop:tight_decomp}) for the nonlinear components $\rho^i$ in
\eqref{eq:gamma_dyn_2}, we obtain the following dynamical system.  By
construction, the system is guaranteed to bound the true state values
of \eqref{eq:gamma_dyn_2},
and therefore, it returns local state framers for
\eqref{eq:ind_system}:
\begin{align}
  \label{eq:framers}
  \begin{bmatrix} \ol{x}^{i,0}_k \\ \ul{x}^{i,0}_k \end{bmatrix}
  = \mathtt{\tilde{A}}^i \begin{bmatrix} \ol{x}^i_k \\ \ul{x}^i_k \end{bmatrix}
  + \mathtt{T}^i
  \begin{bmatrix} \rho^i_d(\ol{x}^i_k,\ol{w},\ul{x}^i_k,\ul{w}) \\
    \rho^i_d(\ul{x}^i_k,\ul{w},\ol{x}^i_k,\ol{w}) \end{bmatrix}
  + \mathtt{\Psi}^i\begin{bmatrix} \ol{\eta}^i \\ \ul{\eta}^i \end{bmatrix}
  + \zeta^i_{k+1},
\end{align}
with $\mathtt{\Psi}^i$, $\mathtt{T}^i$, $\mathtt{\tilde{A}}^i$,
$\ol{\eta}^i$, $\ul{\eta}^i$, and $\zeta^i_{k+1}$ given in Appendix
\ref{sec:framer_mat}.

\noindent\textbf{\textit{ \hypertarget{step2}{Step ii)} State Framer Network Update:}}

Given the previous framers, each agent~$i$
will iteratively share its local interval estimate
with its neighbors in the network, updating them by taking the
tightest interval from all neighbors via intersection:
\begin{align}\label{eq:network_update}
    \ul{x}^{i}_{k} = \max_{j\in\mathcal{N}_i}\ul{x}^{j,0}_k, \quad
    \ol{x}^{i}_{k} = \min_{j\in\mathcal{N}_i}\ol{x}^{j,0}_k,
\end{align}
A schematic of the intersection-based network update is shown in
Figure \ref{fig:intersection}.
\begin{figure}
  \begin{center}
    \includegraphics[]{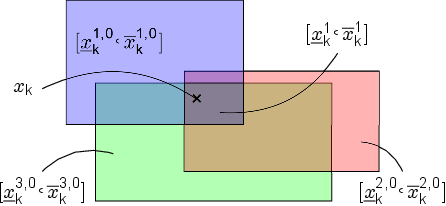}
  \end{center}
  \caption{\small{A schematic of the intersection-based network update step.}}
  \label{fig:intersection}
\end{figure}

\noindent\textbf{\textit{ \hypertarget{step3}{Step iii)} Input Estimation:}}

Next, it is straightforward to see that, plugging $d^i_{1,k}$ and
$d^i_{2,k}$ from \eqref{eq:di1} and \eqref{eq:di2} into
\eqref{eq:dkk}, returns
\begin{align}
  \label{eq:d_x}
  \begin{split}
    d_k &= h^i(x_k,w_k) + \Upsilon^i D^i_1 v^i_k
          + \Theta^i D^i_2 v^i_{k+1} + \zeta^i_{d,k+1},
  \end{split}
\end{align}
for appropriate $\zeta^i_{d,k+1}$, $\Theta^i$, and $\Upsilon^i$ given in Appendix~\ref{sec:input_mat} and
\begin{align*}
  h^i(x,w) &\triangleq \Upsilon^i C^i_1 x
             + \Theta^i C^i_2 f(x,w).
\end{align*}
By Proposition \ref{prop:JSS_decomp} there are matrices $A^i_h$,
$C_h^i$ and a vector field $\mu^i$ that result in the JSS
decomposition of $h^i(x,w) = A^i_h x + C^i_h w + \mu^i(x,w)$,
which leads to $d_k = A^i_h x_k + \mu^i(x_k,w_k) + \Lambda^i \eta^i_k$,
for appropriate variables $\eta_k^i$; see
Appendix~\ref{sec:input_mat}.




Applying Propositions \ref{prop:tight_decomp} and \ref{prop:bounding}
to \eqref{eq:d_x}, yields
\begin{align}\label{eq:input_framers}
  \begin{bmatrix} \ol{d}^{i,0}_k \\ \ul{d}^{i,0}_k \end{bmatrix}
  = \mathtt{A}^i_h
  \begin{bmatrix} \ol{x}_k \\ \ul{x}_k \end{bmatrix} +
  \begin{bmatrix}
    \mu^i_d(\ol{x}_k,\ol{w},\ul{x}_k,\ul{w}) \\
    \mu^i_d(\ul{x}_k,\ul{w},\ol{x}_k,\ol{w})
  \end{bmatrix}
  + \mathtt{\Lambda}^i
  \begin{bmatrix} \ol{\eta}^i \\ \ul{\eta}^i \end{bmatrix}
  + \zeta^i_{d,k+1},
\end{align}
where $\mu^i_d$ is the tight decomposition of $\mu^i$, and
$\mathtt{A}^i_h$, $\mathtt{\Lambda}^i$ are given in
Appendix~\ref{sec:input_mat2}. Again, this expression is guaranteed to
bound the true value of $d_k$ by construction.

\noindent\textbf{\textit{ \hypertarget{step4}{Step iv)} Input Framer Network Update:}}

Finally, similar to Step ii), each agent~$i$ shares its local input
framers with its neighbors in the network, again taking the
intersection:
\begin{align}
  \label{eq:network_update_d}
  \begin{split}
    \ul{d}^{i}_{k} = \max_{j\in\mathcal{N}_i}\ul{d}^{j,0}_k, \quad
    \ol{d}^{i}_{k} = \min_{j\in\mathcal{N}_i}\ol{d}^{j,0}_k.
  \end{split}
\end{align}
\begin{prop}\label{lem:error_diameter}
  Given the neighbors' state and input interval estimates
  $\{\ul{x}^{j,0}_k, \ \ol{x}^{j,0}_k
  \}_{j\in\mathcal{N}_i}$ and
  $\{\ul{d}^{j,0}_k, \ \ol{d}^{j,0}_k
  \}_{j\in\mathcal{N}_i}$, \eqref{eq:network_update} and
  \eqref{eq:network_update_d} result in the smallest possible state
  and input intervals (i.e., the ones with the smallest width in all
  dimensions), which are guaranteed to contain the true state and
  input, respectively.
\end{prop}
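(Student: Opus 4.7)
The plan is to prove the two claims of the proposition separately: first, that the updated intervals in \eqref{eq:network_update} and \eqref{eq:network_update_d} still contain the true state and input (correctness), and second, that among all axis-aligned intervals that are guaranteed to contain $x_k$ (respectively $d_k$) based on the neighbors' local framers, the ones produced by the max/min operations are component-wise the tightest.

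For correctness, I would first appeal to Step~\hyperlink{step1}{i)}, which was constructed so that each local interval $[\ul{x}^{j,0}_k, \ol{x}^{j,0}_k]$ satisfies $\ul{x}^{j,0}_k \leq x_k \leq \ol{x}^{j,0}_k$ for every $j \in \mathcal{N}_i$. Taking the component-wise maximum of the lower bounds and component-wise minimum of the upper bounds across $j \in \mathcal{N}_i$ preserves both inequalities, so the resulting bounds $\ul{x}^{i}_{k} = \max_{j\in\mathcal{N}_i}\ul{x}^{j,0}_k$ and $\ol{x}^{i}_{k} = \min_{j\in\mathcal{N}_i}\ol{x}^{j,0}_k$ still sandwich $x_k$. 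The identical argument, applied to the local input framers from Step~\hyperlink{step3}{iii)} (which by construction satisfy $\ul{d}^{j,0}_k \leq d_k \leq \ol{d}^{j,0}_k$), yields correctness for \eqref{eq:network_update_d}.

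For optimality, I would use the elementary fact that the intersection of a finite family of axis-aligned intervals in $\R^n$ is again an axis-aligned interval, whose lower bound is the component-wise maximum of the lower bounds and whose upper bound is the component-wise minimum of the upper bounds. Thus $[\ul{x}^i_k, \ol{x}^i_k] = \bigcap_{j\in\mathcal{N}_i}[\ul{x}^{j,0}_k, \ol{x}^{j,0}_k]$. Since $x_k$ belongs to every $[\ul{x}^{j,0}_k, \ol{x}^{j,0}_k]$, any axis-aligned interval that is guaranteed (based only on the information encoded in the neighbors' framers) to contain $x_k$ must contain this intersection; hence its lower bound must be component-wise $\leq \ul{x}^i_k$ and its upper bound component-wise $\geq \ol{x}^i_k$, making its width in every dimension at least $\ol{x}^i_k - \ul{x}^i_k$. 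The same reasoning transfers verbatim to the input intervals.

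The proof is essentially a one-liner once the right observations are made, so there is no real obstacle; the only point worth stating carefully is the notion of ``smallest possible'' — I would clarify that the comparison is against the class of axis-aligned intervals guaranteed to contain the true state/input given the neighbors' local estimates, and then invoke minimality of the interval hull (which here coincides with the intersection itself, since intersections of axis-aligned boxes remain axis-aligned boxes).
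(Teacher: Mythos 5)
Your argument is correct and is exactly the paper's approach: the paper's proof is the one-line observation that the claim ``follows from the definition of the intersection of intervals,'' and your write-up simply spells out that observation (the max/min update computes the intersection, which inherits correctness from the neighbors' framers and is minimal among axis-aligned boxes guaranteed to contain the true value). Your added clarification of what ``smallest possible'' means is a reasonable elaboration but not a different route.
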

\begin{proof}
  The statement follows from the definition of the intersection of
  intervals.
\end{proof}
An important consequence of Proposition \ref{lem:error_diameter} is that our
observer is guaranteed to perform better than one which uses a linear
operation (i.e., averaging) to communicate across the network.
Despite the nonlinearity of \eqref{eq:network_update}, we are still
able to provide a thorough stability analysis, which is a key
contribution of this work. Figure \ref{fig:cons} illustrates the so
called ``min-max" consensus, as a result of applying the $\min$ and
$\max$ operations in the network update step. This can be considered
as a counterpart of average consensus in set-valued settings.
\begin{figure}
  \begin{center}
      \includegraphics[width=.5\textwidth]{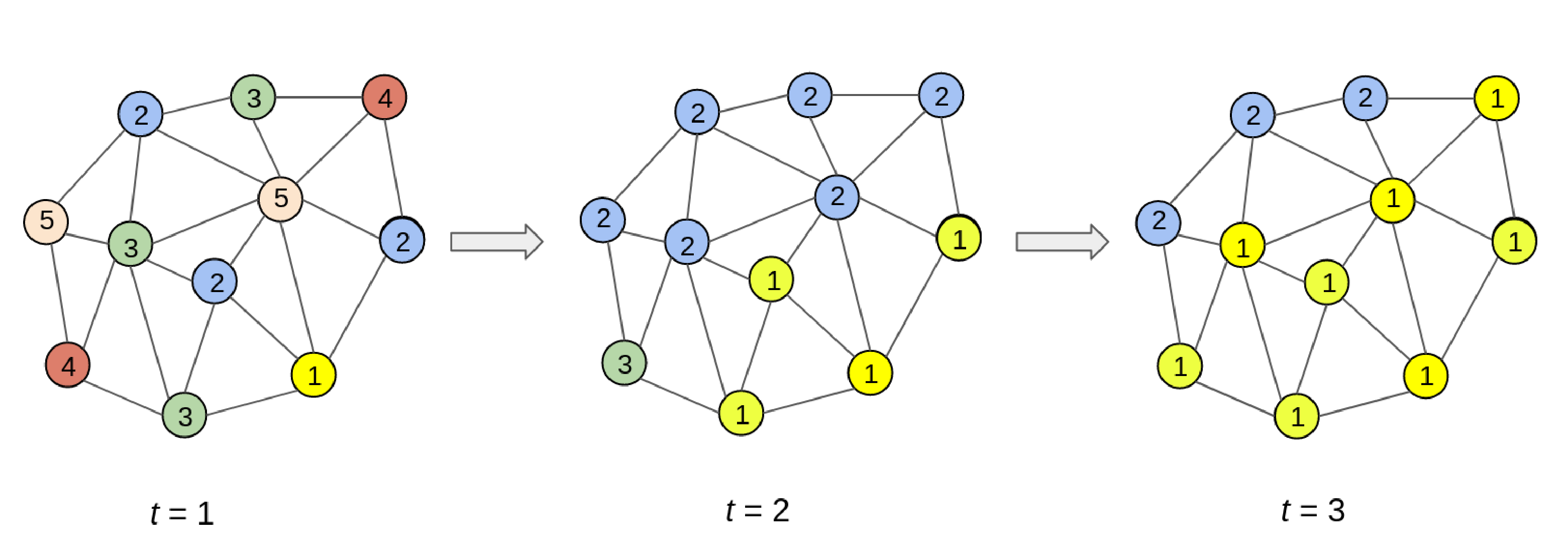}
  \end{center}
  \caption{\small{Simple static example of ``min" consensus.}}
  \label{fig:cons}
\end{figure}

We conclude this section by showing that the proposed algorithm
constructs a distributed resilient interval framer in the sense of Definition
\ref{defn:dist_obs} for the plant \eqref{eq:ind_system}.
\begin{lem}[Distributed Resilient Interval Framer Construction]
  \label{lem:ind_framer}
  Suppose that all the conditions and assumptions in
  Lemma~\ref{lem:equiv} hold. Then, Steps i) - iv) construct a
  distributed resilient interval framer for
  \eqref{eq:ind_system}.
\end{lem}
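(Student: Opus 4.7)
The plan is to establish the framer property by induction on the time index $k$, handling the state and input framers separately but using a common template. The base case will be immediate provided the initial intervals $[\ul{x}^i_0,\ol{x}^i_0]$ are chosen to contain $x_0$ (which is part of the standing framer initialization for each agent). The inductive hypothesis at step $k$ is that $\ul{x}^i_k \le x_k \le \ol{x}^i_k$ holds for every $i \in \nodes$.

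First I would handle Step i). Using Lemma~\ref{lem:equiv}, the true state satisfies the equivalent representation \eqref{eq:gamma_dyn_2}. I would then split the right-hand side into (a) the linear-in-$x_k$ term $(T^iA^i - L^i C^i_2)x_k$, (b) the JSS nonlinear term $T^i\rho^i(x_k,w_k)$, (c) the affine disturbance/output term $\Psi^i\eta^i_{k+1}$, and (d) the known term $\zeta^i_{k+1}$. For (a) and (c), Proposition~\ref{prop:bounding} yields the correct elementwise upper and lower bounds once the matrix is split into its nonnegative and nonpositive parts, which is exactly how $\mathtt{\tilde{A}}^i$ and $\mathtt{\Psi}^i$ are assembled. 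For (b), the JSS decomposition function $\rho^i_d$ from Proposition~\ref{prop:tight_decomp} guarantees $\rho^i_d(\ul{x}^i_k,\ul{w},\ol{x}^i_k,\ol{w}) \le \rho^i(x_k,w_k) \le \rho^i_d(\ol{x}^i_k,\ol{w},\ul{x}^i_k,\ul{w})$ whenever the inductive hypothesis holds, and $\mathtt{T}^i$ is again the sign-split of $T^i$. Summing these valid bounds gives precisely the recursion \eqref{eq:framers}, so $\ul{x}^{i,0}_{k+1} \le x_{k+1} \le \ol{x}^{i,0}_{k+1}$.

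Next, Step ii) is the cleanest part: since every agent $j \in \mathcal{N}_i$ satisfies $\ul{x}^{j,0}_{k+1} \le x_{k+1} \le \ol{x}^{j,0}_{k+1}$, taking the elementwise maximum of valid lower bounds and the elementwise minimum of valid upper bounds preserves the inclusion. This is exactly the content of Proposition~\ref{lem:error_diameter}, and it propagates the inductive hypothesis to time $k+1$. Steps iii) and iv) are handled by the same template applied to~\eqref{eq:d_x}: the linear part in $x_k$ is bounded via Proposition~\ref{prop:bounding} using $\mathtt{A}^i_h$, the JSS remainder $\mu^i$ via its tight decomposition $\mu^i_d$, and the affine noise term via $\mathtt{\Lambda}^i$, producing the bounds \eqref{eq:input_framers}; the network intersection \eqref{eq:network_update_d} then preserves the inclusion exactly as in Step ii). Finally, since no assumption is placed on $d_k$ along the way (its role was eliminated by the transformation in Lemma~\ref{lem:equiv}), the resulting inclusions hold for every realization of the unknown input, matching Definition~\ref{defn:dist_obs}.

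The main obstacle will be the bookkeeping in Step i): the equivalent dynamics~\eqref{eq:gamma_dyn_2} mixes terms whose sign structure is not a priori clear, so I would need to be careful that the composite matrices $\mathtt{\tilde{A}}^i$, $\mathtt{T}^i$, $\mathtt{\Psi}^i$ defined in Appendix~\ref{sec:framer_mat} really correspond to the $(\cdot)^{\oplus}/(\cdot)^{\ominus}$ split applied jointly to the pair of upper/lower bounds, and similarly for $\mathtt{A}^i_h$, $\mathtt{\Lambda}^i$ in the input equation. Once this correspondence is stated once, the rest is a mechanical application of Propositions~\ref{prop:bounding} and~\ref{prop:tight_decomp} followed by the intersection argument, and no new machinery is needed beyond what the preceding subsections provide.
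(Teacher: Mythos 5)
Your proposal is correct and follows essentially the same route as the paper: an induction on $k$ whose inductive step combines the correctness of the propagation/estimation maps \eqref{eq:framers} and \eqref{eq:input_framers} (via Propositions~\ref{prop:tight_decomp} and~\ref{prop:bounding} applied to the equivalent representation from Lemma~\ref{lem:equiv}) with the fact that the neighborhood intersections \eqref{eq:network_update} and \eqref{eq:network_update_d} preserve the inclusion. The paper simply compresses the Step~i)/iii) bounding details by referring back to the construction in Section~\ref{sec:framer}, whereas you spell them out explicitly; no substantive difference.
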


\begin{proof}
  From our previous discussion on the properties of~\eqref{eq:framers}
  and~\eqref{eq:input_framers}, the following implications hold:
  \begin{align*}
    \ul{x}^{i}_{k} &\leq x_{k} \leq \ol{x}^{i}_{k}
    \implies \ul{x}^{i,0}_{k+1} \leq x_{k+1} \leq
    \ol{x}^{i,0}_{k+1}, \\
    \ul{d}^{i}_{k} &\leq d_{k} \leq \ol{d}^{i}_{k}
    \implies \ul{d}^{i,0}_{k+1} \leq d_{k+1} \leq
    \ol{d}^{i,0}_{k+1},
  \end{align*}
  for each $i\in\nodes$. 
  When the framer condition is satisfied for all nodes, the
  intersection of all the individual estimates of neighboring nodes
  (cf. \eqref{eq:network_update} and \eqref{eq:network_update_d}) also
  results in correct interval framers, i.e.,
  \begin{align*}
    \ul{x}^{i,0}_{k} &\leq x_{k} \leq \ol{x}^{i,0}_{k},
    \; \forall i \in \nodes
    \implies \ul{x}^{i}_{k} \leq x_{k} \leq \ol{x}^{i}_{k},
    \; \forall i \in \nodes,\\
    \ul{d}^{i,0}_{k} &\leq d_{k} \leq \ol{d}^{i,0}_{k},
    \; \forall i \in \nodes
    \implies \ul{d}^{i}_{k} \leq d_{k} \leq \ol{d}^{i}_{k},
    \; \forall i \in \nodes.
  \end{align*}
  Since the initial interval is known to all nodes~$i$, then by
  induction, Steps \hyperlink{step1}{i)}--\hyperlink{step4}{iv)}
  for~\eqref{eq:ind_system}.
\end{proof}
\section{Distributed Resilient Interval Observer Synthesis}
\label{sec:synthesis}
In this section, we investigate conditions on the observer gains
$L^i$, $T^i$, and $\Gamma^i$, $i \in \nodes$, as well as the
communication graph $\graph$, that lead to a C-ISS distributed
resilient interval observer (cf. Definition \ref{defn:stability}),
which equivalently results in a uniformly bounded observer error
sequence $\{e_{x,k},e_{d,k}\}_{k \ge 0}$ (given in
\eqref{eq:error_1}--\eqref{eq:error_2}), in the presence of bounded
noise.
\subsection{Stability of the Observer Design}
\noindent\textbf{\textit{Switched System Perspective.}}
Leveraging a switched system representation of the error system, we
can provide a condition that is necessary and sufficient for the
stability of the error comparison system, and, consequently, is
sufficient to guarantee the stability of the original error system. We
begin by stating a preliminary result that expresses the observer
error dynamics in the form of a specific switched system.
\begin{lem}
  \label{lem:switched}
  The collective error signals ($\{e_{x,k},e_{d,k}\}_{k=0}^{\infty}$)
  satisfy the following switched comparison dynamics:
  \begin{align}\label{eq:error-switched}
    \begin{split}
      e_{x,k+1} &\leq \sigma^x_k(\Acompx e_{x,k} +
                  \Bcompx \delta_\eta), \\
      e_{d,k} &\leq  \sigma^d_k(\Acompd e_{x,k} + \Bcompd \delta_\eta),
    \end{split}
  \end{align}
  where $\delta_\eta \triangleq \ol{\eta} - \ul{\eta}$, see Appendix~\ref{sec:framer_mat}, for matrices
  \begin{align*}
    \sigma^x_k &\in \Sigma^x \triangleq \left\{\sigma \in \{0,1\}^{Nn \times Nn} :
    \begin{array}{c} \sigma_{ij} = 0, \forall j \notin \mathcal{N}_i, \\
    \sum_{k=1}^{Nn} \sigma_{ik} = 1 \end{array} \right\},\\
    \sigma^d_k &\in \Sigma^d \triangleq \left\{\sigma \in \{0,1\}^{Np \times Np} :
    \begin{array}{c} \sigma_{ij} = 0, \forall j \notin \mathcal{N}_i, \\
    \sum_{k=1}^{Nn} \sigma_{ik} = 1 \end{array} \right\},
  \end{align*}
  in the sets $\Sigma^x,\Sigma^d$  of possible switching signals,
  and
  \begin{gather*}
    \Acompx \triangleq \diag(\Acompx^1, \dots \Acompx^n), \quad
    \Acompd \triangleq \diag(\Acompd^1, \dots \Acompd^n), \\
    \Bcompx \triangleq \diag(\Bcompx^1, \dots \Bcompx^n), \quad
    \Bcompd \triangleq \diag(\Bcompd^1, \dots \Bcompd^n).
  \end{gather*}
  The individual matrices $\Acompx^i$,
  $\Acompd^i$, $\Bcompx^i$, and $\Bcompd^i$ are given in Appendix
  \ref{sec:error_mat}.
  Furthermore, $\sigma^x_k$ and $\sigma^d_k$ are binary matrices that select the neighbor with the smallest  error, i.e.,
  \begin{gather}
    \label{eq:H}
    \begin{split}
      (\sigma^x_k)_{\id(i,s),\id(j^*,s)}
      = 1
      & \Leftrightarrow j^* = \min (\argmin_{j\in\mathcal{N}_i} (e^{j,0}_k)_s),
    \end{split}
  \end{gather}
for
$e^0_{x,k} \triangleq \ol{x}^0_k - \ul{x}^0_k$,
$s \in \{1,\dots,n\}$ and $i\in\mathcal{V}$.
Here $\id(i,s) = n(i-1) + s$ encodes the indices associated with state
dimension $s$ at node~$i$ (and, similarly, for $\sigma^d_k$).
\end{lem}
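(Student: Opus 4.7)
The plan is to decompose the proof into two natural stages that mirror the framer construction in Section~\ref{sec:framer}: first, a per-agent local propagation bound derived from Step~\hyperlink{step1}{i} (and Step~\hyperlink{step3}{iii} for the input), then a componentwise tightening from the intersection in Step~\hyperlink{step2}{ii} (and Step~\hyperlink{step4}{iv} for the input), which is what produces the switching matrix.

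For the local propagation of the state error, I would subtract the lower row of the framer recursion \eqref{eq:framers} from its upper row. The common term $\zeta^i_{k+1}$ (a measured signal, not a width) cancels, while the shared exogenous bounds $\ol{\eta}^i, \ul{\eta}^i$ contribute proportionally to $\delta_\eta$. By Proposition~\ref{prop:bounding}, the linear block $\mathtt{\tilde{A}}^i[\ol{x}^{i,\top}_k\ \ul{x}^{i,\top}_k]^\top$ yields a bound of the form $|T^iA^i-L^iC^i_2|\,e^i_{x,k}$ on its width. By Proposition~\ref{prop:tight_decomp} applied with $\mu=\rho^i$ and joint argument $z=(x,w)$, the nonlinear decomposition difference $\rho^i_d(\ol{x}^i_k,\ol{w},\ul{x}^i_k,\ul{w})-\rho^i_d(\ul{x}^i_k,\ul{w},\ol{x}^i_k,\ol{w})$ is bounded by $\ol{F}_{\rho^i}$ acting on the combined state-noise width. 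Grouping the coefficients multiplying $e^i_{x,k}$ into $\Acompx^i$ and those multiplying $\delta_\eta$ into $\Bcompx^i$ yields the pre-update per-agent bound
\begin{align*}
  e^{i,0}_{x,k+1} \leq \Acompx^i e^i_{x,k} + \Bcompx^i \delta_\eta,
\end{align*}
exactly matching the block-diagonal assembly $\Acompx=\diag(\Acompx^1,\dots,\Acompx^N)$, $\Bcompx=\diag(\Bcompx^1,\dots,\Bcompx^N)$ claimed in the lemma.

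For the network intersection, the key observation is a componentwise min–max inequality. Fix an agent $i$ and a dimension $s$, and let $j_1\in\argmin_{j\in\mathcal{N}_i}(\ol{x}^{j,0}_{k+1})_s$ and $j_2\in\argmax_{j\in\mathcal{N}_i}(\ul{x}^{j,0}_{k+1})_s$. By definition of $j_1,j_2$, for \emph{every} $j\in\mathcal{N}_i$,
\begin{align*}
  (e^i_{x,k+1})_s = (\ol{x}^{j_1,0}_{k+1})_s - (\ul{x}^{j_2,0}_{k+1})_s \leq (\ol{x}^{j,0}_{k+1})_s - (\ul{x}^{j,0}_{k+1})_s = (e^{j,0}_{x,k+1})_s.
\end{align*}
Choosing $j^*$ to be the neighbor minimizing the $s$-th width gives the tightest such bound; this is precisely the selection encoded in \eqref{eq:H}. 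Because $\Acompx$ is block-diagonal, the $(j^*,s)$ row of $\Acompx e_{x,k}+\Bcompx\delta_\eta$ equals $(\Acompx^{j^*})_s e^{j^*}_{x,k} + (\Bcompx^{j^*})_s \delta_\eta$, which upper-bounds $(e^{j^*,0}_{x,k+1})_s$ by the local step, and hence upper-bounds $(e^i_{x,k+1})_s$. Stacking over all $(i,s)$ produces exactly $e_{x,k+1}\leq\sigma^x_k(\Acompx e_{x,k}+\Bcompx\delta_\eta)$, with $\sigma^x_k$ binary, row-stochastic, and supported only on neighbor columns, as required. The input inequality $e_{d,k}\leq\sigma^d_k(\Acompd e_{x,k}+\Bcompd\delta_\eta)$ follows by the identical two-step argument applied to \eqref{eq:input_framers} and \eqref{eq:network_update_d}, using $\mu^i_d$ in place of $\rho^i_d$; note the right-hand side depends on $e_{x,k}$ at the same time index, reflecting that Step~\hyperlink{step3}{iii} is evaluated at the already-updated state framer from Step~\hyperlink{step2}{ii}.

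The main obstacle is the careful bookkeeping between block-row indices (agent $j$ selected by $\sigma^x_k$) and component indices $s$ within each block: one must verify that the index map $\id(i,s)=n(i-1)+s$ correctly routes the $s$-th dimension of agent $i$'s error to the $s$-th row of agent $j^*$'s local propagation block. Beyond this indexing, the remaining steps — applying Propositions~\ref{prop:bounding} and \ref{prop:tight_decomp} to the framer recursions and exploiting the componentwise min–max property of intersection — are routine.
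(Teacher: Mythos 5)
Your proposal is correct and follows essentially the same route as the paper's proof: subtract the upper and lower rows of \eqref{eq:framers} (resp. \eqref{eq:input_framers}), bound the linear blocks via Proposition~\ref{prop:bounding} and the decomposition-function differences via Proposition~\ref{prop:tight_decomp} to get the per-agent comparison $e^{0}_{x,k+1}\leq \Acompx e_{x,k}+\Bcompx\delta_\eta$, and then encode the min--max network update through the selection matrices $\sigma^x_k,\sigma^d_k$. Your handling of the intersection step is in fact slightly more careful than the paper's, which writes $e_{x,k}=\sigma^x_k e^{0}_{x,k}$ as an equality even though the post-intersection width $\min_j(\ol{x}^{j,0}_k)_s-\max_j(\ul{x}^{j,0}_k)_s$ can be strictly smaller than the selected neighbor's width; your componentwise inequality is the right statement and still yields \eqref{eq:error-switched}.
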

\begin{proof}
  The proof is provided in Appendix \ref{sec:lem4_proof}.
\end{proof}
\begin{cor}
  The matrix $\sigma^x_k\Acompx$ is a member of the set
  $\mathcal{F} \subseteq \R^{Nn\times Nn}$, where
  \begin{gather*}
    \mathcal{F}
    \triangleq \Big\{
      F \in \real^{Nn \times Nn} :
      (F)_{\id(i,s)} \in \mathcal{F}^i_s,
         \ s \in \{1,\dots,n\},
         \ i\in \mathcal{V}
         \Big\} , \\
    \mathcal{F}^i_{s}
    \triangleq \Big\{
      \mathbf{e}_j^\top \otimes
        (\Acompx^j)_s
    \in \R^{1\times Nn}
      : \ {j\in\mathcal{N}_i}
      \Big\}.
  \end{gather*}
\end{cor}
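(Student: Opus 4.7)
The plan is to verify the corollary by a direct computation that unpacks the definitions of $\sigma^x_k$, $\Acompx$, and $\mathcal{F}$ row by row. The key observation is that left-multiplication of $\Acompx$ by a binary selector matrix $\sigma^x_k$ whose rows each contain a single $1$ simply picks out one row of $\Acompx$ per row of the product.

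First, I would fix an arbitrary index pair $(i,s)$ with $i\in\nodes$ and $s\in\{1,\dots,n\}$, and consider row $\id(i,s)$ of the product $\sigma^x_k \Acompx$. By the characterization in \eqref{eq:H}, the row $\id(i,s)$ of $\sigma^x_k$ has exactly one nonzero entry, equal to $1$, located in column $\id(j^*,s)$, where $j^* \in \mathcal{N}_i$ is the neighbor with the smallest value of $(e^{j,0}_k)_s$. Consequently, $(\sigma^x_k \Acompx)_{\id(i,s)}$ equals the row $\id(j^*,s)$ of $\Acompx$.

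Next I would exploit the block-diagonal structure $\Acompx = \diag(\Acompx^1,\dots,\Acompx^N)$, with each block $\Acompx^j \in \R^{n\times n}$. Because $\id(j^*,s) = n(j^*-1)+s$, row $\id(j^*,s)$ of $\Acompx$ is zero outside the columns $n(j^*-1)+1,\dots,nj^*$, and on that column-block it coincides with $(\Acompx^{j^*})_s$. This is exactly the Kronecker product $\mathbf{e}_{j^*}^\top \otimes (\Acompx^{j^*})_s \in \R^{1\times Nn}$, which, since $j^*\in\mathcal{N}_i$, belongs to $\mathcal{F}^i_s$ by definition.

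Since $(i,s)$ was arbitrary, every row of $\sigma^x_k \Acompx$ satisfies the membership condition in the definition of $\mathcal{F}$, and therefore $\sigma^x_k \Acompx \in \mathcal{F}$. No inequality analysis, stability argument, or switching reasoning is required; the only real subtlety is bookkeeping around the $\id(\cdot,\cdot)$ indexing and the identification of a sparse row of a block-diagonal matrix with the appropriate Kronecker product. I expect the indexing to be the only potential source of confusion, so I would carry out the argument very explicitly in terms of $\id(i,s)$ to keep it transparent.
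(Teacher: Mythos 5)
Your argument is correct and is exactly the reasoning the paper relies on: the corollary is stated without proof as an immediate consequence of \eqref{eq:H} (each row $\id(i,s)$ of $\sigma^x_k$ selects the single row $\id(j^*,s)$ of $\Acompx$ for some $j^*\in\mathcal{N}_i$) together with the block-diagonal structure of $\Acompx$, which identifies that row with $\mathbf{e}_{j^*}^\top\otimes(\Acompx^{j^*})_s$. Your explicit bookkeeping with the $\id(\cdot,\cdot)$ indexing fills in precisely the step the authors leave implicit, so there is nothing to add.
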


Recall that the switching dynamics in \eqref{eq:error-switched}
depends on the state according to \eqref{eq:H} and always creates the
smallest possible error. In order to take advantage of this property
we observe that the set $\mathcal{F}$ has a specific structure known
as \emph{independent row uncertainty}, formally defined below.
\begin{defn}[Independent Row Uncertainty \cite{VDB-YN:10}]
  A set of matrices $\mathcal{M} \subset \R^{n\times n}$ has
  independent row uncertainty if
  \begin{align*}
    \mathcal{R} = \left\{
      \begin{bmatrix}a_1^\top & \cdots & a_n^\top\end{bmatrix}^\top
      \ : \ a_i \in \mathcal{R}_i, \ i \in \{1,\dots,n\}
    \right\},
  \end{align*}
  where all sets $\mathcal{R}_i \subset \R^{1\times n}$ are
  compact.
  \bulletend
\end{defn}
Next, we restate the following lemma on the spectral properties of the
sets with independent row uncertainty, that will be used later in
our stability analysis of system \eqref{eq:error-switched}.
\begin{prop}\label{lem:lower-spectral-radius}\cite[Lemma 2]{VDB-YN:10}
  Suppose $\mathcal{R} \subset \R^{n\times n}$ has independent
  row uncertainty. Then there exists
  $R_* \in \mathcal{R}$ such that:
  \begin{gather*}
    \rho(R_*) = \min_{R\in\mathcal{R}}\rho(R) =
    \lim\limits_{k\to\infty}\big{(}\min_{R_i\in\mathcal{R}} \|R_1\cdots R_k\|^\frac{1}{k}\big{)}.
\end{gather*}
The latter is known as the \emph{lower spectral radius} of
$\mathcal{R}$. \bulletend
\end{prop}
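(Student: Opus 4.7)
The plan is to prove the two assertions sequentially. For the first claim --- existence of a minimizer $R_* \in \argmin_{R \in \mathcal{R}} \rho(R)$ --- I would invoke a compactness argument. Under the independent row uncertainty structure, $\mathcal{R}$ can be identified with the product $\mathcal{R}_1 \times \cdots \times \mathcal{R}_n$, which is compact as a finite product of compact row sets. Since the spectral radius $\rho(\cdot)$ is a continuous function of the matrix entries, it attains its minimum on this compact set, yielding $R_*$.

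For the equality $\rho(R_*) = \rho_*(\mathcal{R}) \triangleq \lim_{k \to \infty} \min_{R_i \in \mathcal{R}} \|R_1 \cdots R_k\|^{1/k}$, I would prove two opposite inequalities. The easy direction, $\rho(R_*) \geq \rho_*(\mathcal{R})$, follows from Gelfand's formula applied to the constant sequence $R_i \equiv R_*$, which is one admissible choice in the minimum that defines $\rho_*(\mathcal{R})$, so that
\begin{gather*}
  \rho(R_*) = \lim_{k \to \infty} \|R_*^k\|^{1/k} \geq \rho_*(\mathcal{R}).
\end{gather*}

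The reverse inequality, $\rho_*(\mathcal{R}) \geq \rho(R_*)$, is the main obstacle, and it is where independent row uncertainty is essential. For a general compact matrix set, the lower spectral radius can be strictly smaller than $\min_{R \in \mathcal{R}} \rho(R)$, since alternating between matrices can contract products faster than any fixed choice. The product structure of $\mathcal{R}$ rules this out: heuristically, each row $i$ of a long product is ``steered'' only by left-multiplication against rows drawn from $\mathcal{R}_i$, so the smallest attainable growth rate should be achievable by repeatedly applying a single optimal row-selection. To formalize this, I would fix a large $k$ and a near-optimal product $P = R_1 \cdots R_k$, and then construct a single $\tilde{R} \in \mathcal{R}$ whose $i$-th row is drawn from $\mathcal{R}_i$ so that $\rho(\tilde{R})^k$ is controlled by $\|P\|$, passing through a left-eigenvector analysis of $P$ and exploiting the row-by-row independence to align row selections with the spectral data. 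This matching step --- pairing row choices with the eigenstructure of $P$ --- is the delicate technical core of the argument, and is the reason the paper cites \cite{VDB-YN:10} rather than reproducing the proof.
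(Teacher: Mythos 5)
First, a point of reference: the paper offers no proof of this proposition at all --- it is restated verbatim from \cite[Lemma 2]{VDB-YN:10} --- so there is no internal argument to compare yours against; the only question is whether your sketch would stand on its own. The two easy parts are handled correctly: compactness of $\mathcal{R} \cong \mathcal{R}_1 \times \cdots \times \mathcal{R}_n$ together with continuity of $\rho$ yields the minimizer $R_*$, and Gelfand's formula applied to the constant admissible sequence $R_i \equiv R$ gives $\rho(R) \ge \rho_*(\mathcal{R})$ for every $R \in \mathcal{R}$, hence $\min_{R\in\mathcal{R}}\rho(R) \ge \rho_*(\mathcal{R})$.

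The genuine gap is that the reverse inequality --- which is the entire content of the lemma --- is never proved, only labeled as ``the delicate technical core.'' A sketch that stops exactly at the one nontrivial step is not a proof. More substantively, the route you gesture at (a left-eigenvector analysis of a near-optimal product $P$, matched row-by-row against the sets $\mathcal{R}_i$) omits the hypothesis that actually powers the argument in \cite{VDB-YN:10}: the matrices must be entrywise \emph{nonnegative}. The Blondel--Nesterov proof rests on the monotonicity of the spectral radius over the nonnegative orthant ($0 \le A \le B$ entrywise implies $\rho(A) \le \rho(B)$) combined with the fact that, under independent row uncertainty, row-wise selections interact with entrywise domination of long products in a controllable way; none of this machinery exists for general real rows, and for signed matrices the identity is not established by the cited lemma. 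The proposition as transcribed in the paper silently drops the nonnegativity assumption (harmlessly for the application, since the matrices $\mathcal{A}_x^i = |\tilde{A}^i| + |T^i|\ol{F}_{\rho,x}$ it is applied to are nonnegative by construction), but any self-contained proof must reinstate and actually use it; your eigenvector-matching plan gives no indication of where nonnegativity would enter, and without it the ``delicate core'' cannot be completed along the lines you describe.
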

We can now state our first main stability result.
\begin{thm}[Necessary and Sufficient Conditions for Stability, Implying the C-ISS Property]
  \label{thm:stability}
  The noiseless ($\delta_\eta=0$) comparison error system \eqref{eq:error-switched} is
  globally exponentially stable if and only if there exists
  $\sigma^x_* \in \Sigma^x$ such that the matrix
  $\sigma^x_*\Acompx$ is Schur stable. Consequently, the
  distributed observer \eqref{eq:framers}--\eqref{eq:network_update_d}
  is C-ISS if such a $\sigma^x_*$ exists.
\end{thm}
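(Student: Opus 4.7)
The plan is to establish the equivalence by exploiting two structural features of the noiseless comparison system \eqref{eq:error-switched}: the matrices $\Acompx^i$ (and hence $\Acompx$) are entrywise non-negative, since they arise as comparison bounds on interval widths via Propositions \ref{prop:bounding} and \ref{prop:tight_decomp}; and, by construction \eqref{eq:H}, the state-dependent switching $\sigma^x_k$ selects, for each row $\id(i,s)$, the neighbor $j\in\mathcal{N}_i$ minimizing $(\Acompx e_{x,k})_{\id(j,s)}$. Consequently, along any trajectory of \eqref{eq:error-switched} and for every fixed $\sigma^x \in \Sigma^x$, the componentwise domination
\begin{align*}
\sigma^x_k \Acompx e_{x,k} \leq \sigma^x \Acompx e_{x,k}
\end{align*}
holds. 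Both directions of the theorem follow from this inequality.

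For sufficiency, I would fix a $\sigma^x_* \in \Sigma^x$ such that $\sigma^x_* \Acompx$ is Schur stable, apply the domination above with $\sigma^x = \sigma^x_*$, and iterate to obtain $e_{x,k} \leq (\sigma^x_* \Acompx)^k e_{x,0}$. Gelfand's formula then yields $\|e_{x,k}\|_2 \leq c \lambda^k \|e_{x,0}\|_2$ for some $\lambda \in (\rho(\sigma^x_* \Acompx),1)$ and $c > 0$, which is the desired GES.

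For necessity, I would use the GES assumption along the specific initial condition $e_{x,0} = \mathbf{1}$, tracking the state-dependent product $F^*_{k-1} \cdots F^*_0$ with each $F^*_j \in \mathcal{F}$ that realizes the trajectory. Since each $F^*_j$ is non-negative, so is the product, and the identity $\|M \mathbf{1}\|_\infty = \|M\|_\infty$ for non-negative $M$ promotes the GES bound $\|F^*_{k-1} \cdots F^*_0 \mathbf{1}\|_\infty \leq c \lambda^k$ into the operator-norm estimate $\|F^*_{k-1} \cdots F^*_0\|_\infty \leq c \lambda^k$. This forces the lower spectral radius of $\mathcal{F}$ to satisfy
\begin{align*}
\lim_{k\to\infty} \inf_{F_i \in \mathcal{F}} \|F_1 \cdots F_k\|^{1/k} \le \lambda < 1 ,
\end{align*}
so by Proposition \ref{lem:lower-spectral-radius}, which applies because $\mathcal{F}$ has independent row uncertainty, the minimum $\min_{F\in\mathcal{F}}\rho(F)$ is attained at some $F_* = \sigma^x_* \Acompx$ with $\rho(F_*) < 1$, i.e., $\sigma^x_* \Acompx$ is Schur stable. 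The C-ISS conclusion is then obtained by retaining the noise term in the domination, yielding $e_{x,k+1} \leq \sigma^x_* \Acompx e_{x,k} + \sigma^x_* \Bcompx \delta_\eta$; this is a positive linear system driven by a bounded input with Schur-stable state matrix, and hence ISS, providing the $\mathcal{KL}$-plus-$\mathcal{K}_\infty$ estimate required by Definition \ref{defn:stability}.

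The main obstacle will be the necessity direction: a priori, state-dependent switching could strictly outperform every static $\sigma^x \in \Sigma^x$, so GES of the realized comparison trajectory does not immediately yield a Schur-stable fixed composition $\sigma^x_* \Acompx$. Bridging this gap crucially requires combining the non-negativity structure of $\mathcal{F}$ and $e_{x,k}$ (to reduce operator-norm bounds to the action on $\mathbf{1}$) with the independent-row-uncertainty spectral theorem of Proposition \ref{lem:lower-spectral-radius}, which is precisely what converts an upper bound on the lower spectral radius of $\mathcal{F}$ into the existence of a single Schur-stable element $F_* \in \mathcal{F}$.
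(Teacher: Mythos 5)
Your proposal is correct and follows essentially the same route as the paper: sufficiency via the domination property of the minimizing switch \eqref{eq:H} and a non-negative comparison system, necessity by bounding the lower spectral radius of $\mathcal{F}$ and invoking Proposition~\ref{lem:lower-spectral-radius}, and C-ISS by retaining the noise term in the Schur-stable positive comparison dynamics. Your necessity argument merely fills in details the paper leaves implicit (the all-ones initial condition and the identity $\|M\mathbf{1}\|_\infty=\|M\|_\infty$ for non-negative $M$), which is a welcome elaboration rather than a departure.
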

\begin{proof}
The proof is given in Appendix \ref{sec:Thm1_proof}.
\end{proof}

\subsection{C-ISS and Error Minimizing Observer Synthesis}
This section contributes two different procedures for the design and
optimization of the observer gains, in order to reduce conservatism.
These methods leverage the previous characterization of
Theorem~\ref{thm:stability}, leading to a first optimization in
Lemma~\ref{lem:opt_cent}. After this, we obtain two tractable problem
reformulations: The first method, which requires central knowledge of
all system parameters, takes the form of a mixed-integer linear
program (MILP), where the number of constraints and decision variables
is of the order of $(Nn)^2$. The second and more tractable procedure
reduces this large MILP into $2N$ smaller optimization problems, which
may be solved much more efficiently at the cost of some additional
conservatism.

\subsubsection{\textbf{First Approach}}
Essentially, this approach identifies an optimization problem to
synthesize the matrix $\sigma_*^x$,
together with the free gains $L^i$, $T^i$, and $\Gamma^i$ introduced
in Lemma~\ref{lem:equiv}, in order to guarantee stability of the error
system via
Theorem~\ref{thm:stability}. In addition, it optimizes the performance
of the observer by minimizing the $\ell_1$-norm of the observer error
dynamics in response to the bounded noise terms.

\begin{lem}
  \label{lem:opt_cent}
  If the following optimization problem
  \begin{align} \label{eq:cent-objective}
  \begin{array}{rl}
   & \min\limits_{L, T, \Gamma, \gamma, p, \sigma}\quad
     \gamma \\
   & \mathrm{s. t.}
      \begin{bmatrix}
        p \\
        \mathbf{1}_{Nn+p}
      \end{bmatrix}^\top
      \begin{bmatrix}
        \sigma\mathcal{A}- I_{Nn} & \sigma \mathcal{B} \\ I_{Nn} & \mathbf{0} \\ \mathbf{0} & -\gamma I_{p}
      \end{bmatrix} < 0, \\
    &
     \quad \ \sigma \in \Sigma^x, \ p > 0, \quad T^i = I_n - \Gamma C_2^i, \ \forall i \in \nodes,
      \end{array}
      \end{align}
      with
      \begin{align}\label{eq:cent-constraint-last}
      \begin{array}{rl}
    & \mathcal{A} \triangleq \diag(\mathcal{A}^1, \dots, \mathcal{A}^n), \ \mathcal{B} \triangleq \diag(\mathcal{B}^1, \dots, \mathcal{B}^n), \\
    & \mathcal{A}^i \triangleq |T^i A^i - L^i C^i_2| + |T^i|\ol{F}_{\rho,x}, \ \forall i \in \nodes, \\
    & \mathcal{B}^i \triangleq |\Psi^i| +
      \begin{bmatrix} |T^i|\ol{F}_{\rho,w} & 0 & 0 \end{bmatrix}, \ \forall i \in \nodes,
   \end{array}
  \end{align}
  is feasible, then the comparison system \eqref{eq:error-switched} is
  C-ISS.  Furthermore, letting $\gamma^*$ be the value of the
  objective~\eqref{eq:cent-objective}, the error is upper bounded by
  the expression
\begin{gather}
  \label{eq:error-bound}
  \|e^x_k\|_1 < \gamma^* \|\delta_\eta\|_1.
\end{gather}
\end{lem}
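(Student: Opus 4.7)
The plan is to unpack the compact matrix inequality in \eqref{eq:cent-objective} into two coupled linear inequalities that separately certify Schur stability of a fixed selector and provide an $\ell_1$-gain bound, and then combine them with a linear Lyapunov argument. Partitioning the left row vector as $[p^\top,\ \mathbf{1}_{Nn}^\top,\ \mathbf{1}_p^\top]$ and multiplying column-wise by the displayed block matrix, the single strict inequality decomposes as
\begin{align*}
\text{(i)} \quad & p^\top \sigma\mathcal{A} < p^\top - \mathbf{1}_{Nn}^\top, \\
\text{(ii)} \quad & p^\top \sigma\mathcal{B} < \gamma\,\mathbf{1}_p^\top,
\end{align*}
for the feasible $p > 0$ and $\sigma \in \Sigma^x$.

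First, I would use (i) to verify the hypothesis of Theorem~\ref{thm:stability}. Since each $\mathcal{A}^i$ is assembled from absolute-value blocks $|T^i A^i - L^i C^i_2|$ and $|T^i|\,\ol{F}_{\rho,x}$, the block-diagonal $\mathcal{A}$ is entrywise non-negative, and multiplication by the $\{0,1\}$-valued selector $\sigma$ preserves non-negativity. Thus (i) exhibits a strictly positive left certificate $p^\top(\sigma\mathcal{A}) < p^\top$ for a non-negative matrix, which by the standard Perron--Frobenius criterion implies $\rho(\sigma\mathcal{A}) < 1$, i.e., $\sigma\mathcal{A}$ is Schur. Taking $\sigma^x_* \triangleq \sigma$, Theorem~\ref{thm:stability} immediately yields that the noiseless comparison system is globally exponentially stable and hence the observer is C-ISS. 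The algebraic constraint $T^i = I_n - \Gamma^i C^i_2$ is inherited from Lemma~\ref{lem:equiv} and plays no role beyond ensuring the parametrization is consistent.

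To obtain the explicit gain bound \eqref{eq:error-bound}, I would use the linear Lyapunov function $V(e)\triangleq p^\top e$, which is well defined on the non-negative cone containing every $e_{x,k}$ by construction. From Lemma~\ref{lem:switched}, $e_{x,k+1}\leq \sigma^x_k(\mathcal{A} e_{x,k} + \mathcal{B}\delta_\eta)$, and since $\sigma^x_k$ componentwise selects the \emph{minimum} neighbor, for any fixed admissible $\sigma\in\Sigma^x$ we also have $e_{x,k+1}\leq\sigma(\mathcal{A} e_{x,k} + \mathcal{B}\delta_\eta)$. Applying (i) and (ii) to this fixed certificate and using $\mathbf{1}^\top e = \|e\|_1$ for $e\geq 0$, a direct computation yields
\begin{align*}
V(e_{x,k+1}) \leq V(e_{x,k}) - \|e_{x,k}\|_1 + \gamma\,\|\delta_\eta\|_1.
\end{align*}
Since $V$ is bounded below by $0$, the sequence $V(e_{x,k})$ converges, and taking $k\to\infty$ forces $\limsup_k\|e_{x,k}\|_1 < \gamma\,\|\delta_\eta\|_1$. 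Substituting $\gamma = \gamma^*$ recovers \eqref{eq:error-bound}.

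The main obstacle I anticipate is accommodating the state-dependent, time-varying nature of $\sigma^x_k$, which prevents simply fixing $\sigma$ in the comparison dynamics. This is circumvented by the min-selection property in \eqref{eq:H} combined with the non-negativity of $\mathcal{A}$, $\mathcal{B}$, and $\delta_\eta$: the true trajectory is entrywise dominated by the one produced by \emph{any} fixed admissible $\sigma$, so the Lyapunov descent derived for the certificate $\sigma$ upper bounds the true error without requiring any commutativity of the switching sequence.
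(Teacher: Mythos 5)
Your decomposition of the constraint into (i) $p^\top(\sigma\mathcal{A}-I_{Nn}) < -\mathbf{1}_{Nn}^\top$ and (ii) $p^\top\sigma\mathcal{B} < \gamma\mathbf{1}_{p}^\top$ is exactly right, and the first half of your argument --- entrywise nonnegativity of $\sigma\mathcal{A}$, the linear copositive (Perron--Frobenius) certificate $p^\top\sigma\mathcal{A} < p^\top$ with $p>0$ giving $\rho(\sigma\mathcal{A})<1$, followed by Theorem~\ref{thm:stability} --- is precisely the content the paper outsources to \cite[Proposition 1]{XC-JL-PL-ZS:13}. Likewise, your use of the min-selection property of \eqref{eq:H} to dominate the true switched trajectory by the trajectory driven by the fixed certificate $\sigma$ is the same comparison used in the paper's proof of Theorem~\ref{thm:stability}. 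So the route is essentially the paper's, reconstructed from first principles rather than by citation.

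The gap is in the final step of the gain bound. From $V(e_{x,k+1}) \le V(e_{x,k}) - \|e_{x,k}\|_1 + \gamma\|\delta_\eta\|_1$ you conclude that $V(e_{x,k})$ converges ``since $V$ is bounded below''; that implication is false for a non-monotone sequence, and without convergence of $V$ the telescoped inequality only yields $\sum_{k<K}\bigl(\|e_{x,k}\|_1 - \gamma\|\delta_\eta\|_1\bigr) \le V(e_{x,0})$, i.e., a time-averaged (or $\liminf$) bound, not the pointwise bound \eqref{eq:error-bound}. To close this, either invoke the $\ell_1$-gain characterization for positive systems directly (as the paper does via \cite[Theorem 2]{XC-JL-PL-ZS:13}), or argue through the steady state of the dominating LTI comparison system: since $\rho(\sigma\mathcal{A})<1$ and $\sigma\mathcal{A}\ge 0$ imply $(I-\sigma\mathcal{A})^{-1}\ge 0$, inequality (i) gives $\mathbf{1}^\top(I-\sigma\mathcal{A})^{-1} < p^\top$, hence
\begin{align*}
\mathbf{1}^\top(I-\sigma\mathcal{A})^{-1}\sigma\mathcal{B}\,\delta_\eta \;<\; p^\top\sigma\mathcal{B}\,\delta_\eta \;<\; \gamma\,\mathbf{1}^\top\delta_\eta \;=\; \gamma\|\delta_\eta\|_1,
\end{align*}
which bounds $\|e_{x,k}\|_1$ asymptotically (and for all $k$ under suitable initialization). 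Everything else in your write-up stands.
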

\begin{proof}
  The proof is given in Appendix~\ref{sec:opt-cent}.
\end{proof}

Although
the optimization problem in Lemma~\ref{lem:opt_cent} has nonlinear
constraints, it can be reformulated into an MILP by a change of
variables, which is formalized through the following theorem. Even for
large system dimensions, this MILP can be tractably solved to global
optimality by state-of-the-art solvers such as Gurobi \cite{gurobi}.
\begin{thm}
  \label{thm:milp}
  The program
  \eqref{eq:cent-objective} is
  equivalent to the MILP
  \begin{align}\label{eq:cent-objective_2}
  \begin{array}{rl}
   &\hspace{-.5cm} \min\limits_{\tilde{L}, \tilde{T}, \tilde{\Gamma}, \gamma, Q, \sigma, \mathbf{A}, \mathbf{B}}\quad
     \gamma \\
    &\quad \mathrm{s. t.}   \
        \mathbf{1}_{\tilde{n}}^\top
      \begin{bmatrix}
        \mathbf{A} - Q & \mathbf{B} \\ I_{Nn} & \mathbf{0} \\ \mathbf{0} & -\gamma I_{p}
      \end{bmatrix} < 0, \ \eqref{eq:mixed_integer} \ \text{holds,} \\
    & \quad \quad \ \
      \sigma \in \Sigma^x, \ Q > 0,  \tilde T^i = Q - \tilde\Gamma C_2^i, \ \forall i \in \nodes,
      \end{array}
      \end{align}
      where $\tilde{n} \triangleq 2Nn+p$, and
      \eqref{eq:mixed_integer} represents the additional mixed-integer
      conditions obtained using the so called ``big-$M$" approach
      \cite{JNH:12}, as follows:
   \begin{align}\label{eq:mixed_integer}
   \begin{array}{rl}
    & -(I-\sigma_{ij})M \le \mathbf{A}_{ij} - \tilde{\mathcal{A}}^j \le (I-\sigma_{ij})M, \\
    & -\sigma_{ij} M \le \mathbf{A}_{ij} - \tilde{\mathcal{A}}^j \le \sigma_{ij} M, \\
    & -(I-\sigma_{ij})M \le \mathbf{B}_{ij} - \tilde{\mathcal{B}}^j \le (I-\sigma_{ij})M, \\
    & -\sigma_{ij} M \le \mathbf{B}_{ij} - \tilde{\mathcal{B}}^j \le \sigma_{ij} M,
      \end{array}
  \end{align}
  with $M \in \mathbb{R}$ chosen sufficiently large such that
  $M >\max (\max_{i,j} (\tilde{\mathcal{A}})_{ij},\max_{i,j}
  (\tilde{\mathcal{B}})_{ij})$.  Here,
  \begin{align*}
    & \tilde{\mathcal{A}} = \diag(\tilde{\mathcal{A}}^1, \dots, \tilde{\mathcal{A}}^n),
      \ \tilde{\mathcal{B}} = \diag(\tilde{\mathcal{B}}^1, \dots, \tilde{\mathcal{B}}^n), \\
    & \tilde{\mathcal{A}}^i = |\tilde{T}^i A^i - \tilde{L}^i C^i_2| + |\tilde{T}^i|\ol{F}_{\rho,x}, \ \forall i \in \nodes, \\
    & \tilde{\mathcal{B}}^i = |\tilde{\Psi}^i| +
      \begin{bmatrix} |\tilde{T}^i|\ol{F}_{\rho,w} & 0 & 0 \end{bmatrix}, \ \forall i \in \nodes.
  \end{align*}
  Furthermore, the optimizers in \eqref{eq:cent-objective} and \eqref{eq:cent-objective_2}
  are related as: 
    $$L = Q^{-1}\tilde{L}, \ T = Q^{-1}\tilde{T}, \ \Gamma = Q^{-1}\tilde{\Gamma}.$$
\end{thm}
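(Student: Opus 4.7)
The plan is to derive \eqref{eq:cent-objective_2} from \eqref{eq:cent-objective} through two successive reformulations, each targeting one source of nonlinearity in the original program. The first reformulation removes the bilinearity between the Lyapunov-like vector $p$ and the gain-dependent matrices $\mathcal{A},\mathcal{B}$, whose absolute-value entries depend nonlinearly on $L,T,\Gamma$. The second eliminates the product $\sigma\mathcal{A}$ (and $\sigma\mathcal{B}$), which is bilinear in the binary switching variable and the gain variables. Once both have been carried out, the equivalence of the two programs (and the stated relation between their optimizers) will follow by invertibility of the transformation.

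For the first step, I would introduce $Q \triangleq \diag(p) \succ 0$ and the change of variables $\tilde{L}=QL,\ \tilde{T}=QT,\ \tilde{\Gamma}=Q\Gamma$, applied blockwise across agents. The key algebraic identity is $|Q M| = Q|M|$ whenever $Q$ is a positive diagonal matrix. Exploiting this, one checks directly that $\tilde{\mathcal{A}}^i = Q^i \mathcal{A}^i$ and $\tilde{\mathcal{B}}^i = Q^i \mathcal{B}^i$, where $Q^i$ is the diagonal block of $Q$ associated with agent $i$, and hence $\mathbf{1}^\top \tilde{\mathcal{A}}=p^\top \mathcal{A}$ and $\mathbf{1}^\top \tilde{\mathcal{B}}=p^\top \mathcal{B}$. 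Moreover, left-multiplying the gain coupling $T^i = I_n - \Gamma^i C^i_2$ by $Q^i$ turns it into the linear constraint $\tilde{T}^i = Q^i - \tilde{\Gamma}^i C^i_2$, which matches \eqref{eq:cent-objective_2}. Because $Q \succ 0$, the map $(L,T,\Gamma)\mapsto(\tilde L,\tilde T,\tilde\Gamma)$ is a bijection with the stated inverse $L=Q^{-1}\tilde L,\ T=Q^{-1}\tilde T,\ \Gamma=Q^{-1}\tilde\Gamma$, so feasibility and the objective value $\gamma$ are preserved.

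For the second step, I would linearize the remaining bilinear product $\sigma \tilde{\mathcal{A}}$ (and likewise $\sigma \tilde{\mathcal{B}}$) using the standard big-$M$ reformulation. After introducing auxiliary block matrices $\mathbf{A},\mathbf{B}$, the mixed-integer inequalities \eqref{eq:mixed_integer} are checked case-by-case on $\sigma_{ij}\in\{0,1\}$: provided $M$ strictly dominates every entry of $\tilde{\mathcal{A}}$ and $\tilde{\mathcal{B}}$, the four inequalities collapse to $\mathbf{A}_{ij} = \sigma_{ij}\tilde{\mathcal{A}}^j$ and $\mathbf{B}_{ij} = \sigma_{ij}\tilde{\mathcal{B}}^j$. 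Summing over $j$ and using that each row of $\sigma$ contains exactly one $1$ yields $\mathbf{A}=\sigma \tilde{\mathcal{A}}$ and $\mathbf{B}=\sigma \tilde{\mathcal{B}}$ globally. Substituting these into the scalar inequality $\mathbf{1}_{\tilde{n}}^\top[\,\mathbf{A}-Q\ \mathbf{B};\,I\ \mathbf{0};\,\mathbf{0}\ -\gamma I\,]<0$ and combining with the identity $\mathbf{1}^\top \tilde{\mathcal{A}}=p^\top \mathcal{A}$ from the first step, one recovers exactly the two scalar inequalities obtained by premultiplying the constraint in \eqref{eq:cent-objective} by $[p^\top,\mathbf{1}^\top]$. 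Together with the bijective change of variables, this proves that every feasible point of one program corresponds to a feasible point of the other with the same $\gamma$, establishing equivalence.

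I anticipate two technical obstacles. First, the block-level bookkeeping behind $\mathbf{1}^\top\sigma\tilde{\mathcal{A}} = p^\top\sigma\mathcal{A}$ requires careful use of the structure of $\sigma$ (row-stochastic with sparsity pattern dictated by $\mathcal{N}_i$) together with the block-diagonal structure of $Q$ and $\mathcal{A}$; this is best verified by computing both sides on a generic row using $\id(i,s)$ and invoking the positivity of $p$ entrywise, which is why $Q\succ 0$ is essential. Second, a concrete value of $M$ must be chosen a~priori; justifying that a finite $M$ exists requires bounding the entries of $\tilde{\mathcal{A}},\tilde{\mathcal{B}}$ over the feasible region, which is typically handled by imposing explicit box constraints on the gains or by appealing to monotonicity of the objective $\gamma$ in the size of the gains, so that the optimum is attained in a bounded set on which the stated $M$ is valid.
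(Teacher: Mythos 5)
Your proposal follows essentially the same two-step argument as the paper's proof: introducing the positive diagonal matrix $Q$ with $p = Q\mathbf{1}_{Nn}$ and the scaled variables $\tilde L = QL$, $\tilde T = QT$, $\tilde\Gamma = Q\Gamma$ (so that $\tilde{\mathcal{A}} = Q\mathcal{A}$, $\tilde{\mathcal{B}} = Q\mathcal{B}$), and then linearizing the products $\sigma\tilde{\mathcal{A}}$, $\sigma\tilde{\mathcal{B}}$ via the big-$M$ reformulation. Your added detail on the identity $|QM|=Q|M|$ for positive diagonal $Q$ and the case analysis on $\sigma_{ij}\in\{0,1\}$ simply makes explicit what the paper leaves implicit, so the argument is correct and matches the paper's route.
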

\begin{proof}
  The proof is given in Appendix~\ref{sec:milp}.
\end{proof}
The optimization problem in Theorem~\ref{thm:milp} is a mixed-integer
linear program due to the linearity of the constraints in
\eqref{eq:mixed_integer} and \eqref{eq:cent-objective_2} and the fact
that the entries of the matrix $\sigma$ is restricted to take values
of either $0$ or $1$.
\subsubsection{\textbf{Second Approach}}
Alternatively to the previous centralized method, we show that the
C-ISS property implied by Theorem \ref{thm:stability} can be tractably
established in a \emph{distributed} manner.  The approach is
conceptually similar to Lemma~\ref{lem:opt_cent}, but with some
simplifying assumptions that allow for the problem to be fully
decoupled and solved in a distributed way. The design approach has two
steps: first, agents solve a linear program in order to verify an
assumption that guarantees stability of the observer. Then, using
information from the first step, they solve a second MILP in order to
minimize an upper bound on the norm of the observer errors. We begin
by describing the simplified assumption that leads to stability.

\textbf{Stabilization:} As noted above, multiplication by the matrix $\sigma^x_k$ has the
effect of permuting the rows of $\Acompx$. We now derive a sufficient
condition for stability that leverages this property.
\begin{assumption}[Collective Positive Detectability over Neighborhoods (CPDN)]
  \label{ass:col_pos_det_neigh}
  For every state dimension $s = 1,\dots,n$ and every agent
  $i\in\nodes$, there is an agent $\nu_{is}\in\mathcal{N}_i$
  such that there exist gains $T^{\nu_{is}}$, $L^{\nu_{is}}$, and
  $\Gamma^{\nu_{is}}$ satisfying
  \begin{align*}
    (\Acompx^{\nu_{is}} \mathbf{1})_s < 1.
  \end{align*}
\end{assumption}
Intuitively, the CPDN assumption narrows the problem of stability to
subgraphs. Within these subgraphs, we require that for each state
dimension $s$, there is a node that, given estimates of all other
state dimensions $\{1,\dots,s-1,s+1,\dots,n\}$, can compute an
accurate estimate of dimension~$s$. The assumption can be easily
verified by solving a linear program at every node and communicating
the results with neighbors. The purpose of the LP is to identify the
state dimensions $s$ which a node can contribute to estimating. In the
notation of Assumption~\ref{ass:col_pos_det_neigh}, each node $i$
identifies the dimensions $s$ for which it can act as $\nu_{js}$ for
$j \in \mathcal{N}_i$. The following Lemma shows that the existence
conditions in Assumption~\ref{ass:col_pos_det_neigh} can be verified
by examining the solutions of these LPs.
\begin{lem}
  For all $i \in \nodes$, let $T^i_*$, $L^i_*$, and $\Gamma^i_*$ denote the solutions to
  \begin{align}\label{eq:stab_LMI}
    \begin{array}{rl}
      &\min\limits_{\{X^i,Y^i,Z^i,L^i,T^i,\Gamma^i\}} \quad \sum_{s=1}^n \sum_{t=1}^n X^i_{st}+Y^i_{st} \\
      &\quad \quad  \quad  \mathrm{s.t.}
        \begin{cases}
          -X^i \leq T^i{A}^i-L^iC_2^i \leq X^i,\\
          -Z^i \leq T^i \leq Z^i,\\
          0 \leq Z^i\ol{F}_{\rho,x} \leq Y^i,\\
          T^i = I_n - \Gamma^iC_2^i.
        \end{cases}
    \end{array}
  \end{align}
  Then Assumption~\ref{ass:col_pos_det_neigh} holds if and only if for all $i \in \nodes$ and $s \in \{1,\dots,n\}$, there is a $\nu_{is} \in \mathcal{N}_i$ such that
  \begin{align}\label{eq:lp_condition}
    ((|T^{\nu_{is}}_* A^{\nu_{is}} - L^{\nu_{is}}_* C^{\nu_{is}}| + |T^{\nu_{is}}_*|\ol{F}_{\rho,x}) \mathbf{1})_s < 1.
  \end{align}
\end{lem}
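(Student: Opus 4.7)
The plan is to show that the LP in~\eqref{eq:stab_LMI} actually computes, for each agent $i$ and each row index $s$, the \emph{infimum} of the row sum $(\mathcal{A}^i \mathbf{1})_s$ over all admissible gain choices $(T^i,L^i,\Gamma^i)$ satisfying the compatibility constraint $T^i = I_n - \Gamma^i C^i_2$. Once this is established, the iff characterization is immediate from the statement of Assumption~\ref{ass:col_pos_det_neigh}: for each $i,s$, the CPDN condition asks exactly whether \emph{some} feasible gain choice attains row sum strictly below one, and equivalently whether the \emph{minimal} feasible row sum, which is realized by $(T^{\nu_{is}}_*,L^{\nu_{is}}_*,\Gamma^{\nu_{is}}_*)$, is strictly below one.

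First I would argue that at any optimum of~\eqref{eq:stab_LMI} the slack variables are tight, so that
\begin{align*}
  X^i = |T^i A^i - L^i C^i_2|, \quad Z^i = |T^i|, \quad Y^i = Z^i \ol{F}_{\rho,x},
\end{align*}
using that $\ol{F}_{\rho,x}\geq 0$ entrywise (from its definition in Proposition~\ref{prop:tight_decomp}) so the lower bound $0 \leq Z^i \ol{F}_{\rho,x}$ in~\eqref{eq:stab_LMI} is automatic. Therefore the objective value at optimum equals $\mathbf{1}^\top(X^i + Y^i)\mathbf{1} = \mathbf{1}^\top \mathcal{A}^i \mathbf{1}$, i.e., the total sum of all entries of $\mathcal{A}^i$ evaluated at the optimizing gains.

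The key step, and essentially the only nontrivial observation, is a \emph{row-wise decoupling} of the LP. The $s$-th row of each of the matrices $T^i A^i - L^i C^i_2$, $T^i$, and $Z^i \ol{F}_{\rho,x}$ is determined only by row $s$ of $T^i, L^i, Z^i$, and the compatibility constraint decomposes as $(T^i)_s = \mathbf{e}_s^\top - (\Gamma^i)_s C^i_2$, involving only row $s$ of $\Gamma^i$. Consequently the full LP splits into $n$ independent per-row LPs, the $s$-th of which minimizes $(\mathcal{A}^i \mathbf{1})_s$ over the row-$s$ decision variables alone. Thus the joint minimizer $(T^i_*, L^i_*, \Gamma^i_*)$ minimizes each individual row sum $(\mathcal{A}^i \mathbf{1})_s$, which is the claim.

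With this established, both directions follow in one line. For necessity, if Assumption~\ref{ass:col_pos_det_neigh} holds then for each $i,s$ there exist a neighbor $\nu_{is}$ and feasible gains giving $(\mathcal{A}^{\nu_{is}} \mathbf{1})_s < 1$; since $(T^{\nu_{is}}_*, L^{\nu_{is}}_*, \Gamma^{\nu_{is}}_*)$ minimize that row sum, they satisfy~\eqref{eq:lp_condition} as well. For sufficiency, if~\eqref{eq:lp_condition} holds for all $i,s$, then the triples $(T^{\nu_{is}}_*, L^{\nu_{is}}_*, \Gamma^{\nu_{is}}_*)$ themselves certify Assumption~\ref{ass:col_pos_det_neigh}. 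The only ``obstacle'' worth flagging is the decoupling observation; it is conceptually light but must be verified carefully because the gain matrices $T^i$, $L^i$, $\Gamma^i$ are initially written as fully coupled matrix variables, and one should also note that the $s$-th row of $L^i$ only enters constraints in which all quantities restrict to row $s$, so the per-row LPs are genuinely independent.
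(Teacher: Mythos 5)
Your proposal is correct, and it follows essentially the same route the paper takes: the paper gives no standalone proof of this lemma, but the row-wise decoupling observation (that the $s$-th rows of $T^i$, $L^i$, $\Gamma^i$ affect only the $s$-th rows of $T^iA^i-L^iC_2^i$ and $|T^i|\ol{F}_{\rho,x}$, so the LP minimizes each row sum of $\Acompx^i$ independently) is exactly the argument invoked in the paper's proof of Theorem~\ref{thm:cpdn_verif}. Your additional care in verifying tightness of the slacks $X^i$, $Z^i$, $Y^i$ at the optimum is a useful explicit step the paper leaves implicit.
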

Since the condition in~\eqref{eq:lp_condition} must hold at every node, it can be verified in a distributed manner. The entire verification procedure is summarized in
Algorithm~\ref{alg:init}. If the condition in Line~\ref{line:cpdn} is
false for any $i$, it implies that
Assumption~\ref{ass:col_pos_det_neigh} is not satisfied, so the
algorithm returns false. Otherwise, the algorithm returns the set
$\mathbb{J}_i$, which will be used to further optimize the observer
gains.

\begin{algorithm}
  \setstretch{1.1}
  \caption{CPDN verification at node $i$.}
  \label{alg:init}
  \begin{algorithmic}[1]
    \renewcommand{\algorithmicrequire}{\textbf{Input:}}
    \renewcommand{\algorithmicensure}{\textbf{Output:}}
    \Require $A$, $C^i$, $\mathcal{N}_i$;
    \textbf{Output:} $\mathbb{J}^i$
    \State Compute $L_*^{i}$, $\Gamma_*^{i}$, and $Z_*^{i}$ by solving the LP in \eqref{eq:stab_LMI}.
    \State $\mathbb{J}^i \gets \{s \ : \ \sum_{t=1}^n(Z_*^i)_{st} < 1\}$;
    \State $\mathcal{Q}_i \gets \{(I - \Gamma^i_*C_2^i) A^i - L_*^{i}C_2^{i}\}$;
    \State Receive $\mathcal{Q}_j$ from $j\in\mathcal{N}_i$;
    \State $\mathcal{Q}_i \gets \bigcup_{j\in\mathcal{N}_i} Q_j$;
    \If{$\forall s \in \{1,\dots,n\}$, $\exists P \in \mathcal{Q}_i$ s.t. $\|(P)_s\|_1 < 1$}\label{line:cpdn}
    \Return $\mathbb{J}^i$
    \Else \
    \Return false (i.e., Assumption~\ref{ass:col_pos_det_neigh} not satisfied)
    \EndIf
  \end{algorithmic}
\end{algorithm}

The following theorem formalizes the importance of the LP
\eqref{eq:stab_LMI} in designing a stable observer.
\begin{thm}
  \label{thm:cpdn_verif}
  Suppose Assumptions
  \ref{assumption:mix-lip}--\ref{ass:col_pos_det_neigh} hold. Then,
  the proposed distributed observer
  \eqref{eq:framers}--\eqref{eq:network_update_d} is C-ISS with the
  corresponding observer gains ${L}^{*,i}$, ${T}^{*,i}$, and
  ${\Gamma}^{*,i}$ that are solutions to \eqref{eq:stab_LMI}
\end{thm}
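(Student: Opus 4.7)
The plan is to invoke Theorem~\ref{thm:stability} by exhibiting a switching matrix $\sigma^x_* \in \Sigma^x$ for which $\sigma^x_* \Acompx$ is Schur stable when $\Acompx$ is built from the LP-optimal gains $T^i_*, L^i_*, \Gamma^i_*$. The C-ISS conclusion then follows immediately from that theorem. The heart of the argument is translating the CPDN assumption, which guarantees the existence of favourable gains somewhere in each neighborhood, into a concrete construction based on the LP solutions.

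First, I would observe that the LP in~\eqref{eq:stab_LMI} is \emph{row-separable}: row $s$ of each constraint involves only row $s$ of $T^i$, $L^i$, $\Gamma^i$, $X^i$, $Y^i$, $Z^i$ (the coupling equality $T^i = I_n - \Gamma^i C_2^i$ is itself a row-wise identity), and the objective $\sum_{s,t}(X^i_{st}+Y^i_{st})$ decomposes as a sum of independent row-sums. Since $X^i$ and $Y^i$ are componentwise upper bounds on $|T^i A^i - L^i C_2^i|$ and $|T^i|\ol{F}_{\rho,x}$, and these bounds are attained at equality by a feasible choice, the LP simultaneously minimizes $((X^i+Y^i)\mathbf{1})_s = (\Acompx^i\mathbf{1})_s$ over all admissible gains, independently for each row $s$.

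Consequently, the CPDN assumption transfers to the LP-optimal gains: for every $(i,s)$, the assumption supplies a neighbor $\nu_{is} \in \mathcal{N}_i$ and some gains realizing $(\Acompx^{\nu_{is}}\mathbf{1})_s < 1$; by the row-separable minimality, the LP-optimal gains at node $\nu_{is}$ satisfy the same strict inequality, i.e.\ $((\Acompx^{\nu_{is}})_*\mathbf{1})_s < 1$. I would then define $\sigma^x_* \in \{0,1\}^{Nn \times Nn}$ by setting $(\sigma^x_*)_{\id(i,s),\id(\nu_{is},s)}=1$ and all other entries of row $\id(i,s)$ to zero. Because $\nu_{is}\in\mathcal{N}_i$ and each row sums to one, $\sigma^x_* \in \Sigma^x$. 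Using the block-diagonal structure $\Acompx = \diag(\Acompx^1,\dots,\Acompx^N)$, row $\id(i,s)$ of $\sigma^x_*\Acompx$ coincides with row $s$ of $(\Acompx^{\nu_{is}})_*$ embedded in the columns indexed by node $\nu_{is}$. Thus its row sum is precisely $((\Acompx^{\nu_{is}})_*\mathbf{1})_s < 1$.

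Finally, $\Acompx$ is entrywise nonnegative (each block is a sum of absolute values and the nonnegative matrix $\ol{F}_{\rho,x}$), so $\sigma^x_*\Acompx \ge 0$. The Perron–Frobenius bound $\rho(M)\le \|M\|_\infty$ for nonnegative $M$ then yields $\rho(\sigma^x_*\Acompx) \le \|\sigma^x_*\Acompx\|_\infty < 1$, establishing Schur stability. Invoking Theorem~\ref{thm:stability} concludes that the noiseless comparison system is globally exponentially stable and that the observer~\eqref{eq:framers}--\eqref{eq:network_update_d} is C-ISS. I expect the main obstacle to be the row-separability justification for the LP — in particular, carefully checking that every constraint (including the $T^i = I_n - \Gamma^i C_2^i$ identity and the bounds on $|T^i|\ol{F}_{\rho,x}$) truly decouples across rows — since everything downstream is routine index bookkeeping through the $\id(\cdot,\cdot)$ encoding.
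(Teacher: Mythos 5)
Your proposal is correct and follows essentially the same route as the paper's own proof: construct $\sigma^x_*$ from the CPDN neighbors $\nu_{is}$, use the row-separability of the LP \eqref{eq:stab_LMI} to transfer the existence statement in Assumption~\ref{ass:col_pos_det_neigh} to the LP-optimal gains, and conclude Schur stability via $\rho(\sigma^x_*\Acompx)\le\|\sigma^x_*\Acompx\|_\infty<1$ before invoking Theorem~\ref{thm:stability}. Your explicit justification of the row-wise decoupling (including the constraint $T^i=I_n-\Gamma^iC_2^i$) is in fact slightly more careful than the paper's, which asserts the same fact more briefly.
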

\begin{proof} The proof is provided in Appendix \ref{sec:peas_proof}.
\end{proof}

\textbf{Error-Minimization:} After computing the sets
$\mathbb{J}_i$, each node can further optimize its gains to reduce the
overall observer error while maintaining the stability guarantees from
the previous section. Each node solves the MILP in
\eqref{eq:stab_min_LMI}, which as will be shown in Theorem
\ref{thm:suff_stability}, simultaneously \emph{guarantees stability}
and minimizes an upper bound on the observer error. In this way, the
design includes a sense of noise/error attenuation. To this end, we
first provide a preliminary result on how to calculate the proposed
observer steady state errors.
\begin{lem}[Error Bounds]
  \label{lem:errors}
  Suppose all the Assumptions in Theorem \ref{thm:cpdn_verif} hold and
  consider the proposed distributed observer in
  \eqref{eq:framers}--\eqref{eq:network_update_d} , where the observer
  gains $T^i,L^i,\Gamma^i$ are solutions to the LP in
  \eqref{eq:stab_LMI}. Then, for all
  $\sigma^x\in \Sigma^x,\sigma^d\in \Sigma^d$, the
  observer error sequences are upper bounded as follows:
  \begin{align}\label{eq:error_upper_bounds}
    \begin{split}
      \|e_{x,k}\|_{\infty} &\leq \|e_{x,0}\|_{\infty}\rho_*^k
                             + \frac{1-\rho_*^k}{1-\rho_*}
                             \max_{i}
                             \|\pi^i_x\|_\infty, \\
      \|e_{d,k}\|_\infty &\leq \rho(\Acompd)\|e_{x,k}\|_\infty
                           + \max_i \|\pi^i_d\|_\infty,
    \end{split}
  \end{align}
  where
    $\rho_* \triangleq \rho(\sigma\Acompx)$,
  and $\pi^i_x,\pi^i_d$ are given in Appendix \ref{sec:bound_mat}.
\end{lem}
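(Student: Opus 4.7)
The plan is to build the bound by iterating the switched comparison dynamics from Lemma \ref{lem:switched} and then taking the $\ell_\infty$-norm, leveraging the CPDN-driven stability result of Theorem \ref{thm:cpdn_verif}.

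First I would invoke Theorem \ref{thm:cpdn_verif}, which under Assumptions \ref{assumption:mix-lip}--\ref{ass:col_pos_det_neigh} guarantees the existence of gains $L^{*,i}, T^{*,i}, \Gamma^{*,i}$ obtained from \eqref{eq:stab_LMI} together with a switching matrix $\sigma^x_* \in \Sigma^x$ such that $\sigma^x_*\Acompx$ is Schur stable, i.e., $\rho_* \triangleq \rho(\sigma^x_*\Acompx) < 1$. The critical observation is that the \emph{actual} switching signal $\sigma^x_k$ in \eqref{eq:error-switched} is constructed via \eqref{eq:H} to select, row-wise, the neighbor producing the \emph{smallest} framer width. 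Consequently, for any admissible $\sigma \in \Sigma^x$, and in particular for the stabilizing $\sigma^x_*$, one has the row-wise inequality
\begin{align*}
e_{x,k+1} \leq \sigma^x_k(\Acompx e_{x,k} + \Bcompx \delta_\eta) \leq \sigma^x_*(\Acompx e_{x,k} + \Bcompx \delta_\eta),
\end{align*}
since each row of $\sigma^x_k$ achieves the per-row minimum. Since $\Acompx$ and $\Bcompx$ are non-negative by construction (they are built from absolute values in \eqref{eq:cent-constraint-last}), the inequality propagates under multiplication, so I can unroll the recursion to obtain
\begin{align*}
e_{x,k} \leq (\sigma^x_*\Acompx)^k e_{x,0} + \sum_{j=0}^{k-1}(\sigma^x_*\Acompx)^j \sigma^x_*\Bcompx \delta_\eta.
\end{align*}

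Next I would take the $\ell_\infty$-norm of both sides. Since $\sigma^x_*\Acompx$ is non-negative and Schur stable with spectral radius $\rho_*<1$, Perron--Frobenius theory supplies a positive eigenvector $v$ with $\sigma^x_*\Acompx v = \rho_* v$, which in turn yields an equivalent weighted infinity norm in which $\|\sigma^x_*\Acompx\| = \rho_*$. Applying this bound row-wise (and absorbing the equivalence constants into the node-wise quantities $\pi^i_x$ defined in Appendix \ref{sec:bound_mat}, which are precisely the per-agent images of $\Bcompx \delta_\eta$) gives
\begin{align*}
\|e_{x,k}\|_\infty \leq \rho_*^k \|e_{x,0}\|_\infty + \Big(\sum_{j=0}^{k-1}\rho_*^j\Big) \max_i \|\pi^i_x\|_\infty,
\end{align*}
and summing the geometric series produces exactly the first inequality of \eqref{eq:error_upper_bounds}. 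For the input bound, I would apply the second row of \eqref{eq:error-switched} directly: since $\sigma^d_k$ is a row-stochastic $\{0,1\}$ selection matrix and $\Acompd, \Bcompd$ are non-negative, taking $\ell_\infty$-norms and using the submultiplicativity of $\rho(\Acompd)$ together with the definition of $\pi^i_d$ yields the second inequality of \eqref{eq:error_upper_bounds}.

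The main technical hurdle is the second step: transferring the matrix recursion into a clean scalar geometric bound governed by $\rho_*$. A naïve use of $\|\sigma^x_*\Acompx\|_\infty$ would give a (possibly much larger) induced-norm constant rather than $\rho_*$; the fix is to work in the Perron-weighted infinity norm where the induced norm equals $\rho_*$ exactly, and then convert back to the standard $\ell_\infty$-norm, hiding the equivalence constants inside the definitions of $\pi^i_x$ and $\pi^i_d$ supplied in Appendix \ref{sec:bound_mat}. The remaining steps, namely unrolling the recursion, handling the row-wise minimum selection, and summing the geometric series, are routine once the Perron-Frobenius machinery is in place.
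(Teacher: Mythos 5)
Your overall skeleton matches the paper's: unroll the switched comparison recursion from Lemma \ref{lem:switched} for a fixed $\sigma^x\in\Sigma^x$ (using the fact that the actual selection \eqref{eq:H} is row-wise minimal, so any fixed admissible $\sigma^x$ upper-bounds the true error), bound the noise term by $\max_i\|\pi^i_x\|_\infty$, and sum the geometric series; the paper's proof in Appendix \ref{sec:error_bounds_proof} does exactly this and then appeals to sub-multiplicativity of norms and the triangle inequality.

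The genuine gap is in your Perron--Frobenius step. You propose to pass to a weighted $\ell_\infty$ norm in which the induced norm of $\sigma^x\Acompx$ equals $\rho_*$, and to ``absorb the equivalence constants'' into $\pi^i_x$ and $\pi^i_d$. This cannot work as stated: $\pi^i_x=|\Psi^i|\delta^i_\eta+|T^i|\ol{F}_{\rho,w}\delta_w$ is a fixed, explicitly defined quantity in Appendix \ref{sec:bound_mat}, not a free constant, and, more importantly, the initial-condition term $\|e_{x,0}\|_\infty\rho_*^k$ offers nothing to absorb a constant into. Converting between the Perron-weighted norm and the standard $\ell_\infty$ norm introduces a condition-number factor $\kappa=\max_s v_s/\min_s v_s\ge 1$ multiplying \emph{both} terms, so your argument establishes \eqref{eq:error_upper_bounds} only up to this multiplicative factor. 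A further technical obstruction is that $\sigma^x\Acompx$ is non-negative but need not be irreducible, so the Perron vector $v$ may have zero entries and the weighted norm may not even be defined; perturbing to a positive $v$ only yields an induced norm of $\rho_*+\epsilon$. The paper sidesteps all of this by working with the induced $\ell_\infty$ norm (the maximum row sum) directly: under the LP gains \eqref{eq:stab_LMI} and CPDN one has $\|\sigma^x\Acompx\|_\infty<1$, and sub-multiplicativity gives $\|(\sigma^x\Acompx)^k\|_\infty\le\|\sigma^x\Acompx\|_\infty^k$ with no extra constant. (Read literally with $\rho_*$ the spectral radius, the paper's own final step is likewise only justified with $\|\sigma^x\Acompx\|_\infty$ in place of $\rho_*$, since $\rho(\cdot)\le\|\cdot\|_\infty$ points the wrong way; the bound that sub-multiplicativity actually delivers uses the row-sum norm.) The same remark applies to your use of $\rho(\Acompd)$ in the input bound, which should likewise be the induced norm $\|\Acompd\|_\infty$ for the norm inequality to go through.
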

\begin{proof}
  The proof is given in Appendix \ref{sec:error_bounds_proof}.
\end{proof}
Now, equipped with the results in Lemma \ref{lem:errors}, we are ready
to formalize our next main results on how to tractably synthesize
stabilizing and error minimizing observer gains in a distributed
manner.
\begin{thm}[Distributed Optimal Gain Design]
  \label{thm:suff_stability}
  Suppose Assumptions
  \ref{assumption:mix-lip}--\ref{ass:col_pos_det_neigh} hold and
  $L^i_*$, $T^i_*$, and $\Gamma^i_*$ are solutions to the following
  MILP:
  \begin{align}\label{eq:stab_min_LMI}
    \begin{array}{rl}
      &\min\limits_{\{X^i,Y^i,Z^i,L^i,T^i,\Gamma^i\}}
        \|\Pi^i_w\delta_w+\Pi^i_v\delta^i_v\|_\infty \\
      & \mathrm{s.t.}
        \begin{cases}
          \sum_{t=1}^n X^i_{jt}+Y^i_{jt} < 1, \ \forall j \in \mathbb{J}^i,\\
          -X^i \leq T^i{A}^i-L^iC_2^i \leq X^i,\\
          -Z^i \leq T^i \leq Z^i,\\
          0 \leq Z^i\ol{F}_{\rho,x} \leq Y^i,\\
          T^i = I_n - \Gamma^iC_2^i,
        \end{cases}
    \end{array}
  \end{align}
  where
  \begin{align*}
    \Pi^i_w &\triangleq |T^iB^i|+|T^i|\ol{F}_{\rho,w},\\
    \Pi^i_v &\triangleq |T^i\Phi^iD^i_1+L^iD^i_2|+|(T^iG^i_2M^i_2+\Gamma^i)D^i_2|,
  \end{align*}
  and $\mathbb{J}^i$ is calculated using Algorithm \ref{alg:init}.  Then,
  the {\dsiso} algorithm, i.e., the proposed distributed recursive
  algorithm in \eqref{eq:framers}--\eqref{eq:network_update_d}, with
  the corresponding observer gains $L^i_*, T^i_*, \Gamma^i_*$
  constructs a C-ISS distributed input and state interval observer.

  Moreover, the steady state observer errors are guaranteed to be
  bounded as
  follows:
  \begin{align}\label{eq:ss_upperbounds}
    \begin{split}
      \|e_{x,k}\|_{\infty} &\leq \frac{1}{1-\rho_*}
                             \max_{i}
                             \|\pi^i_x\|_\infty, \\
      \|e_{d,k}\|_\infty &\leq \frac{\rho(\Acompd)}{1-\rho_*} \max_{i}
                           \|\pi^i_x\|_\infty + \max_i \|\pi^i_d\|_\infty,
    \end{split}
  \end{align}
  with $\rho_*$, $\Acompd,\pi^i_x$, and $\pi^i_d$ given in
  Lemma \ref{lem:errors} and Appendix \ref{sec:error_bounds_proof}.
\end{thm}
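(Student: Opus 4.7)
The plan is to establish the theorem in two stages: (i) demonstrate that the MILP solutions produce a Schur-stable switched system via Theorem~\ref{thm:stability}, which gives the C-ISS property; and (ii) derive the steady-state error bounds as a straightforward consequence of Lemma~\ref{lem:errors}.

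First, I would interpret the auxiliary matrices in the MILP~\eqref{eq:stab_min_LMI}. By the constraints, $X^i$ upper-bounds $|T^i A^i - L^i C_2^i|$ entrywise, $Z^i$ upper-bounds $|T^i|$ entrywise, and $Y^i$ upper-bounds $Z^i \ol{F}_{\rho,x}$; hence $X^i + Y^i$ dominates $\mathcal{A}^i_x = |T^i A^i - L^i C_2^i| + |T^i|\ol{F}_{\rho,x}$ row-wise. Consequently, for each $j \in \mathbb{J}^i$, the constraint $\sum_{t=1}^n (X^i_{jt}+Y^i_{jt}) < 1$ forces $\|(\mathcal{A}^i_x)_j\|_1 < 1$. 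Moreover, by Algorithm~\ref{alg:init}, the set $\mathbb{J}^i$ is precisely the set of state dimensions $s$ for which node $i$ can serve as the distinguished neighbor in Assumption~\ref{ass:col_pos_det_neigh}; thus, for every $(i,s)$, CPDN guarantees the existence of $\nu_{is}\in\mathcal{N}_i$ with $s\in\mathbb{J}^{\nu_{is}}$. I would then note that feasibility of~\eqref{eq:stab_min_LMI} follows from feasibility of~\eqref{eq:stab_LMI}, which is implied by CPDN, so the solutions $T^i_*,L^i_*,\Gamma^i_*$ are well-defined.

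Next, I would construct the stabilizing switching matrix explicitly. Define $\sigma_*^x \in \Sigma^x$ by setting $(\sigma_*^x)_{\id(i,s),\id(\nu_{is},s)} = 1$ for each $i\in\nodes$ and $s\in\{1,\dots,n\}$, with all other entries zero. This is well-defined since $\nu_{is}\in\mathcal{N}_i$ by construction. Then the $\id(i,s)$-th row of $\sigma_*^x\Acompx$ equals the $s$-th row of $\Acompx^{\nu_{is}}$, which by the previous step has $\ell_1$-norm strictly less than $1$. Since $\Acompx$ is nonnegative by construction, $\sigma_*^x\Acompx$ is nonnegative and satisfies $\|\sigma_*^x\Acompx\|_\infty < 1$, hence is Schur stable. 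By Theorem~\ref{thm:stability}, the proposed observer is C-ISS with the gains $L^i_*,T^i_*,\Gamma^i_*$.

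Finally, I would obtain the steady-state bounds in~\eqref{eq:ss_upperbounds} by invoking Lemma~\ref{lem:errors}. Since $\rho_* \triangleq \rho(\sigma_*^x\Acompx) < 1$ from the Schur-stability argument above, letting $k\to\infty$ in~\eqref{eq:error_upper_bounds} yields $\rho_*^k \to 0$ and $(1-\rho_*^k)/(1-\rho_*) \to 1/(1-\rho_*)$, immediately giving the first inequality in~\eqref{eq:ss_upperbounds}. Substituting this limit into the bound on $\|e_{d,k}\|_\infty$ yields the second inequality. The main subtlety of the proof, and the place I would be most careful, is the bookkeeping that links the MILP's $(X^i,Y^i,Z^i)$ constraints to the row-sum bounds on $\mathcal{A}^i_x$ and, in parallel, confirms that $\mathbb{J}^i$ as computed by Algorithm~\ref{alg:init} is consistent with the CPDN neighbor assignment $\nu_{is}$; once this is laid out clearly, the construction of $\sigma_*^x$ and the appeal to Theorem~\ref{thm:stability} and Lemma~\ref{lem:errors} follow in a routine manner.
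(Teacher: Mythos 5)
Your proposal is correct and follows essentially the same route as the paper: the paper's proof likewise argues feasibility of \eqref{eq:stab_min_LMI} from the construction of $\mathbb{J}^i$ via \eqref{eq:stab_LMI}, builds the stabilizing $\sigma^x_*$ exactly as you do (deferring the details to the proof of Theorem~\ref{thm:cpdn_verif}, which you have simply inlined), and then passes to the limit in Lemma~\ref{lem:errors} for the steady-state bounds. Your explicit bookkeeping showing that the MILP constraints force $\|(\mathcal{A}^i_x)_j\|_1<1$ for $j\in\mathbb{J}^i$ is a welcome elaboration of a step the paper leaves implicit, but it is not a different argument.
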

\begin{proof}
  The proof can be found in Appendix \ref{sec:error_min_proof}.
\end{proof}
We conclude this section with Algorithm~\ref{alg1}, which summarizes
the proposed distributed simultaneous input and state observer
(\dsiso), whose operation is the same regardless of which gain design
method is used.
\begin{algorithm}[]
  \setstretch{1.1}
  \caption{\dsiso \ at node $i$.}
  \label{alg1}
  \begin{algorithmic}[1]
    \renewcommand{\algorithmicrequire}{\textbf{Input:}}
    \renewcommand{\algorithmicensure}{\textbf{Output:}}
    \Require 
    $\ul{x}^i_0$, $\ol{x}^i_0$,; \textbf{Output:}
    $\{\ul{x}^i_{k},\ol{x}^i_{k},\ul{d}^i_{k},\ol{d}^i_{k}\}_{k\ge0}$;
    \State Compute $L^i$ $\Gamma^i$,
    and $T^i$ by solving \eqref{eq:cent-objective_2} or \eqref{eq:stab_min_LMI};
    \State $k \gets 1$
    \Loop
      \Statex {$\triangleright$ \textbf{State propagation and measurement update}}
      \State Compute $\ul{x}^{i,0}_{k}$ and
      $\ol{x}^{i,0}_{k}$ using \eqref{eq:framers};
      \Statex {$\triangleright$ \textbf{State framer network update}}
      \State Send $\ul{x}^{i,0}_{k}$ and
      $\ol{x}^{i,0}_{k}$ to $\{j \ : \ i \in \mathcal{N}_j\}$;
      \State Receive $\ul{x}^{j,0}_{k}$ and
      $\ol{x}^{j,0}_{k}$ from $j \in \mathcal{N}_i$;
      \State $\displaystyle\ul{x}^{i}_{k} \gets\,
      \max_{j\in\mathcal{N}_i}\ul{x}^{j,0}_{k}; \quad
      \ol{x}^{i}_{k} \gets\, \min_{j\in\mathcal{N}_i}\ol{x}^{j,0}_{k};$
      \Statex {$\triangleright$ \textbf{Input framer estimation}}
      \State Compute $\ul{d}^{i,0}_{k}$ and
      $\ol{d}^{i,0}_{k}$ using \eqref{eq:input_framers};
      \Statex {$\triangleright$ \textbf{Input framer network update}}
      \State Send $\ul{d}^{i,0}_{k}$ and
      $\ol{d}^{i,0}_{k}$ to $\{j \ : \ i \in \mathcal{N}_j\}$;
      \State Receive $\ul{d}^{j,0}_{k}$ and
      $\ol{d}^{j,0}_{k}$ from $j \in \mathcal{N}_i$;
      \State     $\displaystyle\ul{d}^{i}_{k} \gets\,
      \max_{j\in\mathcal{N}_i}\ul{d}^{j,0}_{k}; \quad
      \ol{d}^{i}_{k} \gets\, \min_{j\in\mathcal{N}_i}\ol{d}^{j,0}_{k};$
      \State $k \gets k+1$;
    \EndLoop

   \noindent \Return $\{\ul{x}^i_{k},\ol{x}^i_{k},\ul{d}^i_{k},\ol{d}^i_{k}\}_{k\ge0}$
  \end{algorithmic}
\end{algorithm}

\section{Nonlinear Observations \& Nonlinear Attacks}
It is noteworthy that System \eqref{eq:ind_system} can be easily
extended in several ways to cover much more general classes of
nonlinear dynamics, e.g., to include the case where \emph{different}
attack signals are injected onto the sensors and actuators as well as
the case where the attack signals compromise the system in a
\emph{nonlinear} manner. To illustrate this, consider the following
dynamical system:
\begin{align}\label{eq:ind_system_extension}
  \begin{split}
    x_{k+1}&=f(x_k,w_k)+\hat{G}{g}_k(x_k,d^s_k),\\
    y^i_k&=\sigma^i(x_k,v^i_k)+\hat{H}^i\chi_k(x_k,d^o_k), \ i \in \mathcal{V}, \ k \in \mathbb{Z}_{\geq 0},
  \end{split}
\end{align}
which is an extension of System \eqref{eq:ind_system},
$d^s_k \in \R^{p_s}$ and $d^o_k \in \R^{p_o}$ can be
interpreted as arbitrary (and \emph{different}) unknown inputs that
affect the state and observation equations through the known
\emph{nonlinear time-varying} vector fields
${g}_k:\R^n \times \R^{p_s} \to
\R^{n_{\hat{G}}}$ and
${\chi}_k:\R^n \times \R^{p_o} \to
\R^{n_{\hat{H}}}$, respectively. Moreover,
$\hat{G} \in \R^{n \times n_{\hat{G}}}$ and
$\hat{H}^i \in \R^{l^i \times n_{\hat{H}}}$ are known
time-invariant matrices.

On the other hand,
$\sigma^i:\R^n \times \R^{n^v_i} \to \R^l$ is
a known observation mapping for which we
consider two cases. \\
\emph{Case 1.} $\sigma^i(x,v)=Cx+{D}^iv$, i.e., $\sigma^i$ is linear in $x$ and $v$. \\
\emph{Case 2.} $\sigma^i$ is nonlinear with bounded interval domains,
i.e., there exist known intervals $\ol{\mathcal{X}}$ and
$\ol{\mathcal{V}}^i$ such that
$\mathcal{X} \subseteq \ol{\mathcal{X}} \subset \R^n$
and
$\mathcal{V}^i \subseteq \ol{\mathcal{V}}^i \subset
\R^{n^i_ v}$.

In the second case, we can apply our previously developed \emph{affine
  over-approximation (abstraction)} tools in reference
\cite{KRS-QS-SZY:18} to derive \emph{affine upper and lower
  over-approximations} for $\sigma^i$, using \cite[Theorem
1]{KRS-QS-SZY:18} and the linear program therein to obtain
$\ol{C}^i,\ul{C}^i,\ol{D}^i,\ul{D}^i,\ol{e}^i$ and $\ul{e}^i$ with
appropriate dimensions, such that for all $x_k \in \mathcal{X}$ and
${v}^i_k \in \mathcal{V}^i$:
\begin{align} \label{eq:abstraction1}
\ul{C}^ix_k +\ul{D}^i{v}^i_k+\ul{e}^i \leq \sigma^i(x_k,{v}^i_k) \leq \ol{C}^ix_k +\ol{D}^i{v}^i_k+\ol{e}^i,
\end{align}
Next, by taking the average of the upper and lower affine
approximations in \eqref{eq:abstraction1} and adding an additional
bounded disturbance/perturbation term $v^{a,i}_k$ (with its
$\infty$-norm being less than half of the maximum distance), it is
straightforward to reformulate the inequalities in
\eqref{eq:abstraction1} as the following equality:
\begin{align}
  \label{eq:ffine_observation}
  \sigma^i(x_k,{v}^i_k)= {C}^ix_k +{D}^i\hat{v}^i_k+{e}+v^{a,i}_k,
\end{align}
with $C^i \triangleq \frac{1}{2}(\ol{C}^i+\ul{C}^i)$,
${D}^i \triangleq \frac{1}{2}(\ol{D}^i+\ul{D}^i)$,
$e^i \triangleq \frac{1}{2}(\ol{e}^i+\ul{e}^i)$,
$\|v^{a,i}_k\|_{\infty} \leq \eta^i_{v^a} \triangleq \frac{1}{2}
\theta^i_*$, where $\theta^i_*$ is the solution to the LP in
\cite[Equation (16)]{KRS-QS-SZY:18}. In other words, the equality in
\eqref{eq:ffine_observation} is a ``redefinition" of the inequalities
in \eqref{eq:abstraction1}, which is obtained by adding the uncertain
noise $v^{a,i}_k$ to the midpoint (center) of the interval in
\eqref{eq:abstraction1} (i.e.,
${C}^ix_k +{D}^i{v}^i_k+{e}^i=\frac{1}{2}(\ul{C}^ix_k
+\ul{D}^i{v}^i_k+\ul{e}^i+\ol{C}^ix_k
+\ol{D}^i{v}^i_k+\ol{e}^i)$, to recover all possible
$\sigma^i(x_k,{v}^i_k)$ in the interval given by
\eqref{eq:ffine_observation}. In a nutshell, the above procedure
``approximates'' $ \sigma^i(x_k,{v}^i_k)$ with an appropriate linear
term and accounts for the ``approximation error" using an additional
disturbance/noise term.

Then, using \eqref{eq:ffine_observation}, the system in
\eqref{eq:ind_system_extension} can be rewritten as:
\begin{align}
  \label{eq:system2}
  \begin{array}{rl}
x_{k+1}&=f(x_k,w_k)+\hat{G}{g}_k(x_k,d^s_k), \\
    y^i_k&=C^ix_k+{D}^i_k{v}^i_k+\hat{H}^i{\chi}_k(x_k,d^o_k),i \in \mathcal{V}, k \in \mathbb{Z}_{\geq 0}. \end{array}
\end{align}
Now, courtesy of the fact that the unknown input signals $d^s_k$ and
$d^o_k$ in \eqref{eq:system2} can be completely arbitrary, by lumping
the nonlinear functions with the unknown inputs in \eqref{eq:system2}
into a newly defined unknown input signal
$d_k \triangleq \begin{bmatrix} g_k(x_k,d^s_k) \\
                  \chi_k(x_k,d^o_k)\end{bmatrix} \in \R^p$, as well as
  defining
  ${G}\triangleq \begin{bmatrix} \hat{G} & 0_{n \times
                                           n_{\hat{H}}}\end{bmatrix}$,
  ${H}^i\triangleq \begin{bmatrix} 0_{l^i \times n_{\hat{G}}} &
                                                                \hat{H}^i \end{bmatrix}$,
  we can equivalently transform system \eqref{eq:system2} to a new
  representation, precisely in the form of \eqref{eq:ind_system}.
\begin{rem}
  \label{rem1}

  From the discussion above, we can conclude that set-valued state and
  input observer designs for System \eqref{eq:ind_system} are also
  applicable to system \eqref{eq:ind_system_extension}, with the
  slight difference in input estimates that the latter returns
  set-valued estimates for
  ${d}_k \triangleq \begin{bmatrix} g_k(x_k,d^s_k) \\
                      \chi_k(x_k,d^o_k)\end{bmatrix}$, where we can
  apply any \emph{pre-image} set computation techniques in the
  literature such as reference \cite{CHN-FJW:98} to find set estimates
  for $d^s_k$ and $d^o_k$ using the set-valued estimate for $x_k$.
  \bulletend
\end{rem}
\begin{rem}
  Note that the case where the feedthrough matrix in System
  \eqref{eq:ind_system} is zero, i.e., $H^i=0$, as well as the case
  where the process and sensors in \eqref{eq:ind_system} are degraded
  by different attack (unknown input) signals, are both special cases
  of the system \eqref{eq:ind_system_extension}, where $\hat{H}^i=0$,
  ${g}_k,\chi_k$ are affine functions, respectively; thus, these cases
  can also be considered with our proposed framework.
  \bulletend
\end{rem}

\section{Illustrative Examples and Comparisons}
\subsection{Unicycle Target}
This scenario consists of a single target, modeled by a unicycle
dynamics which is controlled by an unknown agent. It is being tracked
by a network of $N=6$ agents with access to various measurements of
the position, bearing, and velocity. The goal of the agents is to
maintain consistent estimates of the target state and the unknown
control inputs. More concretely, the target has a state
$x \in \real^4$, representing the $(x,y)$ position, attitude, and
forward velocity, respectively. The state obeys the (discretized)
dynamics
\begin{align}
  \label{eq:uni-state}
  x_{k+1} = x_k + \Delta_t
  \begin{bmatrix}
    x_{4,k}\cos(x_{3,k}) + w_{1,k} \\
    x_{4,k}\sin(x_{3,k}) + w_{2,k} \\
    d_{1,k} \\ d_{2,k}
  \end{bmatrix}
\end{align}
with a time step of
$\Delta_t = 0.01$s. After performing the JSS decomposition, we
arrive at the values
\begin{gather*}
  A =
  \begin{bmatrix}
    1 & 0 & 0.01 & 0.01 \\
    0 & 1 & 0.01 & 0.01 \\
    1 & 0 & 1 & 0 \\
    1 & 0 & 0 & 1
  \end{bmatrix}, \
  B =
  \begin{bmatrix}
    0 & 0 \\
    0 & 0 \\
    0 & 0.01 \\
    0.01 & 0
  \end{bmatrix} \\
  \phi = 0.01
  \begin{bmatrix}
    x_4 \cos(x_3) - x_3 - x_4 \\
    x_4 \sin(x_3) - x_3 - x_4 \\
    0 \\
    0
  \end{bmatrix}
\end{gather*}

The communication network has a graph $\graph$ with Laplacian
\begin{gather*}
  \mathcal{L} = \begin{bmatrix}
    -3  &   1  &   0  &   1  &   1  &  0 \\
    1   & -2   &  1   &  0   &  0   &  0 \\
    0   &  1   & -3   &  1   &  0   &  1 \\
    1   &  0   &  0   & -3   &  1   &  1 \\
    0   &  1   &  0   &  1   & -2   &  0 \\
    0   &  1   &  0   &  1   &  1   & -3
  \end{bmatrix}.
\end{gather*}
Each agent has access to $l_i = 4$ measurements with randomly
generated $C^i$ matrices. The measurement noise bounds are uniformly
randomly generated on the interval $[0, 0.02]$. The measurements are
rounded to the second digit, representing an quantization error that
introduces an additional $\pm 0.005$ of measurement noise. Finally,
there is an additive noise $w_k \in \real^2$ that affects $x_1$ and $x_2$.
It satisfies $w_k \in [-10,10]\times[-10,10]$ and is used to model
slipping and random perturbations from the environment.

We design the gain matrices $L^i$ and $\Gamma^i$ using the MILP
defined in Theorem~\ref{thm:milp}. Since this MILP is feasible, we can
guarantee the observer estimates will remain bounded, as shown in
Lemma~\ref{lem:opt_cent}. The solution takes 30 seconds.

Figure~\ref{fig:uni-x} shows the resulting state framers from every
agent in the network. All agents are able to maintain a tight estimate
of the target states, with close agreement. Evidently some agents are
able to obtain slightly better estimates due to the variation in
measurement noise.
\begin{figure}[]
  \centering
  \includegraphics[width=\columnwidth]{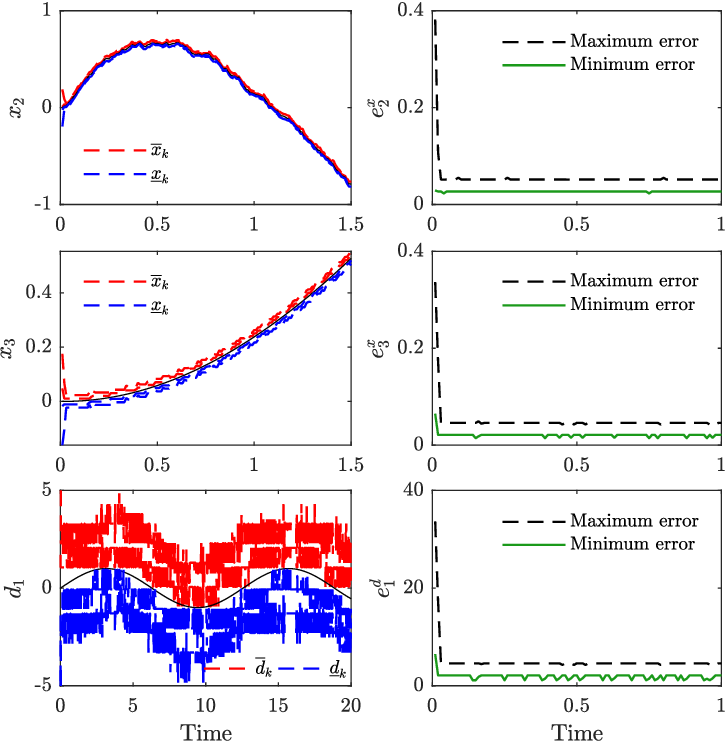}
  \caption{Framers and estimation errors for $x_2$ and $x_3$. The
    framer plots (left column) show the estimates from the worst and
    best performing agents, with upper bounds in red and lower bounds
    in blue. The error plots (right column) show the error from the
    worst performing agent in a black dashed line and the best
    performing agent in solid green.}
  \label{fig:uni-x}
\end{figure}

We conclude this example by comparing our observer with a recent
linear distributed interval observer \cite{XW-HS-FZ-GC:23} on the task
of estimating the attitude and angular velocity of the unicycle
target. For our observer, using the model described above, this means
estimating $\theta$ and $d_1$. Because the observer in
\cite{XW-HS-FZ-GC:23} is designed for linear systems and does not
handle unknown inputs, we cannot use the full unicycle model. Instead,
we adapt the attitude model, observer, and gains reported in
\cite[Section V]{XW-HS-FZ-GC:23} to apply to a single target. The
resulting model is
\begin{gather}
  \begin{bmatrix}\dot\theta \\ \ddot\theta\end{bmatrix} =
  \begin{bmatrix}0 & 1 \\ 0 & 0 \end{bmatrix}
  \begin{bmatrix}\theta \\ \dot\theta \end{bmatrix} +
  \begin{bmatrix} 0 \\ \phi \end{bmatrix}.
\end{gather}
Using this model, the goal is to estimate both $\theta$ and
$\dot\theta$, which is equivalent to estimating $x_3$ and $d_1$ in
\eqref{eq:uni-state}. Each agent has access to $y_i = \theta + \psi$, with the
same noise model as described above, and the communication graph
remains the same. It is important to note that this approach requires
an estimated bound on $\phi$, meaning a bound on the derivative of the
unknown input $d_1$. This information is not required in our
method. We use a conservative bound of $\phi \in [-2,2]$.

\begin{figure}[]
  \centering
  \includegraphics[width=\columnwidth]{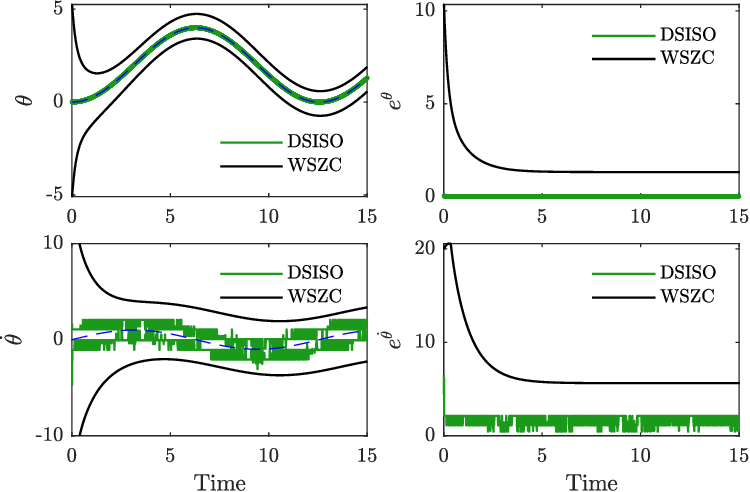}
  \caption{Framers and estimation errors for $\theta$ and
    $\dot\theta$, comparing our approach ($\dsiso$) with the observer from
    \cite{XW-HS-FZ-GC:23} (labeled WSZC).}
  \label{fig:uni-comp}
\end{figure}

Figure~\ref{fig:uni-comp} shows the results of the observer from
\cite{XW-HS-FZ-GC:23}.  Our method quickly obtains a much tighter
interval estimate of $\theta$ and attains much better estimation
performance on $\dot\theta$ ($d_1$), despite not knowing any prior
bounds. This difference is presumably due to the fact that our method
is able to incorporate the full nonlinear model into the observer,
rather than relying on the simplified linear model for the attitude
dynamics. It also highlights the importance of our gain design
procedure, which minimizes the resulting interval width.

\subsection{Power System}
\label{sec:power}
In this scenario we demonstrate the {\dsiso} algorithm on IEEE
145-bus, 50 generator dynamic test case \cite{UW:93}. We use the
effective network (EN) model \cite{TN-AEM:15} to model the dynamics of
the generators. A description of the model is beyond the scope of this
paper; for the specific parameters and equations used in our
simulation we refer the reader to reference \cite{TN-AEM:15} and the
MATLAB toolbox mentioned therein. The resulting continuous-time model
is discretized using the explicit midpoint method, to obtain equations
of the form \eqref{eq:ind_system}.  The $n=100$ dimensional state
$x_k^\top = \begin{bmatrix}\delta_k^\top &
  \omega_k^\top\end{bmatrix}^\top$ represents the rotor angle and
frequency of each of the 50 generators. Each bus in the test case
corresponds to a node in the algorithm, and we assume that the
communication network has the same topology as the power network. The
noise signals satisfy $\|w_k\|_\infty < 5$ and
$\|v^i_k\| < 1\times 10^{-4}$ $\forall i \in \nodes$.  Similarly to
the example in \cite{FP-FD-FB:12a}, each node measures its own real
power injection/consumption, the real power flow across all branches
connected to the node, and for generating nodes, the rotor angle of
the associated generator.

In this example, we assume that the generator at bus 60 is insecure
and potentially subject to attacks affecting the generator
frequency. Due to the reduction that takes place in the EN model
\cite{TN-AEM:15}, the disturbance appears additively in the
representative dynamics of all nodes, resulting in a $G$ matrix with
all non-zero entries. Due to the large system dimension and large
number of nodes, solving the MILP described in Theorem~\ref{thm:milp}
is intractable. Instead, we use Algorithm~\ref{alg:init} to verify
that Assumption~\ref{ass:col_pos_det_neigh} holds and compute
stabilizing but suboptimal observer gains. The computation takes an
average of $1.7 \pm 0.4$ (standard deviation) seconds per agent.

\begin{figure}[]
  \centering
  \includegraphics[width=\columnwidth]{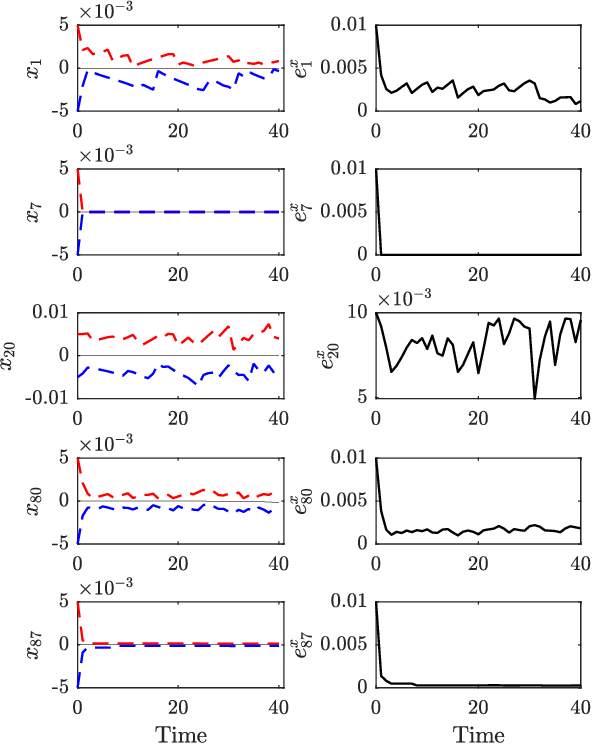}
  \caption{State framers (upper bound in red, lower bound in blue),
    \mo{as well as errors} for selected state dimensions for the power
    system example. Only the minimum error is plotted.}
  \label{fig:power_x}
\end{figure}
\begin{figure}[]
  \centering
  \includegraphics[width=\columnwidth]{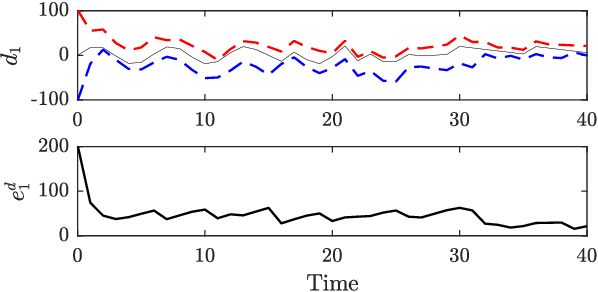}
  \caption{Input framer and framer error the frequency disturbance at
    generator 60. Only the minimum error is plotted.}
  \label{fig:power_d}
\end{figure}
Figures \ref{fig:power_x} and \ref{fig:power_d} show the input and state framers
for selected dimensions, respectively. It is clear that the algorithm
is able to estimate the state $x_1$ despite the disturbance with only
minor performance degradation. The switching due to
\eqref{eq:network_update}, which depends on the noise, is also
evident. The estimation performance for the other states is
comparatively better, since they are only affected by (known) bounded
noise. Further, all agents can maintain an accurate
estimate of the disturbance.

\section{Conclusion and Future Work}
A novel recursive distributed algorithm comprising four steps was introduced in this paper, with the objective of synthesizing input and state interval observers for nonlinear bounded-error discrete-time multi-agent systems. The systems under consideration were equipped with sensors and actuators that were susceptible to adversarial unknown disturbance signals, for which no information regarding their bounds, energy, distribution, etc., was available. The interval-valued estimates computed were ensured to encompass the true value of the states and unknown inputs. Furthermore, verifiable conditions for the stability of the proposed observer were established through two alternative approaches, both of which were shown to minimize a calculated upper bound for the interval widths of observer errors. The observer design was characterized as tractable and computationally efficient, rendering it a valuable approach to address these challenging estimation scenarios. This was demonstrated through simulations and comparisons with some benchmark observers.

Future work considers other types of adversarial signals such as communication and linkage attacks and eavesdropping malicious agents, as well as (partially) unknown dynamics.

\bibliographystyle{plain}
\bibliography{alias,SM,SMD-add}

\appendices
\appendix
\subsection{Matrices \& Parameters}\label{sec:matrices}

\subsubsection{\textbf{Matrices in Lemma \ref{lem:equiv} and its proof in Appendix \ref{sec:lem1_proof}}}
\label{sec:lem1_mat}
\begin{gather*}
  M^i_1 \triangleq (\Xi^i)^{-1}, \quad
  \eta^i_{k+1} \triangleq
                 \begin{bmatrix}
                   (w_k)^\top & (v^i_k)^\top & (v^i_{k+1})^\top
                 \end{bmatrix}^\top, \\
  M^i_2 \triangleq (C^i_2G^i_2)^\dagger, \quad
          \Phi^i \triangleq (I-G^i_2M^i_2C^i_2)G^i_1M^i_1, \\
  \zeta^i_{k+1} \triangleq T^i\Phi^iz^i_{1,k}
                  + L^iz^i_{2,k}
                  + (T^iG^i_2M^i_2 + \Gamma^i)z^i_{2,k+1}, \\
  \Psi^i \triangleq
           \begin{bmatrix}
             T^iB^i & -(T^i\Phi^iD^i_1+L^iD^i_2) & -(T^iG^i_2M^i_2+\Gamma^i)D^i_2
           \end{bmatrix}.
\end{gather*}

\subsubsection{\textbf{Matrices in Equation \eqref{eq:framers}}}
\label{sec:framer_mat}
\begin{align*}
  \tilde{A}^i &\triangleq T^iA^i-L^iC^i_2, \quad
  \mathtt{\tilde{A}}^i
  \triangleq \begin{bmatrix}
                (\tilde{A}^i)^\oplus & -(\tilde{A}^i)^\ominus \\
                -(\tilde{A}^i)^\ominus & (\tilde{A}^i)^\oplus
              \end{bmatrix}, \\
    \mathtt{T}^i
    &\triangleq \begin{bmatrix}
                 (T^i)^\oplus & -(T^i)^\ominus \\
                 -(T^i)^\ominus & (T^i)^\oplus
               \end{bmatrix}, \
  \ol{\eta}^i
  \triangleq \begin{bmatrix}
                \ol{w}^\top & (\ol{v}^i)^\top & (\ol{v}^i)^\top
              \end{bmatrix}^\top, \\
    \ul{\eta}^i
    &\triangleq \begin{bmatrix}
                 \ul{w}^\top & (\ul{v}^i)^\top & (\ul{v}^i)^\top
               \end{bmatrix}^\top, \
  \mathtt{\Psi}^i
  \triangleq \begin{bmatrix}
                (\Psi^i)^\oplus & -(\Psi^i)^\ominus \\
                -(\Psi^i)^\ominus & (\Psi^i)^\oplus
              \end{bmatrix}.
\end{align*}

\subsubsection{\textbf{Matrices in Equation \eqref{eq:d_x}}}
\label{sec:input_mat}
\begin{align*}
  \Upsilon^i &\triangleq (V^i_2M^i_2C^i_2G^i_1-V^i_1)M^i_1, \quad
  \Theta^i \triangleq -V^i_2M^i_2,\\
  \zeta^i_{d,k+1} &\triangleq  \Theta^i z^i_{2,k+1}-\Upsilon^i z^i_{1,k}, \quad
  \Lambda^i \triangleq \begin{bmatrix} C^i_h & \Upsilon^iD^i_1 & \Theta^iD^i_2 \end{bmatrix}.
\end{align*}

\subsubsection{\textbf{Matrices in Equation \eqref{eq:input_framers}}}
\label{sec:input_mat2}
\begin{align*}
  \mathtt{A}^i_h \triangleq
  \begin{bmatrix}
    (A^i_h)^\oplus & -(A^i_h)^\ominus \\
    -(A^i_h)^\ominus & (A^i_h)^\oplus
  \end{bmatrix}, \quad
  \mathtt{\Lambda}^i \triangleq
  \begin{bmatrix}
    (\Lambda^i)^\oplus & -(\Lambda^i)^\ominus \\
    -(\Lambda^i)^\ominus & (\Lambda^i)^\oplus
  \end{bmatrix}.
\end{align*}

\subsubsection{\textbf{Matrices in Lemma \ref{lem:switched}}}
\label{sec:error_mat}
\begin{gather*}
  \Acompx^i \triangleq |\tilde{A}^i| + |T^i|\ol{F}_{\rho,x}, \quad
  \Acompd^i \triangleq |{A}^i_h|^ + \ol{F}_{\mu,x}^i, \\
  \Bcompx^i \triangleq |\Psi^i| +
  \begin{bmatrix} |T^i|\ol{F}_{\rho,w} & 0 & 0 \end{bmatrix}, \\
  \Bcompd^i \triangleq |\Lambda^i| +
  \begin{bmatrix} \ol{F}_{\mu,w}^i & 0 & 0 \end{bmatrix}, \\
  \ol{F}_{\rho,x} \triangleq (\ol{J}^f_x)^\oplus - (\ul{J}^f_x)^\ominus, \\
  \ol{F}_{\mu,x}^i \triangleq (\Theta^i C_2^i \ol{J}^f_x)^\oplus - (\Theta^i C_2^i \ul{J}^f_x)^\ominus,
\end{gather*}
where recall that $\ol{J}^f_x, \ul{J}^f_x$ are the Jacobians in Assumption~\ref{assumption:mix-lip}.


\subsubsection{\textbf{Matrices in Equation \eqref{eq:error_upper_bounds}}}
\label{sec:bound_mat}
\begin{align*}
  \pi^i_x &\triangleq |\Psi^i|\delta^i_{\eta}+|T^i|\ol{F}_{\rho,w}\delta_w, \\
  \pi^i_d &\triangleq |\Lambda^i|\delta^i_{\eta}+|T^i|\ol{F}_{\mu,w}^i\delta_w, \\
  \delta^i_{\eta} &\triangleq \ol{\eta}^i-\ul{\eta}^i, \quad
  \delta_{w} \triangleq \ol{w}-\ul{w}, \\
  \ol{F}_{\rho,w} &\triangleq (\ol{J}^f_w)^\oplus - (\ul{J}^f_w)^\ominus, \\
  \ol{F}_{\mu,w}^i &\triangleq (\Theta^i C_2^i \ol{J}^f_w)^\oplus - (\Theta^i C_2^i \ul{J}^f_w)^\ominus.
\end{align*}


\subsection{Proof of Lemma \ref{lem:equiv}}\label{sec:lem1_proof}
{First,} note that from \eqref{eq:z1k} and with $M^i_1 \triangleq (\Xi^i)^{-1}$, $d^i_{1,k}$ can be computed as a function of the current time state as in \eqref{eq:di1}.
This, in combination with \eqref{eq:stateq} and
\eqref{eq:z2k} results in
\begin{align*}
  M^i_2z^i_{2,k+1} &= M^i_2(C_2^ix_{k+1}+D_2^iv^i_{k+1}) \\
                   &=  M^i_2(C_2^i(f(x_k,w_k)
                     + G^i_1(M^i_1(z^i_{1,k} \\
                   &\quad - C_1^ix_k - D_1^iv^i_{k})
                     + G^i_2d^i_{2,k})
                     + D_2^iv^i_{k+1}),
\end{align*}
where $M^i_2$ is defined in Appendix~\ref{sec:lem1_mat}, which given Assumption~\ref{ass:rank}, returns \eqref{eq:di2}.

By plugging $d_{1,k}^i$ and $d_{2,k}^i$ from \eqref{eq:di1} and
\eqref{eq:di2} into \eqref{eq:stateq}, we have
\begin{align}
  \label{eq:stateqq}
  \begin{array}{rl}
    x_{k+1} &= {f}^i(x_k,w_k)+\Phi^i(z^i_{1,k}-D^i_1v^i_k)\\
            &+G^i_2M^i_2(z^i_{2,k+1}-D^i_2v^i_{k+1}),
\end{array}
\end{align}
where $\Phi^i$ is defined in Appendix~\ref{sec:lem1_mat} and
\begin{gather*}
  {f}^i(x,w) \triangleq f(x,w)-\Phi^i C^i_1x.
\end{gather*}
Combined with the fact that $T^i =I-\Gamma^iC^i_2$, this implies
\begin{align}
  \label{eq:interm_1}
  x_{k+1} = T^i({f}^i(x_k,w_k)+\hat{z}^i_{k+1}+\hat{v}^i_{k+1})
  + \Gamma^iC_2^ix_{k+1},
\end{align}
where
\begin{align*}
  \begin{array}{rl}
\hat{z}^i_{k+1} &\triangleq \Phi^iz^i_{1,k}+G^i_2M^i_2z^i_{2,k+1},\\
    \hat{v}^i_{k+1} &\triangleq -(\Phi^iD^i_1v^i_k+G^i_2M^i_2D^i_2v^i_{k+1}).
  \end{array}
\end{align*}
Applying the JSS decomposition described
in Proposition \ref{prop:JSS_decomp} to the vector field $f^i$, there
are matrices $A^i, B^i$ and a remainder vector field $\rho^i(x,w)$,
that allow us to decompose $f^i$ as:
\begin{align*}
  f^i(x,w)=A^ix+B^iw+\rho^i(x,w).
\end{align*}
Now, plugging in $C^i_2x_{k+1}=z^i_{2,k+1}-D^i_2v^i_{k+1}$ from
\eqref{eq:z2k} into \eqref{eq:interm_1},
adding the \emph{zero term}
$ L^i(z^i_{2,k}-{C^i_2x_k -D^i_2v^i_{2,k}})=0$ to both sides of
\eqref{eq:interm_1}, and employing the previous JSS
decomposition in the same expression, returns the results in
\eqref{eq:gamma_dyn_2}. \qed

\subsection{Proof of Lemma \ref{lem:switched}}\label{sec:lem4_proof}
Our starting point is equation~\eqref{eq:framers}, and recall the expression of the matrices in Appendix~\ref{sec:error_mat}. First, by Proposition~\ref{prop:tight_decomp},
\begin{gather*}
  \rho^i_d(\ol{x}_k,\ol{w},\ul{x}_k,\ul{w})
  - \rho^i_d(\ul{x}_k,\ul{w},\ol{x}_k,\ol{w})
  \leq \ol{F}_{\rho,x}e_{x,k}
  + \ol{F}_{\rho,w}\delta_w.
\end{gather*}
By subtracting the
top and bottom expressions in \eqref{eq:framers}, and grouping terms in the resulting equation, we conclude that
\begin{align}\label{eq:errorx_1}
  e^0_{x,k+1} \leq \Acompx e_{x,k} + \gamma^x_k,
\end{align}
for some appropriate variables $\gamma^x_k$.
Further, by the construction of $\sigma^x_k$, applying the $\min$
and $\max$ operations in \eqref{eq:network_update}, the state errors can be
equivalently represented as
\begin{align}\label{eq:errorx_2}
  e_{x,k} = \sigma^x_k e^0_{x,k}.
\end{align}
In a similar manner, subtracting the top and bottom of
\eqref{eq:input_framers}, as well as bounding the nonlinear terms as
above (after replacing $\rho$ with $\mu$), yields
\begin{align}
  \label{eq:errord_1}
  e^0_{d,k} \leq \Acompd e_{x,k} +
  \gamma^d_k,
\end{align}
for some $\gamma^d_k$, while applying the $\min$ and $\max$ operations in
\eqref{eq:network_update_d} returns
\begin{align}\label{eq:errord_2}
  e^d_{x,k} = \sigma^d_k e^0_{d,k}.
\end{align}
Combining \eqref{eq:errorx_1}--\eqref{eq:errord_2} yields
\eqref{eq:error-switched}.\qed

\subsection{Proof of Theorem \ref{thm:stability}}
\label{sec:Thm1_proof}
We first prove sufficiency and then necessity.

As for the sufficiency, assume there is a $\sigma_*^x \in \Sigma^x$ such that $\sigma^x_*\Acompx$ is Schur
stable.  Consider the comparison system
$\tilde e_{x,k+1} = \sigma^x_*\Acompx \tilde e_{x,k}$ with
initial condition $\tilde e_{x,0} = e_{x,0}$.  By the construction of
$\sigma^x_k$ in \eqref{eq:H}, it holds that
$\sigma^x_*\Acompx e_{x,k} \ge
\sigma^x_k\Acompx e_{x,k}$,
$\tilde e_{x,k} \ge e_{x,k} \ge 0$ for all $k \ge 0$ by induction.
Therefore, by the comparison lemma, \eqref{eq:error-switched} is globally
exponentially stable.
To prove necessity, assume that~\eqref{eq:error-switched} is asymptotically stable. However, this is the case only if the lower spectral radius of
$\mathcal{F}$ is less than 1. By Proposition
\ref{lem:lower-spectral-radius}, this implies existence of a stable
$F_*= \sigma^x_*\Acompx$.

Finally, having studied stability of the noise-free system, we now
study the C-ISS property of the noisy system in
\eqref{eq:error-switched}.  As before, we can use the comparison
system
\begin{align}\label{eq:iss-comp}
  \tilde{e}_{x,k+1} = \sigma^x_*(\Acompx\tilde{e}_{x,k} +
  \gamma^x_k), \quad \tilde e_{x,0} = e_{x,0}.
\end{align}
It is well known that stable LTI systems are ISS
\cite{EDS:04}. Again, \eqref{eq:H} guarantees
$\tilde e_{x,k} \ge e_{x,k} \ge 0 \ \forall k \ge 0$ by induction,
regardless of the values of the bounded augmented noise
$\gamma^x_k$. By this comparison, the C-ISS property of the system
\eqref{eq:iss-comp} implies that \eqref{eq:error-switched} is
C-ISS.\qed

\subsection{Proof of Lemma~\ref{lem:opt_cent}}
\label{sec:opt-cent}
By \cite[Proposition 1]{XC-JL-PL-ZS:13}, $\rho(\sigma\Acompx) < 1$ if and
only if there exists $p > 0$ such that
$p^\top(\sigma \Acompx -I) < 0$. Using Theorem~\ref{thm:stability},
this implies that \eqref{eq:error-switched} is ISS.

The bound \eqref{eq:error-bound} follows directly from \cite[Theorem
2]{XC-JL-PL-ZS:13}.

\subsection{Proof of Theorem~\ref{thm:milp}}
\label{sec:milp}
First, we introduce a diagonal matrix $Q \in \real^{Nn \times Nn}$ so
that $p = Q \mathbf{1}_{Nn}$. Then we introduce the modified decision
variables $\tilde{L} = QL$, $\tilde{\Gamma} = Q\Gamma$, and
$\tilde{T} = QT$. These give rise to the new matrices
$\tilde{\mathcal{A}} = Q\mathcal{A}$ and
$\tilde{\mathcal{B}} = Q\mathcal{B}$. Next, we can rewrite the
nonlinear terms containing $\sigma \tilde{\mathcal{A}}$ and $\sigma \tilde{\mathcal{B}}$ using
the so-called ``big-$M$'' formulation \cite{JNH:12} to see that
$\sigma\tilde{\mathcal{A}} = \mathbf{A}$ if and only if for all $i \in \nodes$ and
all $j \in \mathcal{N}_i$,
\begin{gather*}
    -(I-\sigma_{ij})M \le \mathbf{A}_{ij} - \Acompx^j \le (I-\sigma_{ij})M \\
    \text{and } -\sigma_{ij} M \le \mathbf{A}_{ij} - \Acompx^j \le \sigma_{ij} M,
\end{gather*}
as long as $M > \max_{i,j} (\tilde{\Acompx})_{ij}$.
In the same way we see that $\sigma\tilde{\mathcal{B}} = \mathbf{B}$ iff for all $i \in \nodes$ and
all $j \in \mathcal{N}_i$,
\begin{gather*}
    -(I-\sigma_{ij})M \le \mathbf{B}_{ij} - \tilde{\mathcal{B}}^j \le (I-\sigma_{ij})M \\
    \text{and } -\sigma_{ij} M \le \mathbf{B}_{ij} - \tilde{\mathcal{B}}^j \le \sigma_{ij} M,
\end{gather*}
as long as $M > \max_{i,j} (\tilde{\mathcal{B}})_{ij}$. Combining all
these transformations and requiring that
$M >\max (\max_{i,j} (\tilde{\mathcal{A}})_{ij},\max_{i,j}
(\tilde{\mathcal{B}})_{ij})$ ensures a one-to-one correspondence
between the original constraints in \eqref{eq:cent-objective} and the
MILP formulation in \eqref{eq:cent-objective_2}-\eqref{eq:mixed_integer}.
\qed


\subsection{Proof of Theorem \ref{thm:cpdn_verif}}\label{sec:peas_proof}
We will construct $\sigma^x_*$, which by Theorem \ref{thm:stability}
is sufficient for the C-ISS property to hold. For each node
$i\in\nodes$ and state dimension $s \in \{1,\dots,n\}$, using
$\nu_{is}$ from Assumption \ref{ass:col_pos_det_neigh},
\begin{align*}
    (\sigma^x_*)_{\id(i,s),\id(\nu_{is},s)} = 1, \  \
\end{align*}
and all other entries are zero. Since $\nu_{is} \in \mathcal{N}_i$,
$\sigma^x_*$ is a member of $\Sigma_x$ by construction. With
$\sigma^x_*$ defined as such, row $\id(i,s)$ of $\sigma^x_*\Acompx$ is
equal to row $\id (\nu_{is},s)$ of $\Acompx$
(cf. Lemma~\ref{lem:switched}). From the definition of
$\Acompx^i = |\tilde{A}^i| + |T^i|\ol{F}_{\rho,x}$
it is clear that
$\|(\tilde A^i)_s\|_1+\| (|T^i| \ol{F}_{\rho,x})_s\|_1$ =
$\|(\mathcal{A}^i_x)_s\|_1$.

Note that the gains $T^i$ and $L^i$ are computed by
\eqref{eq:stab_LMI}, which independently minimize the sum of the
$1$-norm of each row of $\tilde A^{i}$ and the $1$-norm of the same
row of $|T^i| \ol{F}_{\rho,x}$, since the ${s}^\text{th}$ rows
of $T^i$ and $L^i$ only affect the ${s}^\text{th}$ row of
$\tilde A^i \triangleq T^iA-L^iC_2^i$, as well as the ${s}^\text{th}$
row of $|T^i| \ol{F}_{\rho,x}$. Moreover, Assumption
\ref{ass:col_pos_det_neigh} guarantees
$\|(\Acompx^{\nu_{is}})_s\|_1 < 1$ for each $s$. All of this
implies
$ \|(\sigma^x_*\Acompx)_{\id (i,s)}\|_1 < 1$.  Since this
holds for every row of the matrix $\sigma^x_*\mathcal{A}^x$, then
$\rho(\sigma^x_*\Acompx) \leq
\|\sigma^x_*\Acompx\|_\infty \triangleq \max\limits_{1\leq
  i \leq nN} \sum_{s=1}^{nN} |(\sigma^x_*\Acompx)_{ij}|<
1$.\qed

\subsection{Proof of Lemma \ref{lem:errors}}
\label{sec:error_bounds_proof}
Starting from the error dynamics \eqref{eq:error-switched}, and given
the stability of the observer that is guaranteed by
\eqref{eq:stab_LMI} (cf. Theorem \ref{thm:cpdn_verif}), for any
$\sigma^x\in \Sigma^x,\sigma^d\in \Sigma^d$, the
framer error dynamics can be bounded as follows:
\begin{align*}
  e_{x,k+1}\leq \sigma^x(\Acompx e_{x,k} +
  \gamma^x_k), \
  e_{d,k} \leq  \sigma^d(\Acompd e_{x,k} + \gamma^d_{k}).
\end{align*}
Then, it follows from the solution of the above system that:
\begin{align}\label{eq:comparison_bounding}
    e_{x,k} \leq (\sigma^x\Acompx)^{k-1} e_{x,0}
    + \sum_{j=1}^{k-1}(\sigma^x\Acompx)^{k-j}\gamma^x_{j-1}.
\end{align}
Further, leveraging the noise bounds, we obtain:
\begin{align}\label{noise_x_bound}
  \begin{array}{rl}
    \|\gamma^x_k\|_{\infty} &\leq
                              \max_{i}\||\Psi^i|\delta^i_{\eta}+|T^i|\ol{F}_{\rho,w}\delta_w\|_{\infty},\\
    \|\gamma^d_k\|_{\infty} &\leq
                              \max_{i}\||\Lambda^i|\delta^i_{\eta}+\ol{F}_{\mu,w}^i\delta_w\|_{\infty},
  \end{array}
\end{align}
where
\begin{align*}
  \delta^i_{\eta} \triangleq \ol{\eta}^i -
  \ul{\eta}^i, \  \delta_{w} \triangleq \ol{w} -
  \ul{w}.
\end{align*}
The results follow from \eqref{eq:comparison_bounding},
\eqref{noise_x_bound}, sub-multiplicativity of norms and the triangle
inequality.\qed

\subsection{Proof of Theorem \ref{thm:suff_stability}}
\label{sec:error_min_proof}
Assumption \ref{ass:col_pos_det_neigh} implies the existence of gains
that render the {\dsiso} algorithm C-ISS.  It remains to show that the
solutions of \eqref{eq:stab_min_LMI} are stabilizing.  First, notice
that Algorithm \ref{alg:init} computes $\mathbb{J}^i$ {by solving
  \eqref{eq:stab_LMI}.}  The use of $\mathbb{J}^i$ in the constraints
of \eqref{eq:stab_min_LMI} guarantees that the optimization problem is
feasible. Furthermore, we can show that since
Assumption~\ref{ass:col_pos_det_neigh} holds, there exists
$\sigma^x_*$ such that $\rho(\sigma^x_*\Acompx) < 1$, and therefore
that the {\dsiso} algorithm is ISS. We refer the reader to Theorem
\ref{thm:cpdn_verif} for the details of the construction of
$\sigma^x_*$. This in combination with Lemma \ref{lem:errors} ensures
that the bounds in \eqref{eq:error_upper_bounds} converge to their
steady state values in \eqref{eq:ss_upperbounds}.\qed

 {\tiny
\begin{IEEEbiography}
  [{\includegraphics[width=1in,height=1.3in,keepaspectratio]{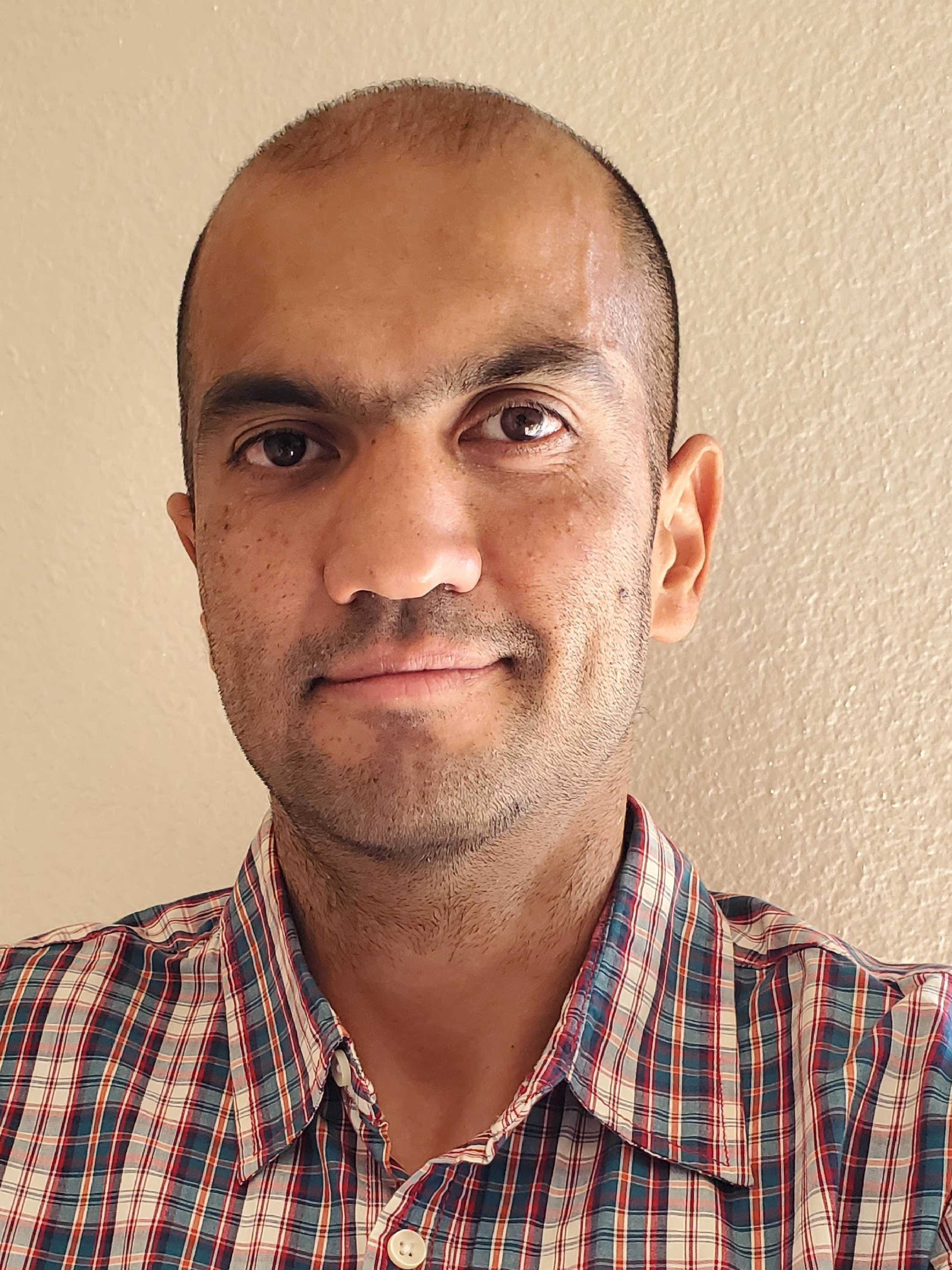}}]
  {Mohammad Khajenejad}
  is a postdoctoral scholar in the Mechanical and Aerospace
  Engineering Department at University of California, San Diego, CA,
  USA. He received his Ph.D. in Mechanical Engineering from Arizona
  State University, Tempe, AZ, USA, in 2021, where he won the ASU
  Dean's Dissertation Award for his Ph.D. dissertation. Mohammad
  received his M.S. and B.S. in Electrical Engineering from The
  University of Tehran, Iran. He is the author or co-author of diverse
  papers published in refereed conference proceedings and
  journals. His current research interests include set-theoretic
  control, resiliency and privacy of networked cyber-physical systems
  and robust game theory.
\end{IEEEbiography}
\vskip -2\baselineskip plus -1fil
\begin{IEEEbiography}
  [{\includegraphics[width=1in,height=1.3in,keepaspectratio]{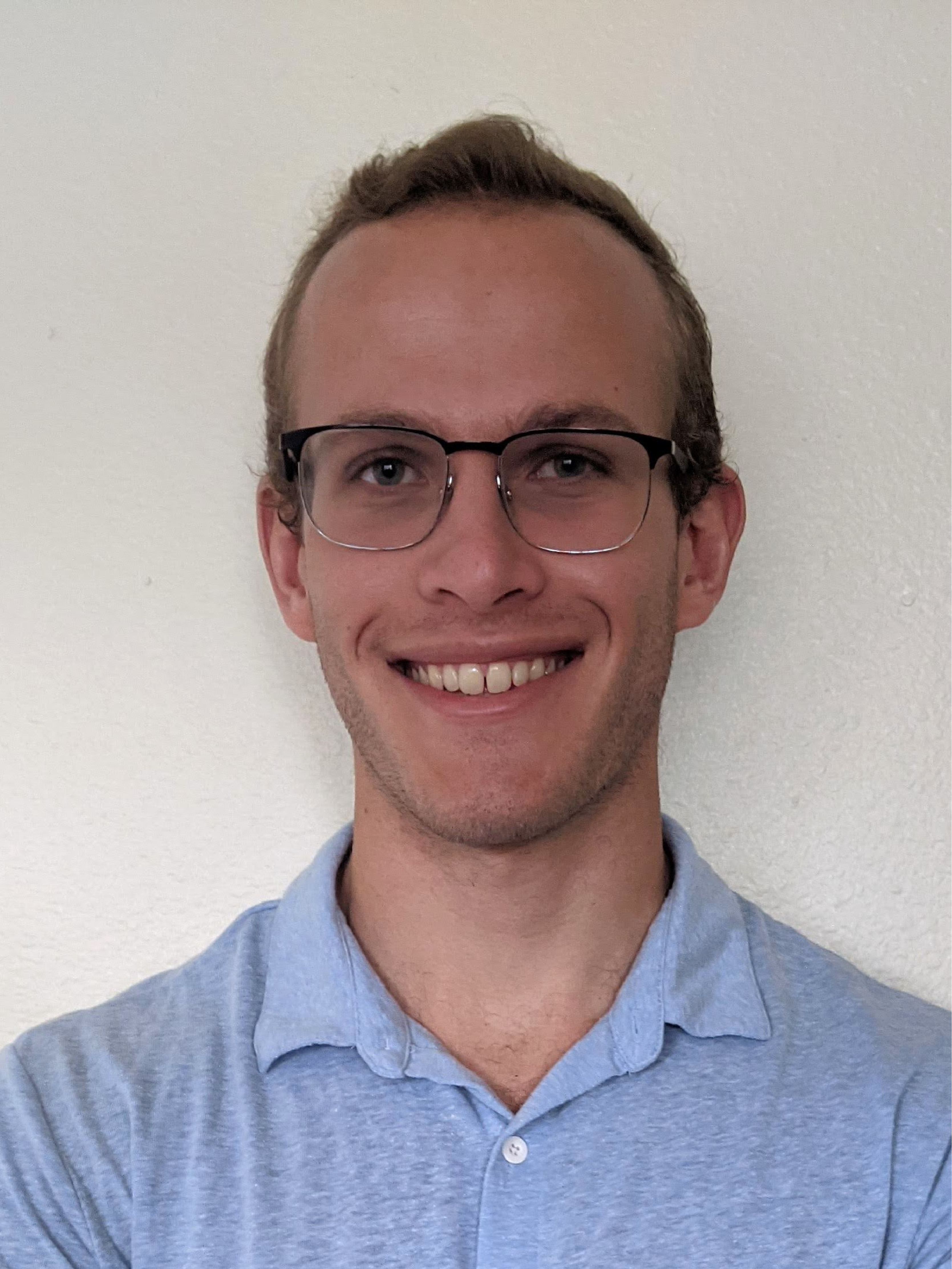}}]
  {Scott Brown}
  is a Ph.D. student in the Mechanical and Aerospace Engineering
  Department at University of California, San Diego, CA, USA, advised
  by Sonia Mart{\`\i}nez. He received his B.S. in Aerospace
  Engineering from the University of California, San Diego. His
  research interests include control and state estimation in networked
  systems, robust control using set-theoretic methods, nonlinear
  control, and optimization. He is a student member of IEEE and the
  IEEE Technical Committee on Hybrid Systems.
\end{IEEEbiography}
\vskip -2\baselineskip plus -1fil
\begin{IEEEbiography}
  [{\includegraphics[width=1in,height=1.3in,keepaspectratio]{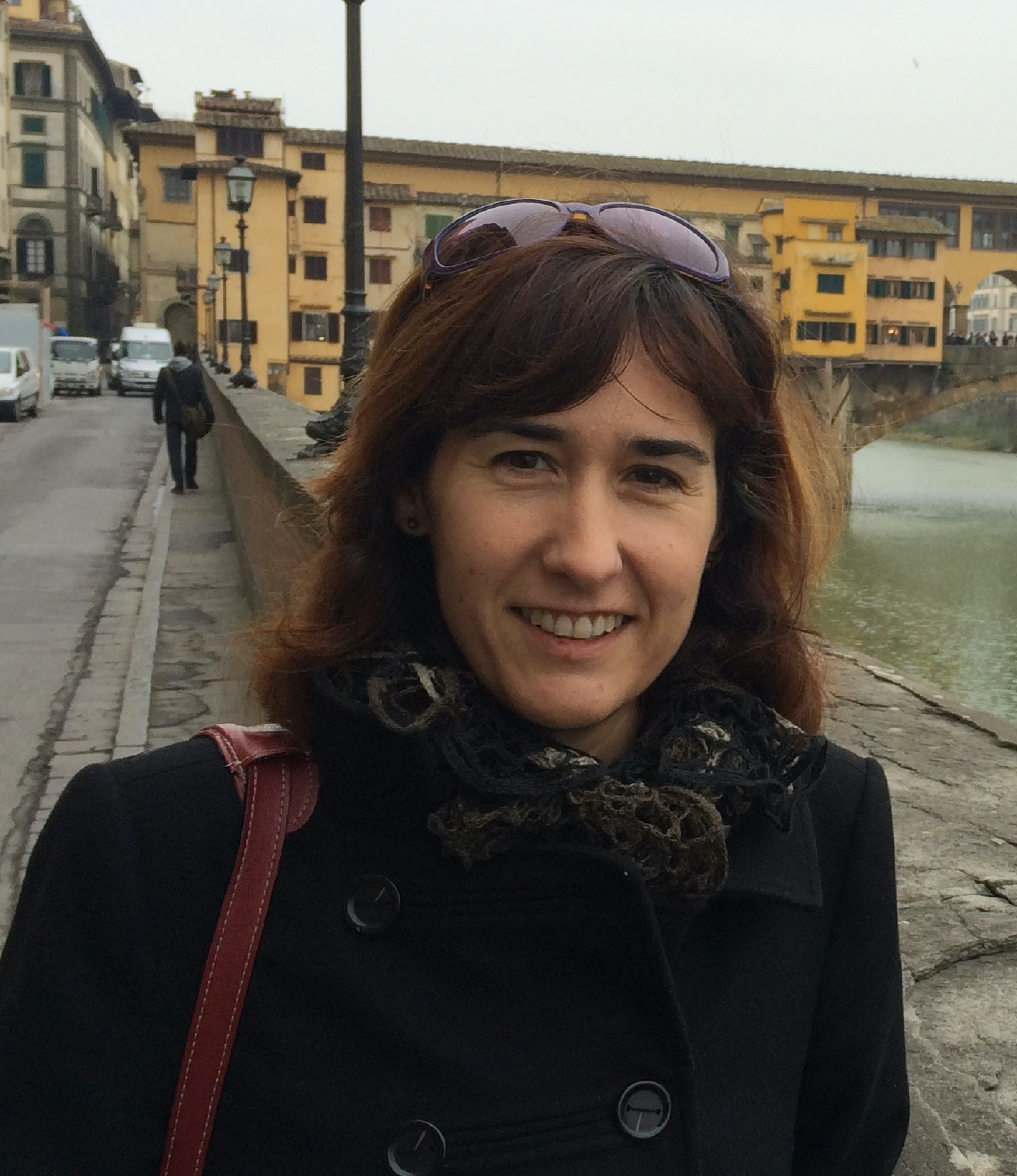}}]
  {Sonia Mart{\'\i}nez} (M'02-SM'07-F'18) is a Professor of Mechanical
and Aerospace Engineering at the University of California, San Diego,
CA, USA. She received her Ph.D. degree in Engineering Mathematics from
the Universidad Carlos III de Madrid, Spain, in May 2002. She was a
Visiting Assistant Professor of Applied Mathematics at the Technical
University of Catalonia, Spain (2002-2003), a Postdoctoral Fulbright
Fellow at the Coordinated Science Laboratory of the University of
Illinois, Urbana-Champaign (2003-2004) and the Center for Control,
Dynamical systems and Computation of the University of California,
Santa Barbara (2004-2005).  Her research interests include the control
of networked systems, multi-agent systems, nonlinear control theory,
and planning algorithms in robotics. She is a Fellow of IEEE. She is a
co-author (together with F. Bullo and J. Cort\'es) of ``Distributed
Control of Robotic Networks'' (Princeton University Press, 2009). She
is a co-author (together with M. Zhu) of ``Distributed
Optimization-based Control of Multi-agent Networks in Complex
Environments'' (Springer, 2015).  She is the Editor in Chief of the
recently launched \textit{CSS IEEE Open Journal of Control Systems.}
\end{IEEEbiography}}
\end{document}